	\documentclass[
		paper=a4,11pt,american,pagesize,%
		abstract=true,headinclude=true,headlines=0,footinclude=true,footlines=-5,twoside=semi,%
		DIV=12,%
	]{scrartcl}
	\def\docclass{koma}
	\def\version{arxiv}

	\def\draftmode{false} 

\usepackage[utf8]{inputenc}

\makeatletter
\usepackage{xifthen}

\newcommand\iflipics[2]{\ifthenelse{\equal{\docclass}{lipics}}{#1}{#2}}
\newcommand\ifkoma[2]{\ifthenelse{\equal{\docclass}{koma}}{#1}{#2}}
\newcommand\ifieee[2]{\ifthenelse{\equal{\docclass}{ieee}}{#1}{#2}}
\newcommand\ifsiam[2]{\ifthenelse{\equal{\docclass}{siam}}{#1}{#2}}
\newcommand\ifsiamsingle[2]{\ifthenelse{\equal{\docclass}{siam-single}}{#1}{#2}}
\newcommand\ifmysiam[2]{\ifthenelse{\equal{\docclass}{my-siam}}{#1}{#2}}
\newcommand\ifacm[2]{\ifthenelse{\equal{\docclass}{acm}}{#1}{#2}}
\newcommand\ifdcc[2]{\ifthenelse{\equal{\docclass}{dcc}}{#1}{#2}}
\newcommand\ifspringerjournal[2]{\ifthenelse{\equal{\docclass}{springer-journal}}{#1}{#2}}
\newcommand\iflncs[2]{\ifthenelse{\equal{\docclass}{lncs}}{#1}{#2}}
\ifthenelse{ \equal{\docclass}{lipics} \OR \equal{\docclass}{koma} \OR \equal{\docclass}{ieee} \OR \equal{\docclass}{siam} \OR \equal{\docclass}{my-siam} \OR \equal{\docclass}{acm} \OR \equal{\docclass}{dcc} \OR \equal{\docclass}{springer-journal} \OR \equal{\docclass}{lncs} }{
	\PackageInfo{paper}{Building paper with docclass = \docclass} 
}{
	\PackageWarning{paper}{docclass = "\docclass", but must be one of "lipics", "koma", "ieee", "siam", "siam-single", "my-siam", "acm", "dcc", "springer-journal", "lncs"}
}

\newcommand\ifmanuscript[2]{\ifthenelse{\equal{\version}{manuscript}}{#1}{#2}}
\newcommand\ifarxiv[2]{\ifthenelse{\equal{\version}{arxiv}}{#1}{#2}}
\newcommand\ifsubmission[2]{\ifthenelse{\equal{\version}{submission}}{#1}{#2}}
\newcommand\ifproceedings[2]{\ifthenelse{\equal{\version}{proceedings}}{#1}{#2}}
\ifthenelse{ 
	\equal{\version}{manuscript} 
	\OR \equal{\version}{arxiv} 
	\OR \equal{\version}{submission} 
	\OR \equal{\version}{proceedings} 
}{
	\PackageInfo{paper}{Building paper version = \version} 
}{
	\PackageWarning{paper}{version = "\version", but must be one of "manuscript", "arxiv", "submission", "proceedings"}
}

\newcommand\ifdraft[2]{\ifthenelse{\equal{\draftmode}{true}}{#1}{#2}}
\ifthenelse{ \equal{\draftmode}{true} \OR \equal{\draftmode}{false} }{
	\PackageInfo{paper}{Building paper with draftmode = \draftmode} 
}{
	\PackageWarning{paper}{draftmode = "\draftmode", but must be "true" or "false"}
}

\usepackage[T1]{fontenc}
\ifsiam{
	\usepackage{lmodern}
	\usepackage{slantsc}
}{}
\ifsiamsingle{
	\usepackage{lmodern}
	\usepackage{slantsc}
}{}
\ifmysiam{
	\usepackage{lmodern}
	\usepackage{slantsc}
}{}
\ifkoma{
	\usepackage{lmodern}
	\usepackage{slantsc}
}{}
\iflipics{
	\usepackage{lmodern}
	\usepackage{slantsc}
}{}
\ifdcc{
	\usepackage{lmodern}
	\usepackage{slantsc}
}{}
\ifspringerjournal{
	\usepackage{lmodern}
	\usepackage{slantsc}
	\usepackage{xcolor}
}{}
\iflncs{
	\usepackage{lmodern}
	\usepackage{slantsc}
	\usepackage{xcolor}
}{}

\usepackage{babel}
\input{ushyphex.tex} 

\usepackage{array,multicol,multirow}
\ifieee{
	\usepackage[cmex10]{amsmath,mathtools}
	\usepackage{amsfonts,amssymb}
}{}
\ifkoma{
	\usepackage{amsmath,amsfonts,amssymb,mathtools}
}{}
\iflipics{
	\usepackage{amsmath,amsfonts,amssymb,mathtools}
}{}
\ifsiam{
	\usepackage{amsmath,amsfonts,amssymb,mathtools}
}{}
\ifsiamsingle{
	\usepackage{amsmath,amsfonts,amssymb,mathtools}
}{}
\ifmysiam{
	\usepackage{amsmath,amsfonts,amssymb,mathtools}
}{}
\ifacm{
	\usepackage{mathtools}
}{}
\ifdcc{
	\usepackage{amsmath,amsfonts,amssymb,mathtools}
}{}
\ifspringerjournal{
	\usepackage{amsmath,amsfonts,amssymb,mathtools}
}{}
\iflncs{
	\usepackage{amsmath,amsfonts,amssymb,mathtools}
}{}

\usepackage{mleftright}\mleftright 
\usepackage{relsize,xspace,booktabs,adjustbox,needspace,pbox,relsize,makecell}
\ifieee{
		
		\usepackage{enumitem}
}{
	\ifspringerjournal{
		\usepackage{enumitem}
		\setlist{topsep=\medskipamount}
	}{
		\usepackage{enumitem}
	}
}
\usepackage{graphicx}

\ifacm{}{
	\usepackage{colonequals}
}

\usepackage{wref}

\ifsiam{
	\usepackage{ltexpprt}
}{}
\ifsiamsingle{
	\usepackage{ltexpprt}
}{}

\newdimen\makeboxdimen
\newcommand\makeboxlike[3][l]{%
\setbox0=\hbox{#2}%
\global\makeboxdimen=\wd0%
\setbox1=\hbox{\makebox[\makeboxdimen][#1]{%
\makebox[0pt][#1]{#3}%
}}%
\ht1=\ht0%
\dp1=\dp0%
\box1%
}

\ifthenelse{\equal{\docclass}{koma} \OR \equal{\docclass}{my-siam}}{
	\setlength\parindent{1.5em}
	\usepackage[headsepline]{scrlayer-scrpage}
	\pagestyle{scrheadings}
	\clearscrheadfoot
	\AtBeginDocument{%
		\automark[section]{}%
	}
	\ohead{\pagemark}
	\rehead{\mytitle}
	\lohead{\headmark}
	\addtokomafont{caption}{\sffamily\small}
	\addtokomafont{captionlabel}{\sffamily\textbf}
	\setcapmargin{2em}
}{}
\ifmysiam{
	\setcapmargin{1em}
	\setcapindent{0em}
}{}
\ifmysiam{
	\setlength\parskip{0pt}
	\RedeclareSectionCommand[
		beforeskip=-1.25\baselineskip,
		afterskip=0.75\baselineskip,
	]{section}
	\RedeclareSectionCommand[
		beforeskip=-1\baselineskip,
		afterskip=-1.5em,
	]{subsection}
	\RedeclareSectionCommand[
		beforeskip=-1\baselineskip,
		afterskip=-1.5em,
	]{subsubsection}
	\RedeclareSectionCommand[
		beforeskip=-.25\baselineskip,
		indent=1.5em,
		afterskip=-1em,
	]{paragraph}
}{}
\ifdcc{
	\ifproceedings{}{
		\pagestyle{plain}
		\setlength{\footskip}{5ex}
	}
}{}

\AtBeginDocument{%
	\let\mytitle\@title%
}

\ifsiam{
	\usepackage[bibtex-alpha]{url-doi-arxiv}
}{}
\ifsiamsingle{
	\usepackage[bibtex-alpha]{url-doi-arxiv}
}{}
\ifmysiam{
	\usepackage[bibtex-alpha]{url-doi-arxiv}
}{}
\ifkoma{
	\usepackage[bibtex-alpha]{url-doi-arxiv}
}{}
\iflipics{
	\usepackage[bibtex]{url-doi-arxiv}
}{}
\ifdcc{
	\usepackage[bibtex-alpha]{url-doi-arxiv}
}{}
\ifspringerjournal{
	\usepackage[bibtex-alpha]{url-doi-arxiv}
}{}
\iflncs{
	\usepackage[bibtex-alpha]{url-doi-arxiv}
}{}

\let\oldthebibliography\thebibliography
\renewcommand\thebibliography[1]{%
	\oldthebibliography{#1}%
	\pdfbookmark[1]{References}{}%
}

\usepackage{lscape} 

\ifkoma{
	\usepackage{float}
	\floatstyle{plain}
	\usepackage{newfloat}
	\DeclareFloatingEnvironment[%
			name=Algorithm,%
			placement=thb,%
		]{algorithm}
}{}
\iflipics{
	\usepackage{newfloat}
	\DeclareFloatingEnvironment[%
			name=Algorithm,%
			placement=thb,%
		]{algorithm}
}{}
\ifmysiam{
	\usepackage{newfloat}
	\DeclareFloatingEnvironment[%
			name=Algorithm,%
			placement=thb,%
		]{algorithm}
}

\ifdcc{
	\usepackage{float}
	\usepackage[font={small},labelfont=bf]{caption}
}{}
\ifmysiam{
	\setcounter{topnumber}{3}
	\setcounter{bottomnumber}{3}
	\setcounter{totalnumber}{3}     
	\setcounter{dbltopnumber}{3}    

}{}
\ifsiam{

	\setcounter{topnumber}{3}
	\setcounter{bottomnumber}{3}
	\setcounter{totalnumber}{3}     
	\setcounter{dbltopnumber}{3}    

}{}
\ifsiamsingle{

	\setcounter{topnumber}{3}
	\setcounter{bottomnumber}{3}
	\setcounter{totalnumber}{3}     

}{}

\usepackage{dcolumn}

\usepackage{textcomp} 
\usepackage{listings}

\usepackage{tikz}

\usetikzlibrary{positioning,arrows.meta,fit}
\usetikzlibrary{backgrounds,calc,trees,graphs}
\usetikzlibrary{shapes.geometric,shapes.misc}
\usetikzlibrary{decorations,decorations.pathreplacing}
\usetikzlibrary{scopes}

\pgfdeclarelayer{background}
\pgfsetlayers{background,main}

\usetikzlibrary{external}
\tikzsetexternalprefix{pics/externalized/}
\tikzset{
	external/system call={%
		lualatex \tikzexternalcheckshellescape -halt-on-error %
			-interaction=batchmode -jobname "\image" "\texsource"%
	},
}
\tikzset{external/export=false} 

\iflipics{
	\newtheorem{fact}[theorem]{Fact}
	\newcommand\thmemph[1]{\emph{#1}}
	\newenvironment{proofof}[1]{%
		\begin{proof}[{{Proof of #1{}}}]%
	}{%
		\end{proof}%
	}
}{}
\ifacm{
	\AtEndPreamble{
		\theoremstyle{acmdefinition}
		\newtheorem{remark}[theorem]{Remark}
		\newtheorem{fact}[theorem]{Fact}
	}
	\newcommand\thmemph[1]{\emph{#1}}
	\newenvironment{proofof}[1]{%
		\begin{proof}[{{Proof of #1{}}}]%
	}{%
		\end{proof}%
	}
}{}
\ifsiam{
	\newtheorem{remark}{Remark}
	\newenvironment{proofof}[1]{%
		\begin{proof}[{{#1{}}}]%
	}{%
		\end{proof}%
	}
}{}
\ifsiamsingle{
	\newtheorem{remark}{Remark}
	\newenvironment{proofof}[1]{%
		\begin{proof}[{{#1{}}}]%
	}{%
		\end{proof}%
	}
}{}
\iflncs{
	\spnewtheorem{fact}[theorem]{Fact}{\itshape}{}
	\newcommand\thmemph[1]{\emph{#1}}
	\let\orig@endproof\endproof
	\def\endproof{\qed\orig@endproof}
	\newenvironment{proofof}[1]{%
		\begin{proof}[{{#1{}}}]%
	}{%
		\end{proof}%
	}
}{}
\ifthenelse{%
		\equal{\docclass}{lipics} \OR \equal{\docclass}{siam} \OR 
		\equal{\docclass}{siam-single} \OR \equal{\docclass}{acm} \OR
		\equal{\docclass}{lncs}%
}{}{
	\usepackage[amsmath,hyperref,thmmarks]{ntheorem}
	
	\theoremheaderfont{\sffamily\upshape\bfseries}
	\theorembodyfont{\slshape}
	\theoremseparator{:}
	\newtheoremstyle{proofstyle}%
	  {\item[\theorem@headerfont\hskip\labelsep ##1\theorem@separator]}%
	  {\item[\theorem@headerfont\hskip\labelsep ##3\theorem@separator]}
	
	\theorempreskip{\topsep} 

	\theoremsymbol{\adjustbox{scale=.8}{$\triangleleft\mkern-1mu$}}
	
	\newtheorem{theorem}{Theorem}[section]
	
	\theoremstyle{plain}
	\theorempreskip{\topsep}

	\newtheorem{lemma}[theorem]{Lemma}
	
	\newtheorem{corollary}[theorem]{Corollary}
	\newtheorem{definition}[theorem]{Definition}
	
	\theoremstyle{plain}
	\theorembodyfont{\upshape}

	\theoremsymbol{\raisebox{-.25ex}{$\Box$}}
	\qedsymbol{\raisebox{-.25ex}{$\Box$}}
	
	\theoremstyle{proofstyle}
	\theoremheaderfont{\sffamily\slshape}
	\newtheorem{proof}{Proof}

	\newcommand\thmemph[1]{\textit{#1}}
}

\iflipics{
		\newenvironment{thmenumerate}[2][]{%
			\begin{enumerate}[
				label={\textsf{\textbf{\color{darkgray}{\makebox[\widthof{(a)}][c]{\textup{(\alph*)}}}}}},
				ref={\ref{#2}\kern.1em--\kern.1em(\alph*)},
				itemsep=0pt,
				topsep=.5ex,
				leftmargin=1.75em,
				#1
			]%
		}{%
			\end{enumerate}%
		}
}{
	\ifspringerjournal{
		\newenvironment{thmenumerate}[2][]{%
			\begin{enumerate}[
				label={\makebox[\widthof{(a)}][c]{\textup{(\alph*)}}},
				ref={\ref{#2}\kern.1em--\kern.1em(\alph*)},
				itemsep=0pt,
				topsep=\smallskipamount,
				leftmargin=1.75em,
				#1
			]%
		}{%
			\end{enumerate}%
		}
	}{
		
	}
}

\newcommand*\ie{\mbox{i.\hspace{.2ex}e.}}
\newcommand*\eg{\mbox{e.\hspace{.2ex}g.}}

\newcommand*\wrt{\mbox{w.\hspace{.2ex}r.\hspace{.2ex}t.}\xspace}
\newcommand*\aka{\mbox{a.\hspace{.2ex}k.\hspace{.2ex}a.}\xspace}

\newcommand\R{\mathbb R}

\newcommand\Oh{O}

\usepackage{fixmath}

\newcommand{\ESymbol}{\mathbb{E}}

\newcommand{\ProbSymbol}{\ensuremath{\mathbb{P}}}
\providecommand{\given}{}
\DeclarePairedDelimiterXPP\Prob[1]{\ProbSymbol}[]{}{%
	\renewcommand\given{\nonscript\:\delimsize\vert\nonscript\:\mathopen{}}%
	#1%
}
\DeclarePairedDelimiterXPP\E[1]{\ESymbol}[]{}{%
	\renewcommand\given{\nonscript\:\delimsize\vert\nonscript\:\mathopen{}}%
	#1%
}
\DeclarePairedDelimiterXPP\Eover[2]{\ESymbol_{#1}}[]{}{%
	\renewcommand\given{\nonscript\:\delimsize\vert\nonscript\:\mathopen{}}%
	#2%
}
\DeclarePairedDelimiterXPP\ProbIn[2]{\ProbSymbol_{#1}}[]{}{%
	\renewcommand\given{\nonscript\:\delimsize\vert\nonscript\:\mathopen{}}%
	#2%
}
\providecommand{\Prob}{} 
\providecommand{\ProbIn}{} 
\providecommand{\E}{} 
\providecommand{\Eover}{} 

\newcommand{\surroundedmath}[3]{
	\mathchoice{
		#1{#2{#3}#2}%
	}{
		#1{#3}%
	}{
		#1{#3}%
	}{
		#1{#3}%
	}%
}

\newcommand\wrel[1]{\surroundedmath{\mathrel}{\;}{#1}}
\newcommand\wwrel[1]{\surroundedmath{\mathrel}{\;\;}{#1}}
\newcommand\bin[1]{\surroundedmath{\mathbin}{\:}{#1}}
\newcommand\wbin[1]{\surroundedmath{\mathbin}{\;}{#1}}

\iflipics{}{
	\makeatletter
	\let\oldalign\align
	\let\endoldalign\endalign
	\renewenvironment{align}{%
		\begingroup%
		\let\oldhalign\halign
		\def\halign{%
			\let\oldbreak\\%
			\def\nonnumberbreak{\nonumber\oldbreak*}%
			\def\\{%
				\@ifstar{\nonnumberbreak}{\oldbreak}%
			}%
			\oldhalign%
		}
		\oldalign%
	}{%
		\endoldalign%
		\endgroup%
	}
}
\newcommand*\numberthis[1][]{\stepcounter{equation}\tag{\theequation}}

\allowdisplaybreaks[3]

\newcommand\splitaftercomma[1]{%
  \begingroup
  \begingroup\lccode`~=`, \lowercase{\endgroup
    \edef~{\mathchar\the\mathcode`, \penalty0 \noexpand\hspace{0pt plus .25em}}%
  }\mathcode`,="8000 #1%
  \endgroup
}

\def\mydots{\xleaders\hbox to.5em{\hfill.\hfill}\hfill}
\newlength\tmpLenNotations

\ifdraft{%
	\iflipics{
		\usepackage{lineno} 
	}{
		\usepackage[switch]{lineno} 
	}
	\linenumbers
	\overfullrule=6mm
	
	\usepackage[color,notref,notcite]{showkeys}
	\definecolor{refkey}{gray}{.99}
	\colorlet{labelkey}{green!60!black!60}
	
	\usepackage[inline,nolabel]{showlabels}

	\showlabels{cite}
	\showlabels{citealt}
	\showlabels{citealp}
	\showlabels{citet}
	\showlabels{Citet}
	\showlabels{citep}
	\showlabels{citeauthor}
	\showlabels{Citeauthor}
	\showlabels{citefullauthor}
	\showlabels{citeyear}
	\showlabels{citeyearpar}
	\showlabels{wref}
	\showlabels{wpref}
	\showlabels{wtpref}
	\showlabels{wildref}
	\showlabels{wildpageref}
	\showlabels{wildtpageref}
}{}

\iflipics{
	\ifmanuscript{\hideLIPIcs}{}
	\ifarxiv{\hideLIPIcs}{}
	\ifsubmission{}{\nolinenumbers}
}{}

\iflncs{
    \ifsubmission{
        \usepackage[switch]{lineno}
        \linenumbers
    }{}
}{}

\ifdraft{}{%
	\usepackage{microtype}
}

\hypersetup{
	final,
	unicode=true, 
	bookmarksnumbered=true,
	bookmarksdepth=2,
	bookmarksopen=true,
	breaklinks=true,
	hidelinks,
}

\newsavebox\tmpbox

\ifdcc{
	\renewcommand\paragraph{\@startsection{paragraph}{4}{\parindent}
	                                      {\smallskipamount}
	                                      {-1em}%
	                                      {\normalfont\normalsize\bfseries}}
}{}
\iflipics{
	\let\oldparagraph\paragraph
	\renewcommand\paragraph[1]{%
		\oldparagraph*{\textcolor{lipicsGray}{#1.}}
	}
}{
	\let\oldparagraph\paragraph
	\renewcommand\paragraph[1]{%
		\oldparagraph{#1.}
	}
}

\ifmysiam{
	\let\oldsubsection\subsection
	\renewcommand\subsection[1]{%
		\oldsubsection{#1.}%
	}
	\let\oldsubsubsection\subsubsection
	\renewcommand\subsubsection[1]{%
		\oldsubsubsection{#1.}%
	}
}{}
\ifsiam{
	\let\oldsubsection\subsection
	\renewcommand\subsection[1]{%
		\oldsubsection{#1.}%
	}
	\let\oldsubsubsection\subsubsection
	\renewcommand\subsubsection[1]{%
		\oldsubsubsection{#1.}%
	}
}{}
\ifsiamsingle{
	\let\oldsubsection\subsection
	\renewcommand\subsection[1]{%
		\oldsubsection{#1.}%
	}
	\let\oldsubsubsection\subsubsection
	\renewcommand\subsubsection[1]{%
		\oldsubsubsection{#1.}%
	}
}{}

\let\epsilon\varepsilon

\def\myacknowledgements{}
\ifkoma{
	
}{}
\ifieee{
	
}{}
\ifsiam{
	
}{}
\ifsiamsingle{
	
}{}
\ifmysiam{
	
}{}
\ifdcc{
	
}{}
\ifacm{
	
}{}
\ifspringerjournal{
	
}{}
\iflncs{
	
}{}

\setlist[description]{font=\boldmath}

\usepackage{optidef}
\usepackage[most]{tcolorbox}
\usepackage{thm-restate}
\newcommand*\NP{\ensuremath{\textsf{NP}}\xspace}
\renewcommand*\P{\ensuremath{\textsf{P}}\xspace}

\newcommand{\problemtitle}[1]{\gdef\@problemtitle{#1}}
\newcommand{\problemvalue}[1]{\gdef\@problemvalue{#1}}
\newcommand{\probleminput}[1]{\gdef\@probleminput{#1}}
\newcommand{\problemquestion}[1]{\gdef\@problemquestion{#1}}

\NewEnviron{problem}{
  \problemtitle{}\probleminput{}\problemquestion{}\problemvalue{}
  \BODY
  \par\addvspace{.5\baselineskip}
  \noindent
  \begin{tabularx}{\textwidth}{@{\hspace{\parindent}} l X c}
    \multicolumn{2}{@{\hspace{\parindent}}l}{\@problemtitle} \\
    \textbf{Instance:} & \@probleminput \\
    \textbf{Definitions:} & \@problemvalue
    \\
    \textbf{Question:} & \@problemquestion
  \end{tabularx}
  \par\addvspace{.5\baselineskip}
}

\newcommand{\probtitle}[1]{
\multicolumn{2}{@{\hspace{\parindent}}l}{#1} \\[5pt]
}
\newcommand{\probrow}[2]{
\textbf{#1:} & {#2} \\[5pt]
}

\NewEnviron{prob}{
  \par\addvspace{.5\baselineskip}
  \noindent
  \begin{tabularx}{\textwidth}{@{\hspace{\parindent}} l X c}
  \BODY
  \end{tabularx}
  \par\addvspace{.5\baselineskip}
}

\tcolorboxenvironment{prob}{
  colback=white,
  colframe=black,
  boxrule=0.8pt,
  sharp corners,
  before skip=10pt,
  after skip=10pt
}

\tcolorboxenvironment{problem}{
  colback=white,
  colframe=black,
  boxrule=0.8pt,
  sharp corners,
  before skip=10pt,
  after skip=10pt
}

\DeclareMathOperator{\optlin}{\text{\textup{\textsc{Opt-Lin}}}}
\DeclareMathOperator{\optent}{\text{\textup{\textsc{Opt-Ent}}}}

\newcommand*\DegreeEntropy{\ensuremath{\mathcal H^{\text{in}}_{\deg}}\xspace}

\newcommand*\TheOptProblem{\textsc{MINETREX}\xspace} 
\newcommand*\TheOptProblemLongForm{\textsc{Minimum-Entropy Tree-Extraction}\xspace}

\newcommand*\rankop{\mathsf{rank}}
\newcommand*\selop{\mathsf{select}}
\newcommand*\accessop{\mathsf{access}}

\makeatother

\tikzstyle{graph}=[every node/.style={draw,circle,minimum size=5pt,inner sep=0pt,fill=black}]
\tikzstyle{digraph}=[graph, every path/.style={-stealth}]

\tikzstyle{chosen edge}=[every path/.append style={thick, densely dotted, red}]

\tikzstyle{macro edge}=[color=gray, double distance=2pt, arrows={-Latex[width'=0pt 0.7, length=9pt]}]

\iflipics{
}{
	\colorlet{lipicsGray}{gray}
}
\usepackage{tabularx}

\usepackage{hyperref}

\ifacm{
		\title{Rooting Out Entropy: Optimal Tree Extraction for Ultra-Succinct Graphs}
}{}
\ifspringerjournal{
	\title{Rooting Out Entropy: Optimal Tree Extraction for Ultra-Succinct Graphs}
}{
	\title{Rooting Out Entropy: Optimal Tree Extraction for Ultra-Succinct Graphs}
}

\ifthenelse{\NOT \equal{\docclass}{lipics} \AND \NOT \equal{\docclass}{acm} \AND \NOT \equal{\docclass}{springer-journal} \AND \NOT \equal{\docclass}{lncs} }{
	\newcommand\email[1]{\texttt{#1}}
	\author{%
		Ziad Ismaili Alaoui%
        \footnote{University of Liverpool, United Kingdom, 
		\email{ziad.ismaili-alaoui\,@\,liverpool.ac.uk}}
	\and
		Tamio-Vesa Nakajima%
			\footnote{University of Marburg, Germany, 
			\email{nakajima\,@\,informatik.uni-marburg.de}}
	\and
		Namrata%
            \footnote{University of Birmingham, United Kingdom, 
			\email{n.namrata\,@\,bham.ac.uk}}
	\and
		Sebastian Wild%
			\footnote{University of Marburg, Germany, 
			\email{wild\,@\,informatik.uni-marburg.de}}
	}
	
	\date{\small\today}
}{}

\begin{document}

\ifacm{}{\maketitle} 

\begin{abstract}
We combine two methods for the lossless compression of unlabeled graphs~-- entropy compressing adjacency lists and computing canonical names for vertices~-- and solve an ensuing novel optimisation problem:
\TheOptProblemLongForm (\TheOptProblem).
\TheOptProblem asks to determine a spanning forest $F$ to remove from a graph $G$ so that the remaining graph $G-F$ has minimal indegree entropy $H(d_1,\ldots,d_n) = \sum_{v\in V} d_v \log_2(m/d_v)$ among all choices for $F$.
(Here $d_v$ is the indegree of vertex $v$ in $G-F$; $m$ is the number of edges.)

We show that \TheOptProblem is \NP-hard to approximate with additive error better than $\delta n$ (for some constant $\delta>0$),
and provide a simple greedy algorithm that achieves additive error at most $n / \ln 2$.
By storing the extracted spanning forest and the remaining edges separately, 
we obtain a degree-entropy compressed (``ultrasuccinct'') data structure for representing an arbitrary (static) unlabeled graph
that supports navigational graph queries in logarithmic time.
It serves as a drop-in replacement for adjacency-list representations 
using substantially less space for most graphs; 
we precisely quantify these savings in terms of the maximal subgraph density.

Our inapproximability result uses an approximate variant of the hitting set problem on \emph{biregular} instances whose hardness proof is contained implicitly in a reduction by Guruswami and Trevisan (APPROX/RANDOM 2005);
we consider the unearthing of this reduction partner of independent interest with further likely uses in hardness of approximation.
\end{abstract}

\ifacm{%
	\maketitle%
}{}

\section{Introduction}
\label{sec:introduction}

In many applications of network analysis, only the \emph{structure} of the underlying graph is of importance~-- the names or labels of most vertices are often irrelevant.
With growing data sizes, it is clearly desirable not to waste space on storing such irrelevant names;
on ``edge devices'' it may even become imperative to work on compressed representations of data.
The thriving area of space-efficient data structures~--
and specifically succinct graph data structures~-- is testimony to this desire.

However, when ``writing down'' a general graph (not known to be from a restricted class of graphs), it seems inevitable that we use \emph{some} vertex identifiers by which we can refer to a vertex when describing the edges of the graph.
And even if we obtain these vertex ids as some canonical names derived solely from the graph's topology,
classical graph representations such as adjacency lists \emph{already} encode vertex ids implicitly (the \emph{order} of vertices in the sequential representation).
So is it even thinkable to save the space for useless labels \emph{without further assumptions on the stored graphs?}

We give a resounding \emph{yes} to this question: 
Our \emph{``tree-extraction''} data structure
is (1) simple enough to have practical appeal, (2) often (under conditions we analyze below) achieves an asymptotically optimal space reduction at (3) the expense of a very modest slowdown for operations and construction time.
\emph{Tree Extraction (TREX)} (\wref{fig:goodGraph}) combines two methods for the lossless compression of graphs~-- entropy compressing adjacency lists and computing canonical names for vertices.
This combination gives rise to a novel optimisation problem, whose study and exploitation is the main technical contribution of this paper:
\TheOptProblemLongForm (\TheOptProblem).

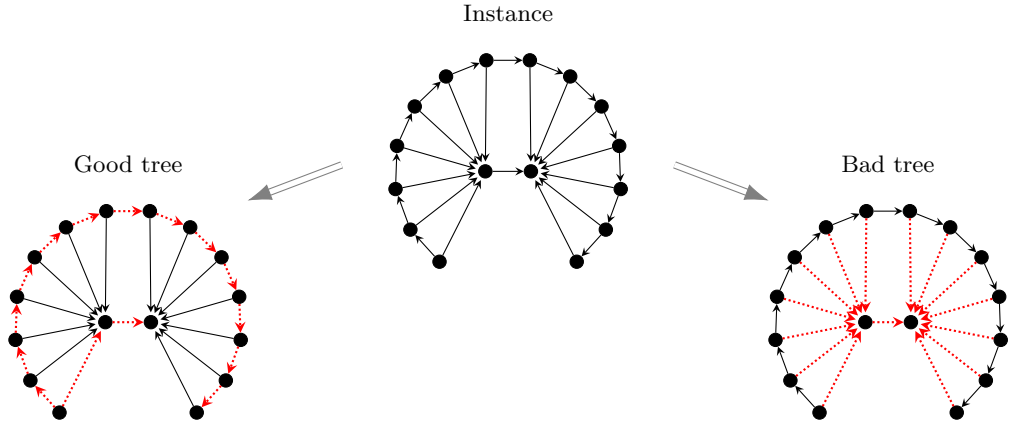
\begin{figure}[tb]
    \centering
    \begin{tikzpicture}
        \newcommand{\DrawPoints}{
            \node (A) at (-.5, 0){};
            \node (B) at (.5,0){};
            \node (X1) at (233:2.5){};
            \node (X2) at (211:2.5){};
            \node (X3) at (189:2.5 ){};
            \node (X4) at (167:2.5){};
            \node (X5) at (145:2.5){};
            \node (X6) at (123:2.5){};
            \node (X7) at (101:2.5){};
            \node (Y7) at (79:2.5){};
            \node (Y6) at (57:2.5){};
            \node (Y5) at (35:2.5){};
            \node (Y4) at (13:2.5){};
            \node (Y3) at (-9:2.5){};
            \node (Y2) at (-31:2.5){};
            \node (Y1) at (-53:2.5){};
        }
        \newcommand{\DrawOuterShell}{
            \draw (X1) -> (X2);
            \draw (X2) -> (X3);
            \draw (X3) -> (X4);
            \draw (X4) -> (X5);
            \draw (X5) -> (X6);
            \draw (X6) -> (X7);
            \draw (X7) -> (Y7);
            \draw (Y2) -> (Y1);
            \draw (Y3) -> (Y2);
            \draw (Y4) -> (Y3);
            \draw (Y5) -> (Y4);
            \draw (Y6) -> (Y5);
            \draw (Y7) -> (Y6);
        }
    
        \newcommand\DrawRibs{
        \begin{scope}[shorten >=1pt]
            \draw (X2) -> (A);
            \draw (X3) -> (A);
            \draw (X4) -> (A);
            \draw (X5) -> (A);
            \draw (X6) -> (A);
            \draw (X7) -> (A);
            \draw (Y1) -> (B);
            \draw (Y2) -> (B);
            \draw (Y3) -> (B);
            \draw (Y4) -> (B);
            \draw (Y5) -> (B);
            \draw (Y6) -> (B);
            \draw (Y7) -> (B);
        \end{scope}
        }
    
        \scoped[scale=.6, digraph, local bounding box=inst]{
            \DrawPoints{}
            \DrawOuterShell{}
            \DrawRibs{}
            \draw[shorten >=1pt] (X1) -> (A);
            \draw (A) -> (B);
        }
        
        \scoped[xshift=-5cm, yshift=-2cm, scale=.6, digraph, local bounding box=goodSol]{
            \DrawPoints{}
            \DrawRibs{}
            \scoped[chosen edge]{
                \DrawOuterShell{}
                \draw[shorten >=1pt] (X1) -> (A);
                \draw (A) -> (B);
            }
        }
            
        \scoped[xshift=5cm, yshift=-2cm, scale=.6, digraph, local bounding box=badSol]{
            \DrawPoints{}
            \DrawOuterShell{}
            \scoped[chosen edge]{
                \DrawRibs{}
                \draw[shorten >=1pt] (X1) -> (A);
                \draw (A) -> (B);
            }
        }
        
    \node[above=0.3cm of inst.north] {\footnotesize Instance};
    \node[above=0.3cm of goodSol.north] {\footnotesize Good tree};
    \node[above=0.3cm of badSol.north] {\footnotesize Bad tree};
    
    \node[left=.35cm of inst.west] (Q){};
    \node[right=0cm of badSol.north west] (W){};
    \node[right=.35cm of inst.east] (E){};
    \node[left=0cm of goodSol.north east] (R){};

    \draw[macro edge] (E) -> (W);
    \draw[macro edge] (Q) -> (R);
        
    \end{tikzpicture}
    \medskip
	\caption{%
		Example showing the significance of the tree in \TheOptProblemLongForm.
		In the example graph (\textbf{middle}), tree extraction can, depending on the chosen tree (red dotted), either (\textbf{left}) yield the optimal space saving of $\sim n \lg n$ bits for the entropy-compressed representation of the remaining edges (black), or (\textbf{right}) negligible saving of $O(n)$ bits, which is comparable to its overhead to store the extracted tree.
		We point out that the directions of edges in the extracted tree can be arbitrary; tree extraction is not limited to deleting an arborescence.
	}
    \label{fig:goodGraph}
\end{figure}

\paragraph{Labelled Graph Representations}

For ease of presentation, let $G=(V,E)$ be a \emph{directed} and connected%
\footnote{
	Connectivity simplifies description and analysis of our data structure; our ideas can easily be adapted for disconnected graphs, too.
} 
graph; we generalise our results to undirected graphs later.
We assume $V=[n] = \{1,\ldots,n\}$.
For $v\in V$, denote by $d_v = d^{\text{in}}_v$ the \emph{indegree} of $v$, \ie, the number of incoming edges $(w,v)\in E$.
A textbook adjacency-list representation of $G$ using linked lists uses $\Theta(m \log n)$ bits of space when $G$ has $n=|V|$ vertices and $m =|E| \ge n-1$ edges.
A more careful array-based representation using $m \lceil\lg n\rceil + n \lceil\lg m\rceil$ bits of space%
\footnote{%
	Here and throughout, we write $\lg$ for $\log_2$ and assume the word-RAM model with word size $\Theta(\log n)$.
} 
is easily possible: store all adjacency lists concatenated in one long array $A[1..m]$, plus an array $S[1..n]$ for the starting indices so that $N^+(v)$, the outneighbourhood of $v$ is found in $A[S[v]..S[v]+d^{\text{out}}_v)$.
By replacing the array $A$ of integers in $[1..n] = V$ with a 
\emph{wavelet tree}~\cite{GrossiGuptaVitter2003} (see also \wref{lem:wavelet-tree-simple}),
we obtain an entropy-compressed representation of $G$~\cite{Navarro2014}:
The total space becomes $m \mathcal \DegreeEntropy(G) + o(m) + n \lceil\lg m\rceil$ bits,
where 
\[
		\DegreeEntropy(G) 
	\wwrel= 
		\sum_{v=1}^n   \frac{d^{\text{in}}_v}{m} \lg \left( \frac{m}{d^{\text{in}}_v} \right)
\]
denotes the (zero-th order) empirical \emph{indegree entropy} of $G$.%
\footnote{%
	Here and throughout, we take the standard convention that $x \lg x = x \lg (1 / x) = 0$ for $x = 0$.%
}
As a bonus, with the wavelet tree's rank/select operations, we can navigating over \emph{incoming} edges~\cite{Navarro2014}; all operations take $O(\lg n)$ time.

Since \DegreeEntropy coincides with the empirical entropy of the \emph{adjacency string} $A[1]\ldots A[m]$ of edge targets,
it is a natural instance-specific measure of compressibility of $G$.
It is also the information-theoretic instance-specific lower bound in natural models of random graphs:
$m\DegreeEntropy(G) \sim \lg(1/\Prob{G})$ for $\Prob{G}$ the probability of $G$ to arise according to the \emph{directed configuration model} (fixed degree sequence model) and the (labelled) Barabási-Albert model of preferential attachment graphs.%
\footnote{%
	Here and throughout, $f(n)\sim g(n)$ iff $\lim f(n) / g(n) = 1$.
}
In both cases, \emph{for the labelled graph with given vertex ids}, a simple wavelet-tree based representation uses \emph{instance-optimal} space.

\paragraph{Tree Extraction}

We explore how to extend the above results to \emph{unlabelled} graphs.
The key idea of \emph{Tree Extraction (TREX)} is to store some of the graph edges as part of a (rooted, ordered) \emph{tree}~$T$, whose topology can be stored with 3 bits per edge (2 for the tree shape, 1 for the edge orientation)~-- much fewer than, say, the $\lg n$ bits in the adjacency string $A$.
We emphasise that we select an undirected spanning tree, not an arborescence; the edges need not be oriented towards the root. In other words, we select a subset of edges which, when ignoring edge orientations, form a spanning tree. 
Storing the shape of $T$ using $\sim2n$ bits of space can be achieved using any of the many succinct tree data structures, \eg, \cite{Jacobson1989,MunroRaman2001,GearyRamanRaman2006,NavarroSadakane2014,BenoitDemaineMunroRamanRamanRao2005,JanssonSadakaneSung2012,FarzanMunro2014,HeMunroNekrichWildWu2020}.

Normally, such savings are entirely moot, since we have to also store which tree node corresponds to which graph vertex~-- adding back $\lg(n!)\sim n \lg n$ bits of space.
But when we are \emph{not interested} in the original vertex ids, we might as well let $T$ choose ids of its personal convenience~-- \emph{canonical names} induced by the tree data structure~-- and use these names for storing the $m-n+1$ remaining edges in $E(G)\setminus E(T)$.
For the simple array-based adjacency lists with $\lceil\lg n\rceil$ bits per edge, tree extraction \emph{always} yields space savings of $\sim n \lg n$ bits irrespective of how we choose $T$,%
\footnote{%
	This explains why the (to our knowledge) first (implicit) use of tree extraction~-- in the GLOUDS data structure~-- simply hard-coded the use of a BFS tree~\cite{FischerPeters2016}.
}
but for the entropy-compressed wavelet-tree representation, the tree can make a huge difference (\wref{fig:goodGraph}). \emph{How should we choose the tree to extract? Can we obtain an instance-optimal unlabelled graph representation?} 

\paragraph{\TheOptProblemLongForm}

This first question is precisely the \TheOptProblemLongForm (\TheOptProblem) problem: given a directed, connected graph $G$, find a spanning tree $T$ that minimises $\DegreeEntropy(G-T)$.
(In principle, we could use any spanning \emph{forest}, but there always exists a spanning tree that is an optimal solution, cf.~\wref{app:tree-not-forest}.)

To our knowledge, this combinatorial optimisation problem has never been studied.
We show that \TheOptProblem is \NP-hard, and indeed, that there exists a constant $\delta>0$ such that it is \NP-hard to even find a solution of cost $\le \mathit{OPT}+\delta n$.
The former is by a simple reduction from \textsc{Exact Cover by 3-Regular 3-Sets (X3C)}; the latter extends this reduction to work with a biregular approximate variant of \textsc{Exact Hitting Set} (EHS) ($(\beta, r, k)$-\textsc{Almost}-EHS). 
Even though the hardness of this problem is implicitly contained in a known result~\cite{GuruswamiTrevisan2005}, 
to our knowledge, the explicit formulation of this latter gap problem is novel, 
which we consider of potential interest for other inapproximability reductions,
\eg{} turning X3C reductions into hardness-of-approximation reductions.

On the positive side, we show that a natural greedy algorithm works exceedingly well.
Any edge included in the tree does not have to be stored in the wavelet tree.
Ignoring the effect that its very removal has on the resulting empirical entropy, extracting an edge $(u,v)$
saves us $\lg(m/d_v) = \lg m - \lg d_v$ bits.
A greedy approach would thus sort the edges by decreasing savings $\lg(m/d_v)$~-- or equivalently, by increasing target degree $d_v$~-- and select an optimal spanning tree using these edge weights.
Despite being both myopic and ignorant to its own impact, 
we prove that this greedy MST approach
finds a spanning tree $T$ of \TheOptProblem cost $\le \mathit{OPT} + n / \ln 2$, and is thus, \wrt its approximation guarantee, optimal up to constants (among all polytime algorithms, unless $\P=\NP$).

\paragraph{TREX and Twin Removal Data Structure}

\begin{figure}[tb]
    \centering
    \usetikzlibrary{graphs.standard}
\usetikzlibrary{decorations.pathreplacing}

\colorlet{oldname}{lipicsGray}
\begin{tikzpicture}[
    	node distance = 0pt,
    	tree node/.style={circle,draw,inner sep=0pt,minimum size=14pt},
        graph edge/.style={very thick,->,>=Stealth},
        tree up edge/.style={graph edge,red,dashed},
        tree down edge/.style={graph edge,red,dashed},
    	scale=1.2,
    ]
    \node[shape=rectangle, rounded corners=2mm,draw=black] (1) at (-1,0) {2\textcolor{oldname}{:1}};
    \node[shape=rectangle, rounded corners=2mm,draw=black] (2) at (1,-0.3) {5\textcolor{oldname}{:2}};
    \node[shape=rectangle, rounded corners=2mm,draw=black] (3) at (4,-1) {6\textcolor{oldname}{:3}};
    \node[shape=rectangle, rounded corners=2mm,draw=black] (4) at (3.5,0.5) {3\textcolor{oldname}{:4}};
    \node[shape=rectangle, rounded corners=2mm,draw=black] (5) at (-2.8,-1.6) {7\textcolor{oldname}{:5}};
    \node[shape=rectangle, rounded corners=2mm,draw=black] (6) at (2,2) {1\textcolor{oldname}{:6}};
    \node[shape=rectangle, rounded corners=2mm,draw=black] (7) at (-1,-2) {8\textcolor{oldname}{:7}};
    \node[shape=rectangle, rounded corners=2mm,draw=black] (8) at (-2,2) {4\textcolor{oldname}{:8}};
    \draw[graph edge] 
            (1) edge (7) 
            (5) edge (7) 
            (8) edge[tree up edge] node[left,above, sloped,scale=.7]{$D[4]=1$} (1) 
            (5) edge (1) 
            (1) edge (2) 
            (4) edge[tree down edge] node[left,above, sloped,scale=.7]{$D[5]=0$} (2) 
            (3) edge (2) 
            (6) edge[tree down edge] node[left,above, sloped,scale=.7]{$D[3]=0$} (4) 
            (7) edge (2) 
            (5) edge[tree up edge] node[left,above, sloped,scale=.7]{$D[7]=1$} (8) 
            (6) edge (2) 
            (1) edge[tree up edge] node[left,above, sloped,scale=.7]{$D[2]=1$} (6) 
            (3) edge[tree down edge] node[left,above, sloped,scale=.7]{$D[8]=0$} (7) 
            (4) edge[tree down edge] node[left,above, sloped,scale=.7]{$D[6]=0$} (3)
    ;

    \node[] at (-4.2, 0) {$G =$};
    
    \node[anchor=base] at (-4, -3) {\llap{$A'$}${}={}$};
    \foreach [count=\i] \t in {5,5,8,5,2,8,5} {
    	\node[inner sep=3pt,anchor=base,minimum width=1.3em] (ap\i) at (-3.9+\i/2,-3) {$\t$} ;
    	\node[scale=.4] at ($(ap\i)+(0,.3)$) {\i};
    }
    \node[anchor=base] at (-4, -3.75) {\llap{$S'$}${}={}$};%
    \foreach [count=\i]\c in {1,0,1,0,0,1,1,1,1,0,1,0,0,1,0} {
    	\node[inner sep=1pt,anchor=base] (sp\i) at (-3.75+\i/4.3,-3.75) {\ttfamily\c} ;
    	\node[scale=.4] at ($(sp\i)-(0,.2)$) {\i};
    }
    \node[anchor=base] at (-4, -5.5) {\llap{$D$}${}={}$};
    \foreach [count=\i] \c in {{\color{gray}--},0,1,0,1,1,0,1} {
    	\node[inner sep=1pt,anchor=base] (d\i) at (-3.7-1/6+\i/3,-5.5) {\ttfamily\makeboxlike{0}{\c}} ;
    	\node[scale=.4] at ($(d\i)-(0,.2)$) {\i};
    }
	
	\foreach [count=\t] \f in {2,4,5,10,12,13,15} {
		\draw[densely dotted] (sp\f) to[out=90,in=-90,looseness=1.25] (ap\t);
	}
	\foreach \tofit in {{(ap1)},{(ap2) (ap3)},{(ap4)},{(ap5) (ap6)},{(ap7)}} {
		\node[black!50,draw,densely dotted,rectangle,rounded corners=2pt,fit={\tofit},inner sep=0pt] {};
	}

    \node[] at (1.1, -4) {$T=$};

    \node[tree node] (t1) at (4,-2.5) {1};
    \node[tree node] (t2) at (3.5,-3.4) {2};
    \node[tree node] (t3) at (4.5,-3.4) {3};
    \node[tree node] (t4) at (3,-4.3) {4};
    \node[tree node] (t5) at (4,-4.3) {5};
    \node[tree node] (t6) at (5,-4.3) {6};
    \node[tree node] (t7) at (2.5,-5.2) {7};
    \node[tree node] (t8) at (5.5,-5.2) {8};

    \draw[thick] (t1) edge (t2) (t1) edge (t3) (t2) edge (t4) (t4) edge (t7) (t3) edge (t5) (t3) edge (t6) (t6) edge (t8);

    \foreach [count=\v from 2] \p/\d in {%
                1/false,%
                1/true,%
                2/false,%
                3/true,%
                3/true,%
                4/false,%
                6/true%
    } {
        \ifthenelse{\equal{\d}{true}}{
            \node[tree node,scale=.7] (v\v) at ($(d\v)+(0,.4)$) {\v};
            \node[tree node,scale=.7] (pv\v) at ($(v\v)+(0,.6)$) {\p};
            \draw[tree down edge,thick] (pv\v) -- (v\v) ;
            \draw[tree down edge,thin,black,-{Stealth[length=8pt,red,sep=6pt]}] (t\p) -- (t\v);
        }{
            \node[tree node,scale=.7] (v\v) at ($(d\v)+(0,.4)$) {\v};
            \node[tree node,scale=.7] (pv\v) at ($(v\v)+(0,.6)$) {\p};
            \draw[tree up edge,thick] (v\v) -- (pv\v) ;
            \draw[tree up edge,thin,black,-{Stealth[length=8pt,red,sep=6pt]}] (t\v) -- (t\p);
        }
    }

    \draw[gray, densely dotted] (6) edge[bend left=90] (t1);
    \draw[gray, densely dotted] (1) edge[bend right=20] (t2);

    \draw [yshift=-.5pt,very thick, decorate,decoration={brace,amplitude=5pt,mirror,raise=3.3ex}]
    	(2.2,-2.2) -- (2.2,-5.8) node[midway,yshift=-3em]{};

    \draw [yshift=-.5pt,very thick, decorate,decoration={brace,amplitude=5pt,mirror,raise=3.3ex}]
    	(-3,2.3) -- (-3,-2.3) node[midway,yshift=-3em]{};

\end{tikzpicture}
    \caption{Illustration of the ultrasuccinct TREX data structure on a directed graph $G$. The dashed edges form the spanning tree $T$ (computed via the MST-approximation algorithm from~\wref{sec:approximation}). The vertices are labelled as `new:old' where the new labels are induced by the level-order traversal of rooted tree $T$. The array~$D$ stores the direction of edges in~$T$. The array~$A'$ represents the adjacency list of~$G$ after removing~$T$ in the order of vertices appearing in level-order. The one-bits in~$S'$ mark the start indices of the outneighbourhood of vertices in $A'$.}
    \label{fig:ds-example}
\end{figure}
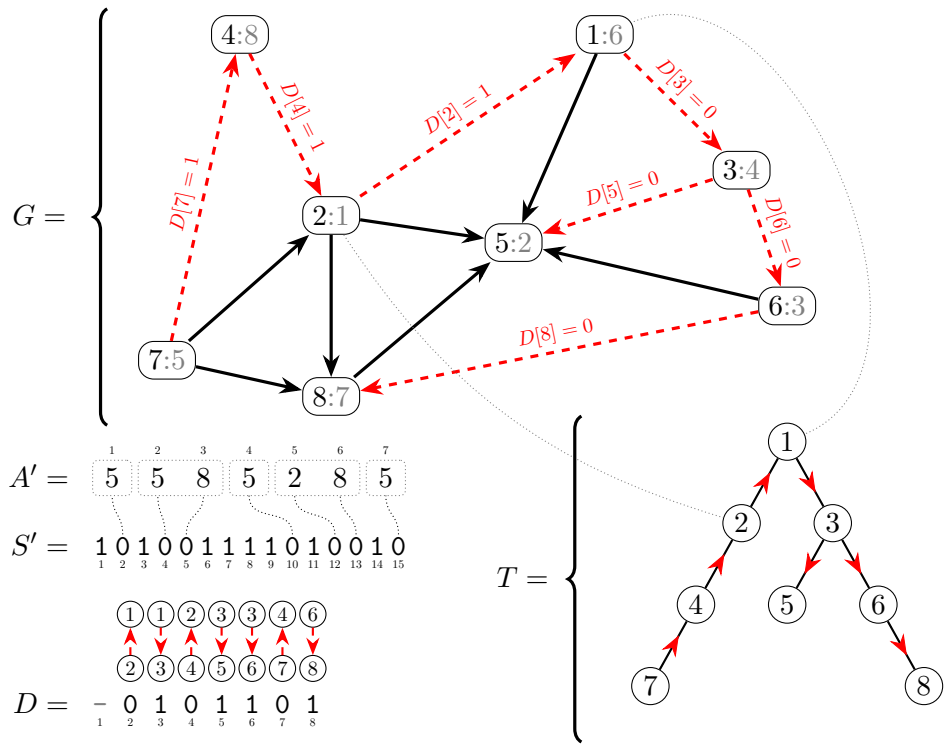

We combine this data structure with \emph{twin removal}: 
$u$ and $v$ are twins if their neighbourhoods coincide;
for an unlabelled graph, it suffices to remember the number of twins of each vertex instead of storing $u$ and $v$ separately.
TREX with twin removal yields an instance-optimal-space data structure for unlabelled graphs arising from the \emph{random copy model}~\cite{TurowskiMagnerSzpankowski2020} (see also \wref{app:copy-model}). 

\paragraph{Structure Over Identity}

We point out that the relabeling mapping used in tree extraction can be explicitly output at construction time, so a user of the data structure may choose to remember some of the new vertex labels to facilitate later queries.
However, the mapping is not stored as part of our TREX data structure.
We are thus storing an \emph{unlabelled graph}, \ie, only the equivalence class \wrt graph isomorphism instead of a concrete labelled graph. 
This reduces the intrinsic information content by up to $\lg(n!) \sim n \lg n$ bits, and these maximal savings are asymptotically attained for various distributions over graphs~\cite{ChoiSzpankowski2012,LuczakMagnerSzpankowski2019,PatonHartleStepanyantsVanDerHoorn2022,IsmailiAlaouiNamrataWild2025}.

Apart from the information-theoretic question of identifying the intrinsic information content of graph-structured data~\cite{SzpankowskiGrama2018} (in particular the notion of \emph{structural entropy}),
many applications can profit from these space savings.
For example, identifying \emph{network motifs}, \ie, small subgraphs that occur surprisingly often or surprisingly rarely in a network compared to random graphs, is a standard tool in network analysis~\cite{MiloShenOrrItzkovitzKashtanChklovskiiAlon2002}.
This involves counting occurrences of many motifs in many random graph chosen from a null model.
Here, vertex ids are randomly generated anyways, so renaming vertices does not change the result.

To further stress how widespread working with unlabelled graphs is,
we point out that the area of succinct graph data structures is full of examples
where original labels are not stored. 
Indeed, preserving the original vertex names would often dominate overall space
(\eg{} in trees~\cite{NavarroSadakane2014,FarzanMunro2014}, 
planar graphs~\cite{Jacobson1989,FerresFuentesSepulvedaGagieHeNavarro2020,KammerMeintrup2023}, 
proper interval graphs~\cite{AcanChakrabortyJoRao2020,HeMunroNekrichWildWu2020}, 
bipartite permutation graphs~\cite{TsakalidisWildZamaraev2023},
bounded clique-width graphs~\cite{Kamali2017,ChakrabortyJoSadakaneRao2024},
separable graphs~\cite{BlellochFarzan2010}) 
or affect the leading term of space usage
(\eg{} in interval graphs~\cite{AcanChakrabortyJoRao2020}, 
permutation graphs~\cite{TsakalidisWildZamaraev2023}
circle graphs~\cite{AcanChakrabortyJoNakashimaSadakaneRao2022}).
In all these cases, the known succinct data structures store unlabelled graphs,
often leaving the assumption implicit that renaming vertices must be acceptable.

\paragraph{Undirected Graphs}

If we start with an undirected graph, we have even more freedom:
We may extract a spanning tree for assigning vertex names, and also orient all edges
to minimise the resulting indegree entropy.
(Recall that wavelet trees already support navigation over both outgoing and incoming edges,
so storing the edge in one orientation is sufficient.)
Both our hardness results and our approximation algorithms directly generalise to this case.

\subsection{Related Work}

Compressed representations of graphs are in strong demand and 
corresponding data structures are an active area of research.
Much work in this area falls into one of two camps:

\emph{(1) Applied graph compression.}
Here, with properties of certain domains in mind, compression heuristics are developed and
evaluated empirically on benchmarks datasets.
A~representative example is the webgraph framework~\cite{BoldiVigna2004}, 
which provides space-efficient data structures for typical networks of websites and hyperlinks.
Many more are discussed in~\cite{BestaHoefler2019}.

\emph{(2) Succinct representation of a graph class.}
A graph class $\mathcal G$ is a set of graphs closed under isomorphism.
A succinct representation of a graph class uses $\lg|G_n|(1+o(1))$ bits of space,
where $G_n$ is the set of graphs with vertex set $[n]$ in $\mathcal G$.
In other words, we asymptotically use the optimal worst-case number of bits to represent graphs with a known number of vertices~$n$.
Succinct representations with efficient (often $O(1)$ time) navigational query support
are known for many graph classes; as nonexhaustive examples, we list
(proper) interval graphs~\cite{AcanChakrabortyJoRao2020,HeMunroNekrichWildWu2020}, 
(bipartite) permutation graphs~\cite{TsakalidisWildZamaraev2023},
bounded clique-width graphs~\cite{Kamali2017,ChakrabortyJoSadakaneRao2024},
separable graphs~\cite{BlellochFarzan2010},
circle graphs~\cite{AcanChakrabortyJoNakashimaSadakaneRao2022},
chordal graphs~\cite{MunroWu2018}.
Planar maps can also be succinctly represented~\cite{CastelliAleardiSchaeffer2008};
planar \emph{graphs} (without a fixed embedding) still pose challenges since not even the asymptotic number of unlabelled graphs on $n$ vertices is known, 
but representations with $O(\lg|G_n|)$ bits are achievable~\cite{FerresFuentesSepulvedaGagieHeNavarro2020,KammerMeintrup2023}.
All of the above works (implicitly) consider unlabelled graphs and assign new vertex ids at construction time.

A common property of the above succinct representations of a graph class is that they use the \emph{same} space for any graph of a given size, and are highly specific to the graph class at hand.
In general, succinct data structures aim to support efficient queries for an object $x\in \mathcal X$ using $\lg|\mathcal X|(1+o(1))$ bits of space. This space usage is asymptotically optimal in the \emph{worst case} over $\mathcal X$, but can, in 
principle, be improved to $\lg(1/\Prob x)$ bits of space when the object is drawn randomly from $\mathcal X$ with probability $\Prob x$. 
A much smaller number of prior works consider data structures whose space \emph{adapts} to the compressibility of the concrete instance; such data structures are known for trees~\cite{JanssonSadakaneSung2012,MunroNicholsonSeelbachBenknerWild2021} and preferential-attachment graphs~\cite{IsmailiAlaouiNamrataWild2025}.

We generalise the work of~\cite{IsmailiAlaouiNamrataWild2025} from the specific model of out-regular graphs generated by preferential attachment \emph{(Barabási-Albert model)} to arbitrary directed or undirected graphs. 
Unlike in~\cite{IsmailiAlaouiNamrataWild2025}, general graphs require a principled choice of the spanning tree to extract and a more elaborate analysis.  Especially for undirected graphs, we also consider the additional degree of freedom to choose an orientation for all edges.
Like in~\cite{IsmailiAlaouiNamrataWild2025}, our resulting algorithms are simple and use only standard tools from succinct data structures, for which efficient implementations are available.

GLOUDS~\cite{FischerPeters2016} is a precursor to~\cite{IsmailiAlaouiNamrataWild2025} and similar to our data structure in that they also partition the graph into a tree and the remaining edges.
However, the focus there is on a simple data structure. 
The extracted tree is always a BFS tree/forest and the remaining edges are encoded \emph{alongside} the LOUDS (level-order unary degree sequence) representation of the tree.
Non-tree edges are only compressed in that their target is stored using $\lceil\lg h\rceil$ bits where $h$ is the number of different vertices that appear as targets of the $k=m-(n-1)$ non-tree edges;
however, no attempt is made at choosing a tree to obtain a minimal $h$.
Storing edge targets in an entropy-compressed form is not pursued further.

The problem of orienting edges of an undirected graph to minimise the resulting indegree entropy
(without tree extraction) was considered by Cardinal et al.~\cite{CardinalFioriniJoret2008}.
They showed that the problem is $\NP$-hard even if the graph is planar.
On the other hand, simply orienting all edges towards the larger-degree endpoint yields 
an indegree entropy of at most $\mathit{OPT}+m$.
We can prove a similar result, but with weaker constants, via~\wref{thm:generic_apx}.

\subsection{Contributions}

In this section, we succinctly formalise our technical contributions.
Proofs are delayed to later sections.

Our first result is a generic way to obtain approximate algorithms for \emph{entropy minimisation} by transforming exact algorithms for \emph{linear minimisation}.
(This theorem may seem abstract at first, however it immediately implies that several
of the problems we are interested can be approximated up to a linear error term;
we illustrate this below on \TheOptProblem.)

\begin{restatable}{theorem}{genericapx}
\label{thm:generic_apx}
    Fix some vector $(d_1, \ldots, d_n) \in \mathbb{N}^n$ with $m = \sum_i d_i$.
    Fix some positive integer~$k$.
    Fix also some class
    \[
        \mathcal{C} \wrel\subseteq
        \left
        \{ (x_1, \ldots, x_n) \in \mathbb{N} \;\;\middle\vert\;\;
			\forall i \ 0 \leq x_i \leq d_n \wbin\wedge \displaystyle\sum_ix_i = k
        \right\}.
    \]
    Consider the following two objective functions:
    \begin{align*}
        \optlin(x_1, \ldots, x_n) &\wwrel= \sum_i x_i \lg d_i,\\
        \optent(x_1, \ldots, x_n) &\wwrel= -\sum_i (d_i - x_i) \lg \left(\frac{d_i - x_i}{m - k}\right)
        \wwrel= H(d_1 - x_1, \ldots, d_n - x_n).
    \end{align*}
    The following holds for any $(x_1, \ldots, x_n) \in \mathcal{C}$:
    \begin{align*}
    	&
        	\optent(x_1, \ldots, x_n) \bin- \frac{k}{\ln 2}
    \\	&\wwrel\leq 
        	(m - k) \lg(m - k) - \left(\sum_i d_i \lg d_i\right) 
        	\bin+ \optlin(x_1, \ldots, x_n)
    \\	&\wwrel\leq 
        	\optent(x_1, \ldots, x_n).
        \end{align*}
    Hence, if we can minimise $\optlin$ over $\mathcal{C}$ exactly, then we can
    minimise $\optent$ over $\mathcal{C}$ with additive error at most $k/\ln 2$.
\end{restatable}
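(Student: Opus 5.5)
We expand $\optent$ and compare it term by term with $\optlin$. Write $e_i = d_i - x_i \ge 0$ (the constraint is presumably $0\le x_i\le d_i$), so $\sum_i e_i = m-k$. Then
\[
	\optent(x) \wwrel= (m-k)\lg(m-k) \bin- \sum_i e_i \lg e_i .
\]
Subtracting this from the middle expression $M(x) = (m-k)\lg(m-k) - \sum_i d_i\lg d_i + \sum_i x_i \lg d_i$ cancels the $(m-k)\lg(m-k)$ terms and leaves
\[
	M(x) - \optent(x) \wwrel= \sum_i \bigl( e_i \lg e_i - d_i\lg d_i + x_i \lg d_i \bigr)
	\wwrel= \sum_i \bigl( e_i\lg e_i - e_i \lg d_i\bigr)
	\wwrel= -\sum_i e_i \lg\frac{d_i}{e_i}.
\]
Both claimed inequalities therefore reduce to the per-coordinate bound
\[
	0 \wwrel\le e_i\lg\frac{d_i}{e_i} \wwrel\le \frac{x_i}{\ln 2},
\]
which we then sum over $i$, using $\sum_i x_i = k$.

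The lower bound holds because $e_i\le d_i$, so the logarithm is nonnegative. This gives $M\le\optent$. For the upper bound, write $e_i\lg(d_i/e_i) = \frac{e_i}{\ln 2}\ln\bigl(1+\frac{x_i}{e_i}\bigr)$. Applying $\ln(1+t)\le t$ yields at most $x_i/\ln 2$, which gives $\optent - k/\ln 2 \le M$.

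Degenerate cases are handled by the convention $0\lg 0=0$. If $e_i=0$, the term is $0$. If $d_i=0$, then $x_i=0$ and everything vanishes, since the $x_i\lg d_i$ term is multiplied by $x_i=0$.

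For the final "hence" claim, note that $M(x)$ differs from $\optlin(x)$ by a constant independent of $x$. Let $\hat x$ minimise $\optlin$ over $\mathcal C$, and therefore also $M$. Then for any $x^*$ optimal for $\optent$,
\[
	\optent(\hat x) \wwrel\le M(\hat x) + \frac{k}{\ln 2} \wwrel\le M(x^*) + \frac{k}{\ln 2} \wwrel\le \optent(x^*) + \frac{k}{\ln 2}.
\]

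There is no real obstacle here. The only step needing care is the algebraic rearrangement that combines $-d_i\lg d_i + x_i\lg d_i$ into $-e_i\lg d_i$, together with the zero-degree edge cases.
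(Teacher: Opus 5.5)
Your proposal is correct and follows essentially the same route as the paper. The paper likewise rewrites $\optent$ as $(m-k)\lg(m-k) - \sum_i (d_i-x_i)\lg(d_i-x_i)$ and proves the per-coordinate bound $0 \le (d_i - x_i)(\lg d_i - \lg(d_i - x_i)) \le x_i/\ln 2$ (its auxiliary lemma uses $1+t\le e^t$, which is your $\ln(1+t)\le t$). It then sums using $\sum_i x_i = k$ and uses the same three-step chain for the ``hence'' claim. You are also right that the constraint $x_i \le d_n$ in the statement should read $x_i \le d_i$, which the paper's proof tacitly uses as well.
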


Let us see how to apply this theorem to \TheOptProblem.
Suppose we are given a (weakly) connected digraph $G$ with $n$ vertices and $m$
edges. Let $d_1, \ldots, d_n$ be the indegree sequence of~$G$. Take $k = n - 1$,
and define $\mathcal{C}$ to be the set of indegree sequences of spanning trees of~$G$.\footnote{To clarify: we are interested in finding subsets of edges that are trees when ignoring edge orientations; this is what is meant by spanning tree.}
In other words, $\mathcal{C}$ contains, for every spanning tree $T$ of $G$, the sequence
$x^{(T)} = (x_1^{(T)}, \ldots, x_n^{(T)})$ where
\[
    x_i^{(T)} \wrel= \bigl| \{ (j, i) \in T \} \bigr|.
\]
Notice that for any spanning tree $T$ of $G$, we have that the empirical entropy
of the indegree sequence of $G - T$ is, by definition, $\optent(x^{(T)})$. Hence,
the \TheOptProblem problem is identical to minimising
$\optent$ over $\mathcal{C}$.%
\footnote{%
	Strictly speaking, we need to return $T$ not the vector $x^{(T)}$, 
	but this is also easy to compute.%
} 
In view of \wref{thm:generic_apx},
this can be done with error $(n - 1) / \ln 2$, provided we can minimise $\optlin$ over $\mathcal{C}$.
But, expanding definitions, this is just finding a minimum cost spanning tree where
the cost of edge $(i, j)$ is given by $\lg d_j$. This can be done easily~-- hence
the following key corollary follows.

\begin{corollary}\label{cor:MINETREX_apx}
    \TheOptProblem can be approximated with additive error $(n - 1) / \ln 2$ in $O(n + m)$ time.
\end{corollary}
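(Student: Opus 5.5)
The corollary follows by instantiating \wref{thm:generic_apx} with the class $\mathcal C$ described just before it; what remains is to justify the reduction and the running time. Let $d_1,\ldots,d_n$ be the indegree sequence of $G$, put $k=n-1$, and let $\mathcal C=\{x^{(T)} : T \text{ a spanning tree of } G\}$. We first check that $\mathcal C$ satisfies the hypotheses of the theorem. For every $T$ we have $x^{(T)}_i\le d_i$, since the edges of $T$ entering $i$ are edges of $G$ entering $i$. We also have $\sum_i x^{(T)}_i = |E(T)| = n-1 = k$. The statement writes the bound $x_i\le d_n$, but it should read $x_i\le d_i$; this is what the proof of the theorem needs, because $\lg(d_i-x_i)$ must be defined. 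The entropy of the indegree sequence of $G-T$ is $H(d_1-x_1,\ldots,d_n-x_n)=\optent(x^{(T)})$, and $G-T$ has $m-k$ edges. So \TheOptProblem is exactly the problem of minimising $\optent$ over $\mathcal C$, and we return the tree $T$ attaining the minimiser.

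Next we minimise $\optlin$ exactly. We have $\optlin(x^{(T)})=\sum_i x^{(T)}_i\lg d_i=\sum_{(j,i)\in T}\lg d_i$. Assign each edge $(j,i)$ the weight $\lg d_i$ and ignore orientation. A minimum-weight spanning tree $T^*$ of the underlying undirected multigraph then minimises $\optlin$ over $\mathcal C$; parallel or antiparallel edges cause no problem. Apply \wref{thm:generic_apx} with $C_0=(m-k)\lg(m-k)-\sum_i d_i\lg d_i$, a constant independent of $T$. For any spanning tree $T$,
\[
\optent(x^{(T^*)})-\tfrac{k}{\ln 2}\le C_0+\optlin(x^{(T^*)})\le C_0+\optlin(x^{(T)})\le \optent(x^{(T)}).
\]
Taking $T$ optimal gives $\optent(x^{(T^*)})\le \mathit{OPT}+(n-1)/\ln 2$.

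The main obstacle is the $O(n+m)$ time bound, since a generic MST algorithm costs $O(m\log n)$ or needs special machinery. Here the weights are monotone in $d_i\in\{0,\ldots,m\}$, so the edges can be sorted by target indegree with counting sort in $O(n+m)$ time. We then run Kruskal's algorithm on this order. Union–find gives $O(m\,\alpha(n))$, which is nearly but not exactly linear. For a strictly linear bound, one option is to process vertices in increasing order of $d_v$. Every edge entering $v$ has the same weight $\lg d_v$, so all edges into $v$ can be handled in one batch. Alternatively, since the edge order is known in advance, one can use the linear-time MST algorithm for integer weights. We expect the key insight to be that, among edges of equal weight, any spanning forest choice is valid, which allows a BFS/DFS-style merging of components in linear total time. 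Computing the indegrees takes $O(n+m)$ time, so the total is $O(n+m)$.
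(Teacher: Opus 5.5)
Your approximation argument is correct and is the paper's own. You instantiate the generic theorem with $k=n-1$ and $\mathcal C$ the tree indegree vectors, observe that minimising $\optlin$ is a minimum spanning tree problem with cost $\lg d_j$ on edge $(i,j)$, and close with the sandwich inequality. Your remark that the hypothesis should read $x_i\le d_i$ rather than $x_i\le d_n$ is also right.

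The gap is the $O(n+m)$ bound. Your proposal ends with a list of candidate approaches, not an argument.
\begin{itemize}
\item The only algorithm you describe completely, counting sort followed by Kruskal with union--find, runs in $O(m\,\alpha(n))$, as you note yourself.
\item Handling all edges into $v$ in one batch does not avoid union--find. You still have to decide which in-neighbours of $v$ already share a component of the forest built so far, which costs $d_v$ find operations at $v$.
\item The per-weight-class BFS/DFS idea needs the current component of every endpoint. So you are either back to union--find, or you relabel components after each class. Relabelling can cost $\Theta(n)$ per class, and there can be $\Theta(\sqrt m)$ distinct indegree values (distinct positive indegrees sum to at most $m$). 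That is not linear as stated.
\item Citing an integer-weight linear-time MST algorithm points in the right direction, but the reason you give is not the relevant one. Knowing the edge order in advance does not make union--find linear; the linear-time special case needs the union \emph{tree} in advance, which Kruskal does not provide. Moreover, the costs $\lg d_j$ are not integers as they stand.
\end{itemize}

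The missing step, which the paper makes explicit, is this: a minimum spanning tree depends only on the relative order of the edge costs. So you may replace the cost $\lg d_j$ of edge $(i,j)$ by the integer $d_j\in[1..m]$ without changing the optimal tree. After that, integer-weight MST machinery applies; the paper runs Prim's algorithm with a bucket-based priority queue.

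If you follow that route, note that Prim's extracted keys are not monotone here. Take a hub of indegree $H$ whose in-neighbours each also point to a private vertex of indegree $1$. Starting Prim at the hub, the extracted keys alternate between $H$ and $1$. A Dial-style cursor that only moves forward therefore does not suffice on its own, and a cursor that moves back rescans about $H$ buckets per round, which is quadratic on this instance.

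An airtight way to finish is to apply Fredman and Willard's deterministic linear-time MST algorithm for integer weights on the word RAM to the costs $d_j$. These costs fit in $O(1)$ words since $d_j\le m$. Alternatively, the randomized Karger--Klein--Tarjan algorithm gives expected linear time.
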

\begin{proof}
    The previous discussion implies that it suffices to compute a
    minimum cost spanning tree where the cost of an edge $(i, j)$
    is $\lg(d_j)$, where $d_j$ is the indegree of $j$. To do this
    in linear time, observe that we can just as well assign edge
    $(i, j)$ cost $d_j$ without modifying the minimum cost spanning
    tree.
    Then, since the new costs are integers between $1$ and $m$,
    we can easily find the desired spanning tree using Prim's
    algorithm in $O(n + m)$ time with a bucket-based priority queue.
\end{proof}

In a similar way (see \wref{sec:approximation} for proof), 
we can deal with selecting both the orientation of edges on top of tree extraction.
We denote this problem as \textsc{Undirected} \TheOptProblemLongForm (U-\TheOptProblem).

\begin{restatable}{corollary}{UMINETREXAPX}\label{cor:U_MINETREX_apx}
    U-\TheOptProblem can be approximated with additive error $(m + n - 1) / \ln 2$ in $O(n + m)$ time.
\end{restatable}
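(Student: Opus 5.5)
The plan is to reduce U-\TheOptProblem to an instance of \wref{thm:generic_apx}, mirroring the proof of \wref{cor:MINETREX_apx}, with a budget of $k = m+n-1$ ``removed'' units. Given an undirected connected graph $G$ with $n$ vertices and $m$ edges, we first set up the vector $d$. We define $d_v = \deg(v)$, the undirected degree, for each vertex $v$. Then $\sum_v d_v = 2m$, so the theorem's ``$m$'' becomes $2m$.

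A solution consists of a spanning tree $T$ and an orientation of the $m-(n-1)$ non-tree edges. The resulting indegree of $v$ in $G-T$ equals $d_v - x_v$, where
\[
x_v \wrel= \deg_T(v) + \bigl|\{\text{non-tree edges } \{u,v\} \text{ oriented away from } v\}\bigr| .
\]
In words, every tree edge incident to $v$ counts once, and every non-tree edge counts once at its tail. Then $\sum_v x_v = 2(n-1) + (m-n+1) = m+n-1 = k$, and $0\le x_v\le d_v$. The class $\mathcal C$ is the set of all such vectors. The remaining edge count is $2m-k = m-n+1$, which is exactly the number of stored edges. Hence $\optent(x)$ is precisely the indegree entropy objective of U-\TheOptProblem, and \wref{thm:generic_apx} yields additive error $k/\ln 2 = (m+n-1)/\ln 2$, provided we can minimise $\optlin(x)=\sum_v x_v\lg d_v$ exactly over $\mathcal C$.

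Next, we decompose $\optlin$. We have
\[
\optlin(x) \wrel= \sum_{\{u,v\}\in T}(\lg d_u+\lg d_v) \bin+ \sum_{\text{non-tree } (u\to v)} \lg d_u .
\]
For a fixed $T$, each non-tree edge should be oriented towards its larger-degree endpoint, costing $\min(\lg d_u,\lg d_v)$. Write $w(e)=\lg d_u+\lg d_v$ and $c(e)=\min(\lg d_u,\lg d_v)$. The objective then equals $\sum_{e\in E} c(e) + \sum_{e\in T}\bigl(w(e)-c(e)\bigr)$. Since $w(e)-c(e)=\max(\lg d_u,\lg d_v)$, minimising $\optlin$ amounts to computing an MST with edge weight $\max(d_u,d_v)$; the logarithm is monotone, so it can be dropped. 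The rest follows \wref{cor:MINETREX_apx}: the weights are integers in $[1,m]$, so Prim's algorithm with a bucket queue runs in $O(n+m)$ time, and orienting the non-tree edges takes linear time.

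The main obstacle is verifying this decomposition carefully: that $\mathcal C$ captures exactly the feasible solutions, with the correct $k$ and total $2m$, and that the inner minimisation over orientations separates from the choice of $T$. A smaller technical concern is that the theorem statement writes the bound as $x_i\le d_n$; I would read this as $x_i\le d_i$, which holds here.
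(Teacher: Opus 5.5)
Your proposal is correct and takes essentially the same route as the paper: you apply \wref{thm:generic_apx} with $k=m+n-1$ and $d_v=\deg(v)$ (total $2m$), orient every non-tree edge towards its larger-degree endpoint, and reduce to a minimum spanning tree with edge weight $\max(d_u,d_v)$. The only difference is presentational. The paper views $G$ as a digraph of antiparallel pairs and uses an exchange argument to delete the cheaper arc of each pair, whereas you separate the orientation choice through the explicit decomposition $\sum_{e\in E}c(e)+\sum_{e\in T}\max(\lg d_u,\lg d_v)$; both lead to the same algorithm.
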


\begin{figure}
    \centering
    \begin{tikzpicture}[digraph, scale=1.4,shorten >=1pt]
        \begin{scope}[local bounding box=inst]
            \node (A) at (0:1cm){};
            \node (B) at (72:1cm){};
            \node (C) at (144:1cm){};
            \node (D) at (-144:1cm){};
            \node (E) at (-72:1cm){};
            \node (X) at (3, 0){};
            \node (P1) at ($(X) + (150:1cm)$){ };
            \node (P2) at ($(X) + (120:1cm)$){ };
            \node (P3) at ($(X) + (90:1cm)$){ };
            \node (P4) at ($(X) + (60:1cm)$){ };
            \node (P5) at ($(X) + (30:1cm)$){ };
            \node (P6) at ($(X) + (0:1cm)$){ };
            \node (P7) at ($(X) + (-30:1cm)$){ };
            \node (P8) at ($(X) + (-60:1cm)$){ };
            \node (P9) at ($(X) + (-90:1cm)$){ };
            \node (P10) at ($(X) + (-120:1cm)$){ };
            \node (P11) at ($(X) + (-150:1cm)$){ };
            \draw (P1) -> (X);
            \draw (P2) -> (X);
            \draw (P3) -> (X);
            \draw (P4) -> (X);
            \draw (P5) -> (X);
            \draw (P6) -> (X);
            \draw (P7) -> (X);
            \draw (P8) -> (X);
            \draw (P9) -> (X);
            \draw (P10) -> (X);
            \draw (P11) -> (X);
            \draw (A) -> (X);
		\begin{scope}[bend right=10]
            \draw (A) to (B);
            \draw (B) to (C);
            \draw (C) to (D);
            \draw (D) to (E);
            \draw (E) to (A);
            \draw (B) to (A);
            \draw (C) to (B);
            \draw (D) to (C);
            \draw (E) to (D);
            \draw (A) to (E);
            \draw (A) to (C);
            \draw (B) to (D);
            \draw (C) to (E);
            \draw (D) to (A);
            \draw (E) to (B);
            \draw (C) to (A);
            \draw (D) to (B);
            \draw (E) to (C);
            \draw (A) to (D);
            \draw (B) to (E);
        \end{scope}
        \end{scope}
    \end{tikzpicture}
    \caption{%
    	Example of a graph where tree extraction does not lower entropy. 
    	The clique on the left has $\sqrt n$ vertices in general.
    }
    \label{fig:badGraph}
\end{figure}
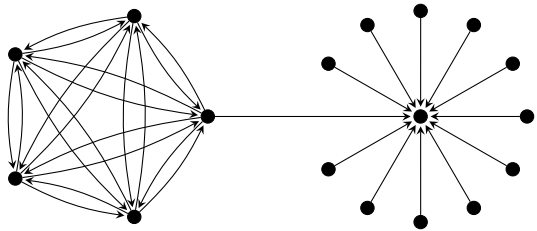

Let us now move on to hardness. 
We prove that for polytime algorithms, we will have to content
ourselves with a linear error term unless $\P=\NP$,
thus resolving the approximability of (U-)\TheOptProblem up to the first order.

\begin{theorem}\label{thm:MINETREX_hard}
    There exists some $\delta>0$ such that it is \NP-hard to approximate MINETREX
within additive error $\delta n$.
\end{theorem}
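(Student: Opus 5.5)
We reduce from the gap version of biregular exact hitting set, $(\beta,r,k)$-\textsc{Almost}-EHS. Its hardness is implicit in Guruswami and Trevisan~\cite{GuruswamiTrevisan2005}, and we take that as given. An instance consists of a universe $U$ and a family $\mathcal S$ of sets. Every set has size $k$ and every element lies in exactly $r$ sets. It is \NP-hard to distinguish two cases:
\begin{itemize}
\item[(YES)] some $X\subseteq U$ hits every set exactly once;
\item[(NO)] every $X\subseteq U$ leaves at least $\beta|\mathcal S|$ sets not hit exactly once.
\end{itemize}
The plan has three steps: build a graph from the instance, compute the entropy cost in the YES case, and show a linear entropy penalty in the NO case.

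\textbf{Construction.} We mimic the X3C reduction behind \NP-hardness. Each set $S\in\mathcal S$ becomes a vertex. Each element $u\in U$ becomes a vertex $u$ with edges $(u,S)$ for all $S\ni u$, so every set-vertex has indegree exactly $k$ from element-vertices. A hub $h$ then receives an edge from every element-vertex; this keeps the graph connected and gives the hub a huge indegree. Because of biregularity, the indegree sequence is completely determined by $|U|$ and $|\mathcal S|$. We add padding so that $n=\Theta(|\mathcal S|)$, with $r,k$ constant.

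\textbf{Cost in terms of the tree.} For a spanning tree $T$ let $x_v$ denote its indegree contribution at $v$, and write $f(d)=d\lg d$. The cost is
\[
(m-n+1)\lg(m-n+1)-\sum_v f(d_v-x_v).
\]
The first term is fixed, so we must maximise $\sum_v f(d_v-x_v)$. Since $f$ is convex and every set-vertex starts with the same degree $k$, a tree that removes exactly one edge into every set-vertex is optimal among trees respecting the forced structure. Concretely, set-vertex $S$ keeps degree $k-1$, and element-vertices are connected through the hub. The tree should use hub edges for some elements and element$\to$set edges for the remaining tree edges. The counting must be arranged so that, via \wref{thm:generic_apx}-style bookkeeping, the number of tree edges entering set-vertices is exactly $|\mathcal S|$, with the rest entering the hub. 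Taking as tree edges all hub edges plus, for each $S$, the edge from its unique hitting element gives the YES value $\mathit{OPT}_{\text{yes}}$.

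\textbf{Gap.} In the NO case, any spanning tree $T$ yields a distribution $(x_S)$ over set-vertices with $\sum_S x_S$ fixed, up to shifts absorbed by the hub. We must argue that at least $\beta'|\mathcal S|$ set-vertices have $x_S\ne1$. The key point is that the element$\to$set tree edges form a forest of stars centred at elements, since there are no cycles. Take $X$ to be the elements used as star centres with at least one edge, after canonicalising each element to use all of its edges, which only helps. Then $x_S=|S\cap X|$, and the NO guarantee applies. Strict convexity of $f$ on the integers $\{k-2,k-1,k\}$ shows that each deviating set-vertex costs at least a constant $c>0$ in $\sum f$, where $c=f(k)+f(k-2)-2f(k-1)$. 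The hub term must be controlled by a compensating argument: its degree is $\Theta(n)$, so shifting $O(1)$ units there changes the cost by $O(1)$ each, and we need this to be dominated. The total gap is then $c\beta|\mathcal S|=\delta n$.

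\textbf{Main obstacle.} The hardest step is this gap argument. A tree need not have the canonical form; it may route connectivity differently, or trade hub indegree against set-vertex indegree. We must show that these degrees of freedom cannot recover more than a small fraction of the $c\beta|\mathcal S|$ penalty. The first approach to try is to rigidify the construction by giving each element-vertex a private pendant path or gadget. This forces exactly which edges the tree must contain except for the element$\to$set choices, so that the only freedom left is the choice of $X$. After that, a direct exchange argument applies.
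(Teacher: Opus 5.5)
Your construction cannot separate YES from NO instances, and the step that is supposed to do so is not valid. In your graph, take all $|U|$ hub edges together with, for each set-vertex $S$, an \emph{arbitrary} incoming edge $(u_S,S)$. This is always a spanning tree: the hub star spans $U\cup\{h\}$ and each set-vertex hangs off as a leaf. It leaves every set-vertex with indegree exactly $k-1$, which is the same indegree sequence as your YES tree, and it is available in \emph{every} instance. The indegree of a set-vertex only counts how many tree edges enter it; it does not see whether the chosen elements form a consistent hitting set. Your sentence ``canonicalise each element to use all of its edges, which only helps'' is exactly where consistency is smuggled in. Nothing in the objective rewards it, and with the hub edges in $T$, two chosen elements sharing a set would even close a cycle.

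On top of that, the intended canonical form is far from optimal. The hub edges point into a vertex of indegree $|U|$, so they are the cheapest edges to keep in $G-T$, at $O(1)$ bits each. Element$\to$set edges cost $\lg n - O(1)$ bits each. Consider a tree that uses only as many hub edges as the incidence graph has components and otherwise extracts element$\to$set edges. It beats your YES tree by $\Theta(n\lg n)$, so your threshold is wrong as well. Private pendant gadgets per element, as you suggest at the end, address neither problem.

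The missing idea is to measure entropy on the side where ``choosing'' a vertex means extracting \emph{all} of its incoming edges, and to supply connectivity through edges that are free to extract. The paper orients the incidence edges from constraints $B$ to variables $A$, so each variable has indegree $r$. Instead of a hub, it hangs a binary tree $T_A$ with leaves $A$ and edges directed toward the leaves. Its internal nodes have indegree $1$, so exchange arguments show that w.l.o.g.\ $T_A\subseteq T$. Counting edges then forces every $b\in B$ to be attached by exactly one tree edge, and every vertex has indegree at most $r$ in $G-T$.

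The most dominant attainable sequence $(r,\dots,r,0,\dots,0)$ occurs precisely when each variable keeps all $r$ of its $B$-edges or loses all of them. The fully extracted variables then form an exact hitting set: each $b$ has at least one such neighbour by connectivity, and at most one since two would close a cycle through $T_A$. For soundness, the domination lemma (\wref{lem:domination}) shows that a tree within $\delta n$ of this value leaves at most $\delta n/C$ variables with indegree strictly between $0$ and $r$, where $C=\lg(r/(r-1))$. Taking these as $Z$ and the indegree-$0$ variables as $S$ gives an \textsc{Almost}-EHS solution with small $|Z|/|A|$. This is why the paper reduces from that variant rather than from the fraction-of-constraints formulation you state.
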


\begin{theorem}\label{thm:U_MINETREX_hard}
    There exists some $\delta>0$ such that it is \NP-hard to approximate U-MINETREX
within additive error $\delta m$.
\end{theorem}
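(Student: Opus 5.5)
The plan is a gap-preserving reduction from the biregular gap problem $(\beta,r,k)$-\textsc{Almost}-EHS, mirroring the reduction behind \wref{thm:MINETREX_hard}. In the undirected setting we can no longer take the indegree sequence as given, because orientations are also optimised. The construction therefore has to make the intended orientation essentially forced, or at least not improvable by more than a lower-order amount. Since the instances are biregular with constant $r,k$, every vertex of the constructed graph will have constant degree apart from $O(1)$ designated hub vertices. Hence $m=\Theta(n)$, and an additive gap of order $\delta n$ translates into one of order $\delta' m$.

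Construction. Given a set system (universe $U$, sets $S_1,\dots,S_t$, each of size $r$, each element in exactly $k$ sets), I will build an undirected graph as follows.
\begin{itemize}
\item There is one vertex per element and one vertex per set, with an incidence edge $\{S_j,x\}$ for each $x\in S_j$.
\item A hub vertex $h$ is joined to every set vertex, so that the spanning-tree part of a solution can be read off as a choice of sets.
\item Each element vertex receives a small attached gadget (for instance a bundle of pendant paths) so that, in any near-optimal orientation, incidence edges point into element vertices and hub edges point into $h$.
\end{itemize}
The intended correspondence is this. A spanning tree must reach each element vertex through some incidence edge. If the chosen sets form an exact hitting set, each element loses exactly one unit of indegree. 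In the almost-EHS case, the number of elements hit twice or not at all is at most a small fraction.

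Steps, in order.
\begin{enumerate}
\item \emph{Canonical orientation.} Prove a normal-form lemma: any orientation plus tree can be converted into one using the canonical orientation, with cost increasing by at most $O(1)$ per non-canonical edge. The exchange should in fact never increase cost, because reversing an edge towards the higher-degree endpoint does not increase $\sum_v d_v\lg(m/d_v)$; this is the concavity argument of Cardinal et al.
\item \emph{Reduction to a linear form.} After normalisation, the cost is a fixed term plus $\sum_x \phi(\text{number of tree edges into }x)$, where $\phi$ is given by the differences $g(d)-g(d-1)$ of $g(d)=d\lg d$.
\item \emph{Completeness.} An exact hitting set gives each element deficit exactly $1$, which yields cost $C_0$.
\item \emph{Soundness.} If every choice of sets leaves at least $\beta|U|$ elements hit $0$ or $\ge 2$ times, then each such element costs at least some constant $\epsilon>0$ extra. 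This follows from strict convexity of $d\mapsto d\lg d$ on the constant range $\{k-2,\dots,k\}$, together with a count showing the tree must compensate for unhit elements with extra tree edges elsewhere.
\end{enumerate}
Since $|U|=\Theta(m)$, this gives a gap of $\epsilon\beta|U|=\delta m$.

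Main obstacle. I expect the hardest step to be step 1, controlling orientation freedom. An adversarial solution might reverse a few incidence edges to pile indegree onto set vertices, or onto $h$, and so shave entropy. I will first try to make the gadgets heavy enough that the canonical indegrees strictly dominate the alternatives, so that each reversal provably costs at least a constant. Any residual gain from reversals would then be charged against the constant-fraction gap. If this fails, the fallback is to reduce directly from the MINETREX instances of \wref{thm:MINETREX_hard}. Each directed edge $(u,v)$ would be replaced by an undirected edge, and $v$ would be given enough private pendant in-neighbours that orienting towards $v$ is optimal by the larger-degree-endpoint rule. Since the number of added pendants is $O(1)$ per vertex, $m$ remains $\Theta(n)$.
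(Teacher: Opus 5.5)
\textbf{There is a genuine gap.} Your construction never ties the spanning tree to a choice of sets, so the correspondence behind steps 2--4 does not exist.

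Since $h$ is adjacent to every set vertex, the following is a spanning tree on \emph{every} instance: all hub edges, plus one arbitrary incidence edge per element, plus the gadget edges. In it, each element loses exactly one unit of indegree. This is in fact the only way to give every element deficit exactly $1$, and it does not depend on which sets are ``chosen''.

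So your completeness value $C_0$ is attained on No-instances too, and the soundness claim in step 4 is false. The number of tree edges at an element is not the number of chosen sets containing it, because nothing forces a set vertex to contribute all or none of its incidence edges to $T$.

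Step 2 also fails. With hub edges oriented into $h$, extracting a hub edge saves only about $\lg(m/t)=O(1)$ bits, against about $\lg n$ for an incidence edge. Entropy therefore pushes $T$ away from hub edges, the indegree of $h$ varies with $T$, and the form ``fixed term plus $\sum_x\phi(\cdot)$'' does not hold.

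Your gadgets cannot rescue this. Pendant paths, and the private pendant in-neighbours of your fallback, consist of bridges. They lie in every spanning tree and contribute nothing to $G-T$, so they have no influence on how the remaining edges are oriented. Nor can the larger-degree-endpoint rule certify the orientation. It only guarantees additive error $m$, which is the same order as the $\delta m$ gap you must preserve.

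\textbf{The missing idea is an all-or-nothing mechanism at the set vertices.} You get it by putting the indegree on the set side rather than on the elements. The paper's reduction takes the directed MINETREX instance and simply forgets orientations:
\begin{itemize}
\item set vertices $A$ of degree $r$;
\item element vertices $B$ of degree $k$, with $r\ge 10k$;
\item a binary tree $T_A$ with leaves $A$, used instead of a hub, so all degrees are constant and $m=\Theta(n)$.
\end{itemize}
After removing any spanning tree, an $A$-vertex keeps at most $r$ of its $r+1$ edges, a $B$-vertex at most $k-1<r$, and an internal node of $T_A$ at most $2$. So \emph{no orientation whatsoever} produces remaining indegree above $r$.

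The domination lemma then shows that any solution within $\delta n$ of $m's(r)$ has at most $\delta n/C$ remaining edges pointing into vertices of indegree below $r$. Hence all but $O(\delta n)$ set vertices keep either all or none of their $r$ incidences. Those keeping none form $S$, and the exceptions form $Z$. Coverage of $B\setminus N(Z)$ follows from $T$ being spanning. Exactness follows from acyclicity once $T_A\subseteq T$.

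The only new step compared with the directed case is justifying $T_A\subseteq T$, because internal nodes of $T_A$ may now keep indegree $2$. There are only $O(\delta n)$ vertices of remaining indegree $1$ or $2$, and each exchange costs $O(1)$. The resulting $O(\delta n)$ loss is absorbed by shrinking $\delta$. No orientation-forcing gadget is needed.
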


These theorems follow by a reduction from a hardness result by Guruswami and Trevisan~\cite{GuruswamiTrevisan2005} on \textsc{Gap Exact Hitting Set}. 

While we find \TheOptProblemLongForm interesting in its own right,
from the application perspective of compressing graphs, 
finding the best tree to extract would be utterly useless if either 
(i) every tree is more or less equally good, or 
(ii) \emph{no} tree leads to a significant savings whatsoever.

It is easy to see that (i) is not true in general. Indeed, consider \wref{fig:goodGraph}, or more generally
a graph with $2n + 2$ vertices, labelled $a_1, \ldots, a_n, \, b_1, \ldots, b_n, \, x, y$, with edges
$a_1 \to \cdots \to a_n \to b_1 \to \cdots \to b_n$, $a_i \to x$, $b_i \to y$ and $x \to y$.
In such a graph, if we remove the path	
\[
    y \leftarrow x \leftarrow a_1 \to \cdots \to a_n \to b_1 \to\cdots\to b_n,
\]
we are left with the edges $a_2 \to x, \ldots, a_n \to x$ and $b_1 \to y, \ldots, b_n \to y$ (two disjoint stars). 
Our scheme stores the resulting graph using $O(n)$ bits (using 1 bit per edge in $G-T$, for a total of $\sim 2n$ bits plus neighbourhood boundaries for $G-T$). 
On the other hand, if we remove the edges
$a_1 \to x, \ldots, a_n \to x$, $b_1 \to y, \ldots, b_n \to y$ and $x \to y$, then we are left with
the path $a_1 \to \cdots \to a_n \to b_1 \to \cdots \to b_n$, 
with indegree entropy $\Omega(\lg n)$ bit (as for the original graph if we do not remove anything). 
Thus we can achieve a significant space saving using our
approximation scheme~-- and indeed, the error of $O(n)$ is negligible compared to that saving.

Question (ii) is rather more thorny. It turns out that there are cases where \emph{no} tree 
significantly lowers the entropy if deleted. Consider, for example, a graph that looks like the graph from
\wref{fig:badGraph}. The general pattern shall have a clique on the left with $k$ vertices and a star on the right with $k^2$ vertices; so $k \sim \sqrt n$.
In this example, the original graph has $\DegreeEntropy(G) \sim \frac12\lg k \sim \frac14 \lg n$, for a total of
$\Theta(n \lg n)$ bits to store $G$. 
The problem is that the star on the right side of the graph costs almost nothing
to store~-- most of the complexity of the graph is contained in the small but dense part on the left.
This dense part is also almost completely untouched by removing any tree~-- we see that, indeed,
any tree $T$ that we remove only affects a vanishingly small fraction of the edges in the clique,
leaving $\DegreeEntropy(G-T)\sim \frac14 \lg n$.

The essential reason why \wref{fig:badGraph} has this bad behaviour is that its density is completely
contained within a small part: the clique was much denser than the whole graph. This is problematic
because the tree cannot lower the amount of bits needed to store the adjacency information in the
dense part of the graph very much. On the other hand, the vast majority of the graph is
nearly free to store, and thus we do not profit much from removing it. We show that this is essentially
the \emph{only} case where this happens: for any digraph $G$ where no subgraph is significantly denser than
the entire graph, we show that we can always achieve a significant saving via tree extraction.

In the following, for a graph $G$ with $n$ vertices and $m$ edges, define its density $\delta(G) = m / n$.

\begin{restatable}{theorem}{thmEquanimity}\label{thm:equanimity}
    Consider some connected $n$-vertex $m$-edge graph $G = (V, E)$. Let $\alpha$ be some constant so that
for any subgraph $G'$ of $G$, we have $\delta(G') \leq \alpha \delta(G)$. Then, there exists some
spanning tree $T$ of $G$ such that
\[
    H(G - T) \wwrel\leq H(G) - \frac{n}{2\alpha} \DegreeEntropy(G) + \frac{2n}{\ln 2}.
\]
\end{restatable}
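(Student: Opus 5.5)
The plan is to reduce the statement to \wref{thm:generic_apx} and then exhibit a good tree via the spanning-tree (forest) polytope. Throughout, write $H(G) = m\DegreeEntropy(G) = \sum_v d_v\lg(m/d_v)$, and for an edge $e=(u,v)$ put $w_e = \lg(m/d_v)\ge 0$.

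\emph{Step 1 (linearisation).} Let $T$ be any spanning tree, $k=n-1$, and let $x=x^{(T)}$ be its indegree vector. The first inequality of \wref{thm:generic_apx} gives
\[
  H(G-T) \wwrel\le (m-k)\lg(m-k) - \sum_i d_i\lg d_i + \sum_i x_i\lg d_i + \frac{k}{\ln 2}.
\]
We have $\sum_i d_i\lg d_i = m\lg m - H(G)$. Monotonicity of $\lg$ gives $(m-k)\lg(m-k)\le m\lg m - k\lg m$. Finally $\sum_i x_i\lg d_i = k\lg m - \sum_{e\in T} w_e$. Substituting these three facts yields
\[
  H(G-T) \wwrel\le H(G) - \sum_{e\in T} w_e + \frac{n}{\ln 2}.
\]
So it suffices to find a spanning tree with $\sum_{e\in T} w_e \ge \frac{n}{2\alpha}\DegreeEntropy(G) - \frac{n}{\ln 2}$. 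In fact I expect to get this without the subtracted term, so the bound will hold with slack.

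\emph{Step 2 (fractional tree).} Set $c = n/(2\alpha m)$ and consider the uniform vector $y_e = c$ on all edges. Its weight is
\[
  \sum_e c\,w_e \wwrel= c\sum_v d_v\lg(m/d_v) \wwrel= \frac{n}{2\alpha}\DegreeEntropy(G).
\]
I claim $y$ lies in the forest polytope of the underlying undirected multigraph. By Edmonds' theorem this polytope is $\{y\ge 0 : y(E(S))\le |S|-1 \text{ for all } S\subseteq V, S\neq\emptyset\}$.

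To check the constraints, apply the density hypothesis to the induced subgraph $G[S]$. This gives $|E(S)|\le \alpha |S|\, m/n$, hence $c\,|E(S)|\le |S|/2\le |S|-1$ whenever $|S|\ge2$. For $|S|=1$ we have $E(S)=\emptyset$, assuming no loops (loops can simply be excluded from $y$).

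\emph{Step 3 (rounding).} Since $y$ is a convex combination of incidence vectors of forests, some forest $F$ has $w(F)\ge w\cdot y$. Because all $w_e\ge0$ and $G$ is connected, $F$ extends to a spanning tree $T$ without losing weight. Plugging $T$ into Step 1 gives the theorem, even with $n/\ln 2$ in place of $2n/\ln 2$. Moreover $T$ can be taken to be the greedy maximum-$w$ spanning tree of \wref{cor:MINETREX_apx}, since that tree maximises $\sum_{e\in T} w_e$.

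The main obstacle is Step 2: one must realise that the density hypothesis is exactly what certifies membership of the scaled uniform vector in the forest polytope. Some care is also needed at the small-$|S|$ boundary and with parallel or antiparallel edges, which are counted in $|E(S)|$ consistently with the definition of $\delta(G')$.
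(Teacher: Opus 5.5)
Your proof is correct and follows the paper's argument. You linearise the entropy via \wref{thm:generic_apx}, then certify a heavy spanning tree by placing the uniform point $x_e = n/(2\alpha m)$ in Edmonds' forest polytope using the density hypothesis (the paper packages this step as \wref{cor:LP}); the only difference is that your one-sided bound $(m-k)\lg(m-k)\le (m-k)\lg m$, in place of the paper's two-sided estimate via \wref{lem:simpleBound}, legitimately sharpens the additive term from $2n/\ln 2$ to $(n-1)/\ln 2$.
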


For example, if $G$ has per-edge entropy say $\frac{1}{3} \lg n$, and no subgraph has \emph{proportionally}
more than 2 times the number of edges as the entire graph, then we can save at least roughly
$\frac{1}{12} n \lg n - 2n / \ln 2$ bits. 
While we make no claim about whether the bound in \wref{thm:equanimity} is tight, 
note that if the graph has, \eg, constant per-edge entropy \DegreeEntropy, there is no hope
to save more than a linear number of bits.

Additionally, we show that in any digraph where a significant number of vertices have nonzero indegree,
we also achieve a significant savings. 

\begin{restatable}{theorem}{thmManyNonzero}\label{thm:manyNonzero}
    Consider some connected $n$-vertex $m$-edge graph $G = (V, E)$, where at least $n / \alpha$ vertices have
nonzero indegree. Then there exists some spanning tree $T$ of $G$ such that
\[
    H(G - T) \wwrel\leq H(G) - \frac{n}{2\alpha} \DegreeEntropy(G) + \frac{2n}{\ln 2}.
\]
\end{restatable}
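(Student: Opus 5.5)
We combine \wref{thm:generic_apx} with an explicit construction of a tree whose edges point into many distinct low-indegree vertices. Write $H(G) = m\DegreeEntropy(G) = m\lg m - \sum_v d_v\lg d_v$, and for a spanning tree $T$ let $x_v = x^{(T)}_v$ denote its indegree vector, with $k=n-1$.

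\textbf{Step 1 (reduce to a linear quantity).} The first inequality of \wref{thm:generic_apx} gives
\[
	H(G-T) \wwrel\le (m-k)\lg(m-k) - \sum_v d_v\lg d_v + \sum_v x_v\lg d_v + \frac{k}{\ln 2}.
\]
Since $(m-k)\lg(m-k)\le (m-k)\lg m$, the right-hand side is at most $H(G) - \sum_v x_v \lg(m/d_v) + k/\ln 2$. Because $k/\ln 2 \le 2n/\ln 2$, it therefore suffices to exhibit a spanning tree with
\[
	W(T) \wrel{:=} \sum_{(u,v)\in T}\lg(m/d_v) \wrel\ge \frac{n}{2\alpha}\DegreeEntropy(G).
\]
Every edge has nonnegative weight $\lg(m/d_v)\ge 0$, since $d_v\le m$. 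Hence any forest achieving this weight can be completed to a spanning tree (using connectivity of $G$) without decreasing $W$.

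\textbf{Step 2 (one incoming edge per vertex).} Let $S$ be the set of vertices with $d_v>0$, so $|S|\ge n/\alpha$. For each $v\in S$, pick one incoming edge $(u_v,v)$. The resulting edge set $F_0$ has indegree at most $1$ everywhere. Hence, ignoring orientation, each weak component of $F_0$ contains at most one cycle, and that cycle is a directed cycle of length $\ge 2$, assuming no self-loops.

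From each cycle, delete its minimum-weight edge. A cycle of length $L\ge2$ keeps $L-1\ge L/2$ of its edges, and by deleting the lightest one it keeps at least half of its weight. The result is a forest $F$ with
\[
	W(F) \wrel\ge \tfrac12 \sum_{v\in S}\lg(m/d_v).
\]

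\textbf{Step 3 (the averaging inequality).} It remains to show $\sum_{v\in S}\lg(m/d_v)\ge |S|\DegreeEntropy(G)$. This follows because $\DegreeEntropy(G)=\sum_{v\in S}\frac{d_v}{m}\lg(m/d_v)$ is the $d$-weighted average over $S$ of the values $\lg(m/d_v)$, while the left side divided by $|S|$ is their uniform average. The values are decreasing in $d_v$, whereas the weights $d_v/m$ are increasing in $d_v$. By Chebyshev's sum inequality, the weighted average is at most the uniform average.

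Combining the three steps gives $W(T)\ge W(F)\ge \frac{|S|}{2}\DegreeEntropy(G)\ge \frac{n}{2\alpha}\DegreeEntropy(G)$, which yields the claim. I expect the main obstacle to be Step 3, where one must correctly set up the Chebyshev/rearrangement argument relating uniform and degree-weighted averages. The handling of cycles in Step 2 is routine but needs the no-self-loop assumption; a self-loop would simply never be selected.
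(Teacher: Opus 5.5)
Your proof is correct, but it obtains the heavy forest by a different route than the paper. The paper goes through Corollary~\ref{cor:LP}. It puts the fractional value $x_{ij}=1/(2d_j)$ on every edge and checks the forest-polytope constraints via $\sum_{e\in E\cap S^2}x_e\le |S|/2\le|S|-1$. Edmonds' integrality theorem then turns the fractional weight $\frac12\sum_{d_j>0}\lg(m/d_j)$ into an actual spanning tree. You instead construct the forest explicitly: choosing one in-edge per vertex of positive indegree yields a pseudoforest whose cycles are directed, and deleting the lightest edge of each cycle keeps at least half the weight. Your forest is in effect an integral rounding of the paper's point, which is exactly half the average of the indicator vectors of all such in-edge choices. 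The factor $\frac12$ is lost in the same place in both arguments: $L-1\ge L/2$ for $2$-cycles on your side, $|S|/2\le|S|-1$ for $|S|=2$ on the paper's. Your version is elementary and constructive. Because you use only the upper inequality of Theorem~\ref{thm:generic_apx} together with $(m-k)\lg(m-k)\le(m-k)\lg m$, you even get additive error $(n-1)/\ln 2$ instead of the paper's two-sided $2n/\ln 2$. The paper's LP formulation buys uniformity: the same Corollary~\ref{cor:LP} also proves Theorem~\ref{thm:equanimity} from the constant point $x_e=n/(2m\alpha)$, where a hands-on construction would need something like Nash-Williams' arboricity theorem. Your Step~3 is the same monotone-weights (Chebyshev) comparison the paper makes. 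One small caveat: you, and implicitly the paper, need $G$ to be loopless. Merely never selecting loops would not suffice, since a vertex whose only in-edges are loops lies in your $S$ but receives no edge of $F_0$.
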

For example, if $G$ has per-edge entropy say $\frac{1}{3} \lg n$, and at least $1 / 3$ of the vertices
have nonzero indegree, then we can save roughly $\frac{1}{18 } n \lg n - 2n / \ln 2$ bits.

Finally, we provide a data structure capable of storing a TREXed unlabelled graph while supporting all operations a classical adjacency list representation offers (plus navigation to inneighbours) in $O(\lg n)$ time (see also \wref{fig:ds-example}).

\begin{restatable}[Ultrasuccinct TREX]{theorem}{ultrasuccincttrex}
\label{thm:ultrasuccinct-trex-directed}
	Given a directed graph~$G$ and spanning tree $T$ in~$G$, we can construct a data structure for a graph isomorphic to $G$ (where the vertex relabeling can be output at construction time) that occupies 
    $H(G-T) + 3n + o(m) + \lg \binom{m}{n} \le 
    H(G-T) + n\lg(m/n) + 4.4427n + o(m)$ bits of space, while supporting the operations $\mathsf{N_{out}}(v, i)$ ($i$th outneighbour of $v$), $\mathsf{N_{in}}(v, i)$ ($i$th inneighbour of $v$),  $\mathsf{indegree}(v)$, and 
    $\mathsf{adjacent}(u, v)$ in $O(\lg n)$ time;
    $\mathsf{outdegree}(v)$ takes $O(1)$ time.
\end{restatable}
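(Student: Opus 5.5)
We construct the data structure directly from the components visible in \wref{fig:ds-example}. First we root $T$ at an arbitrary vertex and ignore edge orientations. We relabel the vertices by their level-order (BFS) rank in $T$; this relabeling is what can be output at construction time. We store the shape of $T$ in LOUDS form, which takes $2n+o(n)$ bits and supports parent, $i$th child, degree and level-order rank in $O(1)$ time. We add a bit vector $D[2..n]$ recording, for each non-root vertex $v$, whether its tree edge points towards the root ($D[v]=1$) or away from it; this is $n$ bits. Together these give the $3n$ term.

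The non-tree edges are stored as in the labelled wavelet-tree representation. We concatenate the out-adjacency lists of $G-T$, in the new labels and in vertex order, into a string $A'$ of length $m'=m-n+1$, and store $A'$ in a wavelet tree. By \wref{lem:wavelet-tree-simple} (compressed variant) this costs $m'\DegreeEntropy(G-T)+o(m)=H(G-T)+o(m)$ bits and supports access, rank and select in $O(\lg n)$ time. The list boundaries go into a bit vector $S'$ with $m'+n$ bits and $n$ ones, stored as a compressed (Raman--Raman--Rao / Elias--Fano) bit vector. This takes $\lg\binom{m'+n}{n}+o(m)\le\lg\binom{m+1}{n}+o(m)$ bits, which we adjust to $\lg\binom mn+O(\lg m)$, absorbed in $o(m)$. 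This gives the claimed total. The second inequality follows from $\lg\binom mn\le n\lg(em/n)=n\lg(m/n)+n\lg e$ and $3+\lg e\approx 4.4427$.

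Each query combines the tree part with the wavelet-tree part. For $\mathsf{outdegree}(v)$, the non-tree part comes from two selects on $S'$. The tree part adds one for the parent edge if $D[v]=1$, plus the number of children $c$ of $v$ with $D[c]=0$. Since children of $v$ occupy a contiguous level-order range, this count is a rank difference on $D$, which is $O(1)$. For $\mathsf{N_{out}}(v,i)$, we list the out-neighbours in the fixed order: tree parent (if $D[v]=1$), then the qualifying children found via select on $D$ for $D=0$ within the child range, then $A'$ entries found by access on the wavelet tree. For $\mathsf{indegree}(v)$, we take $\rankop_v(A')$ and add the tree contributions, symmetrically with $D[v]=0$ and rank of ones in $D$ over the children. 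For $\mathsf{N_{in}}(v,i)$, the tree in-neighbours are handled symmetrically. The non-tree ones use $p=\selop_v(A',j)$ followed by a rank on $S'$ to map position $p$ to its source vertex, all in $O(\lg n)$. For $\mathsf{adjacent}(u,v)$, we check the tree relation (parent/child test and $D$ bit) in $O(1)$. For the non-tree part we compare $\rankop_v$ on $A'$ at the two ends of $u$'s list.

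The main obstacle is the space accounting: we must land exactly on $3n+\lg\binom mn+o(m)$ with no hidden $\Theta(n)$ terms. In particular, the compressed bit vector $S'$ must have redundancy $o(m)$ and must be sized via $m'$ rather than $m$. All rank/select support on $D$ and $S'$ must be $o(n)$ extra bits. The operation logic itself is routine.
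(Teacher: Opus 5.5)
Your proposal is correct and follows the paper's proof essentially step for step. LOUDS for $T$ plus the direction bitvector $D$ give the $3n$ term, the wavelet tree on $A'$ gives $H(G-T)+o(m)$, and $S'$ (length $m+1$, $n$ ones) is bounded via $\lg\binom{m+1}{n}\le\lg\binom{m}{n}+O(\lg m)$ and $\lg\binom{m}{n}\le n\lg(m/n)+n\lg e$, with the same query logic. The one slip is the Elias--Fano alternative for $S'$: it costs about $n\lg(m/n)+2n$ bits, which exceeds the allowed $n\lg(m/n)+n\lg e+o(m)$ by $\Theta(n)$ and breaks the $4.4427n$ constant, so you need an entropy-compressed bitvector with $o(m)$ redundancy as in \wref{lem:compressed-bit-vectors}.
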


Lastly, we show how to combine the above data structure
with \emph{twin removal}, a further compression method for unlabelled graphs where we identify all vertices with identical neighbourhood.
Details are given in \wref{sec:twin-removal}.

\subsection{Hardness of exactly solving MINETREX.}
\label{sec:exact-hardness}

While we defer the proof of \wref{thm:MINETREX_hard} and \wref{thm:U_MINETREX_hard} to \wref{sec:hardness}, to get a taste of the reduction, we informally sketch the key idea
for the proof of the simpler statement that \emph{exact} MINETREX is \NP-hard.

Recall that in the \textsc{Exact Cover By 3-Regular 3-Sets (X3C)} problem, we are given a universe $\{1, \ldots, 3n\}$, and a family of sets $S_1, \ldots, S_{3n} \subseteq [3n]$, such that every set has size 3,
and every element belongs to exactly 3 sets.
The goal is then to select $n$ sets so that every element in the universe is
contained within precisely one of the selected sets, or to determine that this
is impossible.

This problem is well-known to be $\NP$-hard~\cite{Gonzalez1985}, and we can show the $\NP$-hardness
of exactly solving MINETREX by reducing from this problem. It will be, however, more
convenient to formulate X3C in a graph-theoretic way for the purpose of this reduction.
Observe that we can represent our instance by a bipartite graph of vertex set $A \cup B$ (the two bipartitions), where $A$ represents the sets, and $B$ represents the universe. We add an edge $u \to v$ between
the vertex $u \in B$ representing $i$ and the vertex $v \in A$ representing $S_j$ if and only if $i \in S_j$.
The goal is now to select $n$ of the vertices in $A$, so that every
vertex in $B$ has exactly one selected neighbour.

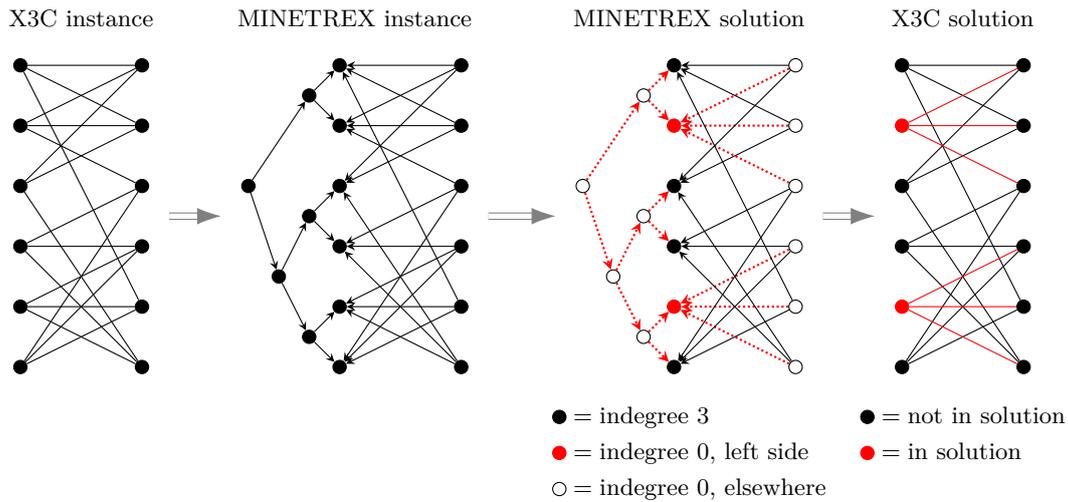
\begin{figure}
\centering
\begin{tikzpicture}[scale=0.8]

\newcommand\DrawRightPoints{
    \node (B0) at (2, 0){};
    \node (B1) at (2, 1){};
    \node (B2) at (2, 2){};
    \node (B3) at (2, 3){};
    \node (B4) at (2, 4){};
    \node (B5) at (2, 5){};
}
\newcommand\DrawLeftChosenPoints{
    \node (A1) at (0, 1){};
    \node (A4) at (0, 4){};
}
\newcommand\DrawLeftUnchosenPoints{
    \node (A0) at (0, 0){};
    \node (A2) at (0, 2){};
    \node (A3) at (0, 3){};
    \node (A5) at (0, 5){};
}
\newcommand\DrawGraphPoints{
    \DrawLeftChosenPoints{}
    \DrawLeftUnchosenPoints{}
    \DrawRightPoints{}
}
\newcommand\DrawTreePoints{
    \node (X) at (-.5, 0.5){};
    \node (Y) at (-.5, 2.5){};
    \node (Z) at (-.5, 4.5){};
    \node (S) at (-1, 1.5){};
    \node (T) at (-1.5, 3){};
}

\newcommand\DrawChosenGraphEdges{
    \draw (B0) -> (A1);
    \draw (B1) -> (A1);
    \draw (B2) -> (A1);
    \draw (B3) -> (A4);
    \draw (B4) -> (A4);
    \draw (B5) -> (A4);
}

\newcommand\DrawUnchosenGraphEdges{
    \draw (B5) -> (A5);
    \draw (B0) -> (A3);
    \draw (B0) -> (A2);
    \draw (B1) -> (A0);
    \draw (B1) -> (A5);
    \draw (B2) -> (A2);
    \draw (B2) -> (A0);
    \draw (B3) -> (A0);
    \draw (B3) -> (A2);
    \draw (B4) -> (A5);
    \draw (B4) -> (A3);
    \draw (B5) -> (A3);
}

\newcommand\DrawGraphEdges{
    \DrawChosenGraphEdges{}
    \DrawUnchosenGraphEdges{}
}

\newcommand\DrawTreeEdges{
    \draw (X) -> (A0);
    \draw (X) -> (A1);
    \draw (Y) -> (A2);
    \draw (Y) -> (A3);
    \draw (Z) -> (A4);
    \draw (Z) -> (A5);
    \draw (S) -> (X);
    \draw (S) -> (Y);
    \draw (T) -> (Z);
    \draw (T) -> (S);
}
\scoped[graph, local bounding box=init_inst]{
    \DrawGraphPoints{}
    \DrawGraphEdges{}
}

\scoped[digraph, xshift=5.25cm, local bounding box=red_inst]{
    \DrawGraphPoints{}
    \DrawTreePoints{}
    \DrawGraphEdges{}
    \DrawTreeEdges{}
}

\scoped[digraph, xshift=10.75cm, local bounding box=red_sol]{
    \scoped[every node/.append style={fill=white}]{
        \DrawTreePoints{}
        \DrawRightPoints{}
    }
    
    \scoped[every node/.append style={red, fill=red}]{
        \DrawLeftChosenPoints{}
    }
    
    \DrawLeftUnchosenPoints{}
    
    \scoped[chosen edge]{
        \DrawChosenGraphEdges{}
        \DrawTreeEdges{}
    }
    \DrawUnchosenGraphEdges{}
}

\scoped[graph, xshift = 14.5cm, local bounding box=init_sol] {
    \scoped[every node/.append style={red, fill=red}]{
        \DrawLeftChosenPoints{}
    }
    
    \DrawLeftUnchosenPoints{}
    \DrawRightPoints{}
    \DrawUnchosenGraphEdges{}
    
    \scoped[every path/.append style={red}]{
        \DrawChosenGraphEdges{}
    }
}

\node[above=0.3cm of init_inst.north] {\footnotesize X3C instance};

\node[above=0.3cm of red_inst.north] {\footnotesize MINETREX instance};

\node[above=0.3cm of red_sol.north] {\footnotesize MINETREX solution};

\node[above=0.3cm of init_sol.north] {\footnotesize X3C solution};

\node[below=0.3cm of red_sol.south, align=left]{\footnotesize $\vcenter{\hbox{\tikz{\node[draw,circle,minimum size=5pt,inner sep=0pt,fill=black]at (0, 0){};}}} = \text{indegree 3}$\\
\footnotesize$\vcenter{\hbox{\tikz{\node[draw=red,circle,minimum size=5pt,inner sep=0pt,fill=red]at (0, 0){};}}} = \text{indegree 0, left side}$\\
\footnotesize$\vcenter{\hbox{\tikz{\node[draw,circle,minimum size=5pt,inner sep=0pt,fill=white]at (0, 0){};}}} = \text{indegree 0, elsewhere}$
};

\node[below=0.3cm of init_sol.south, align=left]{ \footnotesize$\vcenter{\hbox{\tikz{\node[draw=black,circle,minimum size=5pt,inner sep=0pt,fill=black]at (0, 0){};}}} = \text{not in solution}$\\
\footnotesize $\vcenter{\hbox{\tikz{\node[draw=red,circle,minimum size=5pt,inner sep=0pt,fill=red]at (0, 0){};}}} = \text{in solution}$};

\node[right=0cm of init_inst.east] (Q){};
\node[left=0cm of red_inst.west] (W){};

\node[right=0cm of red_inst.east] (E){};
\node[left=0cm of red_sol.west] (R){};

\node[right=0cm of red_sol.east] (T){};
\node[left=0cm of init_sol.west] (Y){};

\draw[macro edge] (Q) -> (W);
\draw[macro edge] (E) -> (R);
\draw[macro edge] (T) -> (Y);
\end{tikzpicture}
\caption{Reduction of X3C to MINETREX (the left-hand-side bipartition represents the subcollections, and the right-hand-side bipartition, the universe set).}\label{fig:thing}
\end{figure}

We picture our reduction in \wref{fig:thing}. We hang an arbitrary binary tree on the left part of the graph such that $A$ forms its leaves, where all edges are oriented towards the leaves. To better understand the properties of entropy, we recall the following lemma.
Recall $H(d_1,\ldots,d_n) = \sum_{v\in V} d_v \log_2(m/d_v)$.

\begin{lemma}[{{Domination, see \eg, \cite[Lemma 1]{CardinalFioriniJoret2008}}}]\label{lem:domination}
    Suppose that $x=(x_1, \ldots, x_n)$ and $y=(y_1, \ldots, y_n)$ are sequences of non-negative real numbers with $x_1+\cdots+x_n = y_1+\cdots+y_n$, and let $x_1 \geq \cdots \geq x_n$ and $y_1 \geq \cdots \geq y_n$. 
    We say $x$ \emph{dominates} $y$ if $x_1 + \cdots + x_i \geq y_1 + \cdots + y_i$ for all $i \in [n]$,
    and $x$ \emph{strictly dominates} $y$ if additionally $x\ne y$.
    
    If $x$ strictly dominates $y$, then $H(x_1, \ldots, x_n) < H(y_1, \ldots, y_n)$.
\end{lemma}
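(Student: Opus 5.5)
The plan is to reduce the statement to a strict Karamata-type inequality for the strictly convex function $f(t) = t\lg t$ (with $f(0)=0$), and to prove that inequality by Abel summation. Write $M = \sum_i x_i = \sum_i y_i$. Then
\[
  H(x_1,\ldots,x_n) \wwrel= M\lg M - \sum_i f(x_i),
\]
and similarly for $y$. So it suffices to show $\sum_i f(x_i) > \sum_i f(y_i)$ whenever $x$ strictly dominates $y$. Here $f$ is continuous on $[0,\infty)$ and strictly convex, since $f''(t) = 1/(t\ln 2) > 0$ for $t>0$.

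\textbf{Trailing zeros.} First I remove the indices $i$ with $x_i = y_i = 0$; they contribute nothing to either side. Both sequences are sorted decreasingly, so such indices form a common suffix. Removing them preserves the equal total sum and all prefix-sum inequalities. After this step, each remaining index has $x_i=y_i>0$ or $x_i \ne y_i$.

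\textbf{Abel summation.} For each remaining $i$, define the chord slope
\[
  c_i = \frac{f(x_i)-f(y_i)}{x_i-y_i} \quad \text{if } x_i\ne y_i, \qquad c_i = f'(x_i) \quad \text{if } x_i=y_i>0 .
\]
Then $f(x_i)-f(y_i) = c_i(x_i-y_i)$ for every $i$. Because $x$ and $y$ are both nonincreasing, the endpoint pairs $\{x_i,y_i\}$ move weakly leftward as $i$ grows. Convexity (chord slopes are monotone in both endpoints) then gives $c_1\ge c_2\ge\cdots\ge c_n$.

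Let $X_i, Y_i$ denote the prefix sums, so $X_n=Y_n$ and $X_i\ge Y_i$ by domination. Summation by parts gives
\[
  \sum_i \bigl(f(x_i)-f(y_i)\bigr) \wwrel= \sum_{i<n} (c_i-c_{i+1})(X_i-Y_i) \;\ge\; 0 .
\]

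\textbf{Strictness.} This is the main point; the rest is bookkeeping. Since $x\ne y$, there is a smallest $j$ with $X_j>Y_j$; at that index $x_j>y_j$. Let $k>j$ be the smallest index with $X_k=Y_k$, which exists because $X_n=Y_n$; at that index $x_k<y_k$. Then $X_i>Y_i$ for all $j\le i<k$. Moreover,
\[
  x_j \;>\; y_j \;\ge\; y_k \;>\; x_k ,
\]
so the chord over $[y_j,x_j]$ and the chord over $[x_k,y_k]$ are nondegenerate intervals, the first lying weakly to the right and with its right endpoint strictly larger. Strict convexity of $f$ then gives $c_j>c_k$. Hence some term $c_i-c_{i+1}$ with $j\le i<k$ is strictly positive, and it multiplies a strictly positive $X_i-Y_i$. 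This makes the sum strictly positive, which proves $H(x)<H(y)$.

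The delicate parts are the boundary value $t=0$, where $f'$ does not exist (this is why the trailing common zeros are removed first, and why $c_i$ is only defined through $f'$ at positive arguments), and checking that the chord-slope comparison is genuinely strict.
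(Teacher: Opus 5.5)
Your proof is correct, and it necessarily takes a different route from the paper, which gives no proof of this lemma. The paper only cites Cardinal, Fiorini and Joret; the statement is the strict Schur-concavity of entropy, i.e.\ the strict Karamata/Hardy--Littlewood--P\'olya majorization inequality for the strictly convex function $t\lg t$.

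The textbook route to that inequality goes through a structural lemma: if $x$ majorizes $y$, then $y$ arises from $x$ by finitely many transfers, each moving mass from a larger entry to a smaller one (equivalently, $y=Px$ with $P$ doubly stochastic). Each nontrivial transfer strictly increases $H$. For the integer indegree sequences to which the paper applies the lemma, this has a natural combinatorial reading: moving an edge head from a high-indegree vertex to a lower-indegree one raises entropy.

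Your Abel-summation argument avoids that structural lemma entirely:
\begin{itemize}
\item Monotone chord slopes combined with nonnegative prefix-sum gaps give the weak inequality directly.
\item Strictness comes from localizing to the first block $j\le i<k$ with positive gap. There, $x_j>y_j\ge y_k>x_k$ forces $c_j>c_k$.
\end{itemize}
The cost is the bookkeeping at $t=0$, and you handle it correctly by discarding the common zero suffix. This also covers one detail you left implicit: when comparing $c_i\ge c_{i+1}$ via an intermediate pair such as $(x_{i+1},y_i)$, that pair cannot be $(0,0)$. If it were, then $x_{i+1}=y_{i+1}=0$, so index $i+1$ would already have been removed.
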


In particular, note that the total number of edges in any subgraph of our instance resulting from the elimination of a spanning tree is the same. Furthermore, note that the maximum indegree of any vertex is 4, and the vertices with indegree 4 have outdegree 0, and thus after eliminating a spanning tree have degree at most 3. Thus the most dominant (in the sense of \wref{lem:domination}) sequence of indegrees has form $(3, \ldots, 3, 0, \ldots, 0)$. The only way this can happen is if (i) we extract every edge completely within the added tree, and (ii) every vertex in the left hand side has either exactly 1 or exactly 4 edges extracted. Suppose $S$ is the subset of vertices on the left side where we take 4 neighbouring edges. Note that since the subtree we extract is a spanning tree, it must be the case that the root of the tree we added in the reduction is connected to every vertex on the right side. This can only happen via the vertices in $S$; in particular, by a simple calculation we see that every vertex on the right has exactly one vertex in $S$ as a neighbour.

Thus, we have shown that the optimal possible solution to the 
reduced MINETREX instance \emph{must}
correspond to a solution to the
original X3C instance, and vice
versa. This implies the hardness of exactly solving
MINETREX (and the proof for U-MINETREX is quite similar). Our proofs for \wref{thm:MINETREX_hard} and \wref{thm:U_MINETREX_hard} use essentially the same reduction, but starting from a gap version of X3C (actually exact cover with different
regularity parameters, not 3). The soundness analysis of our
reduction is also more involved,
since we must show that we can
decode an approximate solution of (U-)MINETREX.

\subsection{Open Problems}

Since all components of tree extraction are relatively simple,
it is conceivable to work well in practice.
We reserve a proper empirical investigation of the potential of tree extraction for a future work.

While we focused our attention entirely on trees,
the basic idea of tree extraction easily extends beyond trees:
extract some subgraph $G'$ from $G$ that can be stored using a constant number of bits per edge and use the (implicit) relabeling for $G-G'$; if we additionally can efficiently find a maximum size subgraph (in terms of number of edges),
we may further improve the space savings.

\paragraph{Outline}
The remainder of this paper is organised as follows.
\wref{sec:approximation} describes our MINETREX approximation algorithm and the generic analysis framework for entropy minimisation.
\wref{sec:hardness} gives the inapproximability reductions.
\wref{sec:data-structures} describes our ultrasuccinct TREX data structure that adds efficient query support to unlabelled graphs;
\wref{sec:twin-removal} extends this construction by first removing twins in a graph.
In \wref{sec:good-graphs}, we prove that large classes of graphs indeed allow for significant space savings via tree extraction.
In the appendix, we give details of the used \NP-hard gap problem from~\cite{GuruswamiTrevisan2005} (\wref{app:gt05}), show that there are always $\subseteq$-maximal optimal solutions for MINETREX (\wref{app:tree-not-forest}), and discuss instance-optimal compression for the copy model (\wref{app:copy-model}).

\section{Greedy Approximation}
\label{sec:approximation}
\label{sec:greedy}

In this section we prove \wref{thm:generic_apx}, as well as \wref{cor:U_MINETREX_apx}, thus completing our positive results for (U-)MINETREX.
We need the following easy lemma.
\begin{lemma}\label{lem:simpleBound}
    For $d \geq x \geq 0$, we have $(d - x) (\lg d - \lg (d - x)) \leq \frac{x}{\ln 2}$.
\end{lemma}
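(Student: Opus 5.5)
The plan is to handle the boundary cases first and then reduce the claim to the elementary inequality $\ln(1+t)\le t$. If $x=d$, the left-hand side is $0\cdot(\ldots)=0$ by the paper's convention $0\lg 0=0$, and the right-hand side $d/\ln 2\ge 0$, so the claim holds. If $x=0$, both sides vanish. So we may assume $0\le x<d$, which makes $d-x>0$ and every logarithm finite.

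For the main case, we rewrite the left-hand side in natural logarithms:
\[
	(d-x)\bigl(\lg d-\lg(d-x)\bigr) \wwrel= \frac{d-x}{\ln 2}\,\ln\frac{d}{d-x}.
\]
It therefore suffices to show $(d-x)\ln\frac{d}{d-x}\le x$. Next we set $y=d-x>0$ and write $\frac{d}{y}=1+\frac{x}{y}$. Applying $\ln(1+t)\le t$ with $t=x/y\ge 0$ gives
\[
	y\ln\Bigl(1+\frac{x}{y}\Bigr) \wwrel\le y\cdot\frac{x}{y} \wwrel= x,
\]
which is exactly the required bound.

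None of these steps is a real obstacle. The only points needing care are the degenerate case $x=d$, where we rely on the convention $0\lg 0=0$ fixed earlier in the paper, and the justification of $\ln(1+t)\le t$ for $t\ge 0$. The latter follows from concavity of $\ln$, or from noting that $t-\ln(1+t)$ vanishes at $t=0$ and has nonnegative derivative $t/(1+t)$.
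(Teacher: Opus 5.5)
Your proof is correct and matches the paper's argument. The paper applies $1+t\le e^{t}$ with $t=\ln\frac{d}{d-x}$ to get $\ln d-\ln(d-x)\le \frac{d}{d-x}-1$, which is your inequality $\ln(1+x/y)\le x/y$ in exponentiated form, and it likewise handles $x=d$ separately before multiplying through by $d-x$ and dividing by $\ln 2$.
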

\begin{proof}
Recall that $t + 1 \leq e^t$ for all $t$.
There is nothing to prove for $d = x$, so assume $d > x$.
Let $t = \ln(d / (d - x)) \geq 0$. We then find that
\[
\ln d - \ln(d - x) \wwrel\leq \frac{d}{d - x} - 1,
\]
which, after rearranging and dividing by $\ln 2$, implies our result.
\end{proof}

We restate \wref{thm:generic_apx} for the reader's convenience.

\genericapx*

\begin{proof}
    First we apply some standard transformations on the entropy: note that
    \begin{equation}\label{eq:optent_eq}
    \begin{aligned}[b]
    		\optent(x_1, \ldots, x_n)
    	&\wwrel= 
    		-\sum_i (d_i - x_i) \lg\left(\frac{d_i - x_i}{m - k}\right)
    \\	&\wwrel=
         	\sum_i (d_i - x_i) \log(m - k)
        	\bin- \sum_i (d_i - x_i) \lg (d_i - x_i)
    \\	&\wwrel= 
    		(m - k) \log(m- k)
        	\bin- \sum_i (d_i - x_i) \lg (d_i - x_i).
    \end{aligned}
    \end{equation}
    By \wref{lem:simpleBound}, and by simple monotonicity, for every $i \in [n]$ we find that
    \[
    0 \wwrel\leq (d_i - x_i) \left(\lg d_i - \lg (d_i - x_i)\right) \wwrel\leq \frac{x_i}{\ln 2}.
    \]
    By summation over all $i$, and since $\sum_i x_i = k$, we have
    \[
    0 \wwrel\leq \sum_i (d_i - x_i) \left(\lg d_i - \lg (d_i - x_i)\right) \wwrel\leq \frac{k}{\ln 2},
    \]
    or equivalently
    \[
    - \sum_i (d_i - x_i) \lg d_i
    \wwrel\leq
    - \sum_i (d_i - x_i) \lg (d_i - x_i)
    \wwrel\leq
    \frac{k}{\ln 2} \bin- \sum_i (d_i - x_i) \lg d_i.
    \]
    Finally, we can recognise $\optlin$ in this expression to deduce
    \begin{align*}
    - \left(\sum_i d_i \lg d_i\right) + \optlin(x_1, \ldots, x_n)
    &\wwrel\leq
    - \sum_i (d_i - x_i) \lg (d_i - x_i)
    \\&\wwrel\leq
    \frac{k}{\ln 2}  - \left(\sum_i d_i \lg d_i\right) + \optlin(x_1, \ldots, x_n).
    \end{align*}
    Apply this to the expression found in~\eqref{eq:optent_eq} to find that
    \begin{equation}\label{eq:result}
    \begin{aligned}[b]
    	&
    		(m - k) \log(m - k) \bin- \sum_i d_i \lg d_i \bin+ \optlin(x_1, \ldots, x_n)
    \\	&\wwrel\leq 
    		\optent(x_1, \ldots, x_n)
    \\	&\wwrel\leq
    		\frac{k}{\ln 2} \bin+ (m - k) \log(m - k) 
    		\bin- \sum_i d_i \lg d_i \bin+ \optlin(x_1, \ldots, x_n),
    \end{aligned}
    \end{equation}
    which is exactly the bound we wanted to prove.

    Now, let us show that minimising $\optlin$ over $\mathcal{C}$ exactly implies being able to minimise
    $\optent$ over $\mathcal{C}$ with additive error at most $k / \ln 2$. For what follows, let
    \begin{align*}
    C &\wwrel= (m - k) \lg (m - k) - \left(\sum_i d_i \lg d_i \right)\\
    x^* &\wwrel= \arg \min_{x \in \mathcal{C}} \optent(x) \\
    \hat{x} &\wwrel= \arg \min_{x \in \mathcal{C}} \optlin(x).
    \end{align*}
    It suffices to prove that $\optent(\hat{x}) \leq \optent(x^*) + k / \ln 2$. But indeed,
    \begin{align*}
    \optent(\hat{x})
    & \wwrel\leq \frac{k}{\ln 2} + C + \optlin(\hat{x})  &&\text{\eqref{eq:result}} \\
    & \wwrel\leq \frac{k}{\ln 2} + C + \optlin(x^*) && \text{(Optimality of $\hat{x}$)} \\
    & \wwrel\leq \frac{k}{\ln 2} + \optent(x^*) && \text{\eqref{eq:result}}\iflipics{\qedhere}{}
    \end{align*}
\end{proof}

Next, let us prove that we can solve U-MINETREX with additive error $(m + n - 1) / \ln 2$.

\UMINETREXAPX*
\begin{proof}
    In U-MINETREX, we are given an undirected graph $G$ and must (i) orient every edge, and
(ii) delete a spanning subtree so that the indegree entropy of the resulting graph is as small as possible.
An equivalent formulation is the following: given a \emph{directed} graph $G$ where the edges
come in pairs of form $\{ (i, j), (j, i) \}$, (i) delete one edge from every pair, and (ii) delete
a spanning subtree from the resulting graph so that the indegree entropy of the resulting graph
is as small as possible.

This problem now readily falls within the framework of problems covered by \wref{thm:generic_apx}.
Let $d_1, \ldots, d_n$ be the initial indegrees of the vertices of $G$. Let $\mathcal{G}$ denote
the set of subgraphs of $G$ which consist of (i) exactly one edge from each pair $\{ (i, j), (j, i) \}$
in $G$, together with (ii) a disjoint spanning tree of $G$.
Let $\mathcal{C}$ consist of the indegree sequences of graphs from $\mathcal{G}$. Note that every such graph
contains exactly $m + n - 1$ edges, and hence the sum of every vector in $\mathcal{C}$ is $m + n - 1$. U-MINETREX is now minimising $\optent$ over $\mathcal{C}$.\footnote{To be precise, we must also
find the subgraph $H \in \mathcal{G}$ that gives rise to the minimising indegree sequence, but
this is easy to do in our case.}

Applying \wref{thm:generic_apx}, we see that in order to solve U-MINETREX with additive error $(m + n - 1) / \ln 2$,
it is sufficient to find a graph from $\mathcal{G}$ with minimum cost, where the cost of an edge
$(i, j)$ is given by $\lg d_j$. Consider a subgraph $H \in \mathcal{G}$ and pair of edges $\{ (i, j), (j, i) \}$. In the case that exactly one of $(i, j)$ and $(j, i)$ belongs to $H$, it is
easy to see that the graph otherwise agreeing with $H$ but containing the other edge among $(i, j)$ and
$(j, i)$ also belongs to $\mathcal{G}$. Since at least one of these edges must belong to $H$,
we see that in the graph that minimises $\optlin$, we always take the edge with
smaller cost from among these two (and perhaps the other one).

We thus see that it is safe to delete the lower cost edge from the pair $\{ (i, j), (j, i) \}$~-- that is,
the edge pointing towards $\arg \min_{k \in \{i, j\}} d_k$. After
this, all that there remains to do is to find a minimum cost spanning tree in the remaining graph.
As detailed in the proof of \wref{cor:MINETREX_apx}, this can be done in $O(n + m)$ time.
\end{proof}

\section{Hardness of Approximation}
\label{sec:hardness}

In this section, we show, if $\P \not= \NP$, there exists a constant $\delta > 0$ such that {(\textsc{U-})}\TheOptProblem is impossible to approximate in polynomial time up to an additive error of $\delta n$ bits, where $n$ is the number of vertices in the graph instance. To do so, we aid ourselves with two preliminary problems which we show to be $\NP$-hard; the second will be used as a reduction tool for {(\textsc{U-})}\TheOptProblem. We define those problems below.

First, a regular version of gap exact hitting set. Often exact hitting set is formulated as a constraint satisfaction problem: we are given a set of Boolean variables, and a set of constraints on those variables. A constraint is satisfied if and only if exactly one of the variables it is applied to is set to true, and the rest to false (``1-in-$k$ SAT''). 
We prefer, for linguistic convenience, an equivalent formulation based on bipartite graphs. Transform the variables from the previous formulation into the vertices on the left side of a graph, and the constraints into the vertices on the right side; add edges for all variable-constraint incidences.

We are thus given a bipartite graph $(A \cup B, E)$, and our goal will be to ``hit'' vertices from $B$ by choosing some vertices from $A$. Our goal is to hit as many vertices in $B$ as possible; however, a vertex is only ``hit'' if \emph{exactly one} of its neighbours is in the selected solution (cf.\ ``exact hitting set''). 
Note that $(r, k)$-biregularity of the bipartite graph corresponds to each variable in the constraints satisfaction instance being contained within exactly $r$ constraints, and each constraint containing exactly $k$ variables.
Note further, that a perfect solution (where all of $B$ is hit) corresponds to an \emph{exact cover} of $B$;
$A$ here is a collection of $r$-subsets of $B$. This (arguably more widely known) formulation suffices for the non-approximate hardness (cf.\ \wref{sec:exact-hardness}).

\begin{definition}
    In the \thmemph{$\alpha$-approximate $(r, k)$-biregular exact hitting set} problem (abbreviated $(\alpha, r, k)$-APX-EHS), we are given an undirected $(r, k)$-biregular bipartite graph $G = (A \cup B, E)$. A prospective solution
    to this problem is a subset $S \subseteq A$. We define the \thmemph{value $\xi(S)$} of a solution $S$ to be the fraction of vertices in $B$ which have exactly one neighbour in $S$.
    The goal of the problem is to decide whether there exists a solution with value 1, or whether none exists even with value at least $1 / \alpha$.
\end{definition}

\begin{prob}
  \probtitle{$(\alpha, r, k)$-APX-EHS}
  \probrow{Input}{$G = (A \cup B, E)$, a $(r, k)$-biregular bipartite graph.}
  \probrow{Solution}{A subset $S \subseteq A$.}
  \probrow{Value}{$\displaystyle\xi(S) = \frac{1}{|B|}
        | \{ b \in B : | S \cap N(b) | = 1 \} |$.}
  \probrow{Yes-inst.}{There exists a solution $S$ with $\xi(S) = 1$.}
  \probrow{No-inst.}{All solutions $S$ have $\xi(S) < 1 / \alpha$.}
\end{prob}

Reverting to the constraint satisfaction view, in this problem we are given an instance, and must decide whether it is perfectly solvable, or we cannot even satisfy a $1 / \alpha$ fraction of the constraints in the instance.
Guruswami and Trevisan~\cite{GuruswamiTrevisan2005} prove that a variant of this problem, namely one where the $r$-regularity of $A$ is omitted, is \NP-hard for every $1 \leq \alpha < e$ and for some $k$ depending only on $\alpha$. A careful reading of their proof shows that their reduction only produces instances with this $r$-regularity property, where $r$ again depends only on $\alpha$. Hence we deduce the following.

\begin{theorem}[{\cite{GuruswamiTrevisan2005}, Theorem 10}]\label{thm:base}
    Let $e$ denote the base of the natural logarithm.
    For any $\alpha < e$, there exist constants $r, k$ such that $(\alpha, r, k)$-APX-EHS is NP-hard.
\end{theorem}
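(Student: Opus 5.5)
This statement is imported from Guruswami and Trevisan~\cite{GuruswamiTrevisan2005} and not reproved from scratch. The plan is therefore to re-trace their reduction for \textsc{Gap 1-in-$k$ SAT} (their Theorem~10) and check that the bipartite instances it outputs are regular on the variable side $A$ as well. They already prove that for every $\alpha<e$ there is a $k=k(\alpha)$ for which the problem is \NP-hard: it is hard to distinguish instances where every constraint can be satisfied exactly once from instances where fewer than a $1/\alpha$ fraction can. So completeness and soundness can be quoted verbatim, and the only new content is the structural claim that every variable occurs in exactly $r=r(\alpha)$ constraints.

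\textbf{Step 1: recall the reduction.} It starts from a gap Label Cover (or gap Max-CSP) instance. The instance is first regularised, which is standard via degree-reduction and expander replacement, in the style of Feige/Dinur regular label cover. Each vertex is replaced by a block of Boolean variables indexed by labels. Constraints are then generated by a random-looking but explicit design: for each edge of the label cover and each choice of a structured $k$-subset of label variables, a 1-in-$k$ constraint is created. These designs are typically built as all $k$-tuples in a product or group structure. Soundness uses the fact that a random hitting configuration hits exactly one with probability about $(1-1/k)^{k-1}\to 1/e$; this is where the threshold $\alpha<e$ comes from.

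\textbf{Step 2: check regularity.} Since the label cover instance is biregular and every label has the same role under the symmetric constraint template, each variable lies in the same number of constraints. That number is the label-cover degree times the number of template constraints through a fixed coordinate, and both depend only on $\alpha$ through $k$ and the label-set size. If some step of the construction is only approximately regular, I would repair it by duplication. Take $\mathrm{lcm}$ copies of each constraint: replicating a constraint $c$ times scales both the hit fraction and the total count by the same factor, so it preserves completeness and the soundness fraction exactly. Replicating each constraint $L/\deg(v)$ times is not uniform when constraints contain variables of different degrees, so in that case I would instead use weighted copies of the whole instance glued by variable-disjoint padding.

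\textbf{Main obstacle.} The difficulty is this padding step. If exact regularity on $A$ does not fall out of the symmetric template, I would add dummy constraints using fresh variables forced to be false in perfect solutions. This must not create low-degree variables, and the dummies must contribute only a negligible fraction of $B$, so that $1/\alpha$ can be shifted to any $\alpha'<\alpha$, which still covers all $\alpha<e$. I expect a careful reading to show the template is already symmetric, making the padding unnecessary.
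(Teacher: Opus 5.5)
Your strategy is the one the paper follows in the appendix: quote Guruswami and Trevisan for completeness, soundness and $k$-uniformity, and only check that the produced instance is regular on the variable side. Your Step~2 formula (degree of a query vertex times the number of template constraints through a fixed coordinate) even has the right shape. The gap is that this check is never carried out. Regularity is asserted on the grounds that ``every label has the same role'', and that rests on a partly inaccurate description of the reduction. So nothing in the proposal rules out the irregularity you then try to repair.

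Two points in your description are wrong. There is no regularisation or expander-replacement step: the starting point is Feige's $p$-prover system, which is already $p$-uniform, $p$-partite and regular, with each query lying in exactly $R/Q=15^{u/2}$ hyperedges. And the constraints are not ``structured $k$-subsets'' of a group design. For every hyperedge $r$ and every $\mathbf{x}\in[p]^{A}$ (with $A=2^u$ the size of the smaller label set) there is one constraint $S_{r\mathbf{x}}=\{(i,q_{ri},b) : x_{\pi_{ri}(b)}=i\}$, over variables $(i,q,b)$ with $q\in Q_i$, $b\in[B]$.

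With the construction in hand, the missing argument is short. We have $(i,q,b)\in S_{r\mathbf{x}}$ iff $q_{ri}=q$ and $x_{\pi_{ri}(b)}=i$. Since $q\in Q_i$ can only occupy coordinate $i$ of a hyperedge ($p$-partiteness), the first condition holds for exactly $R/Q$ values of $r$. For each such $r$, the second condition fixes a single coordinate of $\mathbf{x}$ and leaves $p^{A-1}$ choices, whatever $\pi_{ri}$ looks like. Hence every variable has degree $(R/Q)\,p^{A-1}=15^{u/2}p^{2^u-1}$, which depends only on $\alpha$ through the choice of $p$ and $u$. (The appendix writes $15^u$ here, a harmless slip.) The ``symmetry'' you were hoping for is exactly this: membership depends on one coordinate of $\mathbf{x}$. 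Uniformity, $|S_{r\mathbf{x}}|=A\cdot 2^u=4^u=k$, is the part that needs each $\pi_{ri}$ to be exactly $2^u$-to-$1$.

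You should also drop the contingency plan, because it would not work if it were needed.
\begin{itemize}
\item Duplicating every constraint the same number of times multiplies every variable degree by that number, so it cannot remove irregularity.
\item Duplicating constraints non-uniformly reweights them and changes the soundness fraction.
\item Copies of the whole instance have the same degree profile, so they do not help either.
\item Padding deficient variables with new 1-in-$k$ constraints on fresh variables needs a gadget that keeps perfect instances perfectly solvable and gives the fresh variables degree exactly $r$ too. It also costs only a negligible fraction of constraints if the total degree deficiency is a vanishing fraction of the total degree, which nothing guarantees.
\end{itemize}
As written, the proposal's correctness rests on the expectation in its final sentence. Replacing Steps~1--2 by the explicit count above is what turns it into the paper's proof.
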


For completeness, we walk through the proof of~\cite{GuruswamiTrevisan2005} in \wref{app:gt05}, highlighting the fact that the output instance is $r$-regular for some $r$ that depends only on $\alpha$.

\begin{corollary}
    There exist integers $r_2 \geq 10 k_2$ such that $(2, r_2, k_2)$-APX-EHS is NP-hard.
\end{corollary}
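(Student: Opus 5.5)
The plan is to take the hard instances that \wref{thm:base} provides for $\alpha = 2$ and raise the left-degree by duplicating the right side.

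First, apply \wref{thm:base} with $\alpha = 2$, which is allowed since $2 < e$. This gives constants $r, k$ such that $(2, r, k)$-APX-EHS is \NP-hard. These constants need not satisfy $r \ge 10k$.

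Second, give a polynomial-time reduction from $(2,r,k)$-APX-EHS to $(2, r t, k)$-APX-EHS, with $t = 10k$ (any $t$ with $rt \ge 10k$ works). Given an $(r,k)$-biregular bipartite graph $G=(A\cup B,E)$, build $G'=(A\cup B',E')$:
\begin{itemize}
\item $B'$ consists of $t$ copies $b^{(1)},\dots,b^{(t)}$ of each $b\in B$;
\item each copy $b^{(j)}$ is adjacent to exactly $N_G(b)$.
\end{itemize}
Each copy has degree $k$, and each $a\in A$ has degree $rt$. So $G'$ is $(rt,k)$-biregular, and its size is only a constant factor $t$ larger than $G$.

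Third, check that values are preserved exactly. For any $S\subseteq A$, a copy $b^{(j)}$ has exactly one neighbour in $S$ if and only if $b$ does. Hence
\[
\xi_{G'}(S)=\frac{t\cdot|\{b\in B: |S\cap N(b)|=1\}|}{t|B|}=\xi_G(S).
\]
Both instances have the same solution space, namely subsets of $A$. Therefore yes-instances (some $S$ has value $1$) map to yes-instances, and no-instances (every $S$ has value $<1/2$) map to no-instances.

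Setting $r_2 = rt = 10rk \ge 10k = 10k_2$ and $k_2 = k$ finishes the argument. There is no real obstacle here. The only point needing care is that the gap is stated as a fraction of $|B|$, so uniform duplication of $B$ leaves it unchanged. Duplicating $A$ instead would not work, because it would alter which solutions hit which vertices.
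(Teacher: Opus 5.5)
Your proposal is correct and matches the paper's proof: you make $t$ copies of each vertex of $B$ attached to the same neighbourhood in $A$, which gives an $(rt,k)$-biregular instance with the same solution space and identical values $\xi(S)$. Your choice $t=10k$ is one admissible instance of the paper's condition $d \ge 10k/r$.
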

\begin{proof}
    First, note that integers $r, k$ such that this is true exist by \wref{thm:base}; it is not immediate however that $r \geq 10k$. However, fix some arbitrary $d$, and observe that if $G = (A \cup B, E)$ is an $(r, k)$-biregular bipartite graph, an input to $(\alpha, r, k)$-APX-EHS, then the graph $G' = (A \cup ([d] \times B), F)$
    where
    $F = \{ (a, (i, b)) \mid (a, b) \in E, i \in [d] \}$
    is an instance equivalent to $G$. This is because the set of solutions is the same, and furthermore (as can be seen by a simple calculation) the value of every solution is the same.

    Hence, we see that $(\alpha, r, k)$-APX-EHS reduces to $(\alpha, dr, k)$-APX-EHS for any constant $d$. Taking in particular any $d \geq 10 k / r$ completes the proof.
\end{proof}

It turns out that this problem is somewhat inconvenient for our purposes. Considering again the constraint satisfaction view: an approximate solution to the previous problem satisfies a $1 / \alpha$-fraction of all the constraints. However, each variable may satisfy or fail to satisfy various constraints. We will prefer a formulation of exact hitting set where we \emph{delete} as few variables as possible, as well as all the constraints incident to them, while attempting to reach an instance which is perfectly solvable. This variant concentrates the ``bad'' constraints around a small set of ``bad'' vertices. By analogy to ``almost 3-colouring'' (see, e.g.~\cite{almost3colouring}) we name this problem \emph{Almost Exact Hitting Set}. We now define this problem, in the bipartite graph formulation.

\begin{definition}[$(\beta, r, k)$-\textsc{Almost}-EHS]
    In the \thmemph{$\beta$-almost $(r, k)$-biregular hitting set} problem (abbreviated $(\beta, r, k)$-\textsc{Almost}-EHS), we are given a $(r, k)$-biregular bipartite graph $G = (A \cup B, E)$.
    A prospective solution to the problem is a pair of sets $S, Z \subseteq A$ such that every $b \in B \setminus N(Z)$ has \thmemph{exactly} one neighbour in $S$.
    The value of a solution $(S, Z)$, denoted by $\lambda(S, Z)$, is given by $|Z| / |A|$. We must decide whether there exists a solution with value 0, or not even any solution with value at most $1 / \beta$.
\end{definition}

\begin{prob}
  \probtitle{$(\beta, r, k)$-\textsc{Almost}-EHS}
  \probrow{Input}{$G = (A \cup B, E)$, a $(r, k)$-biregular bipartite graph.}
  
  \probrow{Solution}{Subsets $S, Z \subseteq A$ satisfying, for all $b \in B \setminus N(Z)$, that $| S \cap N(b) | = 1$.
  }
  \probrow{Value}{$\displaystyle\lambda(S, Z) = \frac{|Z|}{|A|}.$}
  \probrow{Yes-inst.}{There exists a solution $(S, Z)$ with $\lambda(S, Z) = 0$.}
  \probrow{No-inst.}{All solutions $(S, Z)$ have $\lambda(S, Z) > 1 / \beta$.}
\end{prob}

It is not difficult to reduce APX-EHS to \textsc{Almost}-EHS~-- indeed, this reduction can be generically applied to any constraint satisfaction problem where every variable is contained within the same number of constraints, and every constraint contains the same number of variables.

\begin{theorem}
    Let $\beta_2 = 2k_2$. Then, $(\beta_2, r_2, k_2)$-\textsc{Almost}-EHS is NP-hard.
\end{theorem}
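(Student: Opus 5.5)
The plan is to reduce $(2, r_2, k_2)$-APX-EHS to $(\beta_2, r_2, k_2)$-\textsc{Almost}-EHS with $\beta_2 = 2k_2$, using the identity map on instances. Given an $(r_2,k_2)$-biregular bipartite graph $G=(A\cup B,E)$, we output $G$ itself. We then have to check completeness and soundness with respect to the two gap definitions.

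\textbf{Completeness.} Suppose some $S$ has $\xi(S)=1$. Then every $b\in B$ has exactly one neighbour in $S$. Hence $(S,\emptyset)$ is a valid \textsc{Almost}-EHS solution with $\lambda(S,\emptyset)=0$.

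\textbf{Soundness.} We argue the contrapositive. Suppose $(S,Z)$ is a solution with $\lambda(S,Z)\le 1/\beta_2$, i.e.\ $|Z|\le |A|/(2k_2)$. We show that $S$ (or a slight modification) has $\xi(S)\ge 1/2$, so the APX-EHS instance is not a no-instance.

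\begin{itemize}
\item Every $b\in B\setminus N(Z)$ has exactly one neighbour in $S$, so $\xi(S)\ge 1-|N(Z)|/|B|$.
\item By biregularity, $|N(Z)|\le r_2|Z|$, and double counting edges gives $r_2|A|=k_2|B|$.
\item Combining these,
\[
\frac{|N(Z)|}{|B|}\wwrel\le \frac{r_2|Z|}{|B|} \wwrel= \frac{k_2|Z|}{|A|}\wwrel\le \frac{k_2}{2k_2}=\frac12 .
\]
\end{itemize}

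This yields $\xi(S)\ge 1/2$. The no-case of APX-EHS requires $\xi<1/2$ for all $S$, so the contrapositive is complete: a no-instance of APX-EHS maps to an instance where every solution has $\lambda>1/\beta_2$.

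\textbf{Main obstacle.} The delicate point is matching the strict and non-strict inequalities at the boundary. The \textsc{Almost}-EHS no-case demands $\lambda>1/\beta_2$, while the APX-EHS no-case gives $\xi<1/\alpha$ with $\alpha=2$. The bound above gives exactly $\xi\ge 1/2$ from $\lambda\le 1/\beta_2$, which negates ``$\xi<1/2$''. So the boundaries align and no slack is needed.

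Had the constants been off, we would instead pick any $\alpha$ slightly below $e$ (still $>2$) in \wref{thm:base}. That would give strict slack. The corollary as stated with $\alpha=2$ already suffices, however.

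The hypothesis $r_2\ge 10k_2$ plays no role here; it is presumably needed later in the reduction to \TheOptProblem.
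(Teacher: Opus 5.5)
Your proposal is correct and is essentially the paper's own proof. Both use the identity reduction from $(2,r_2,k_2)$-APX-EHS, get completeness from $(S,\emptyset)$, and get soundness by combining $|N(Z)|\le r_2|Z|$ with $r_2|A|=k_2|B|$ to obtain $\xi(S)\ge 1/2$; your handling of the strict versus non-strict boundary is also right, and no extra slack is needed.
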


\begin{proof}
    We reduce from $(2, r_2, k_2)$-APX-EHS.

    \begin{description}
    \item[Reduction.] Our reduction is the ``trivial reduction'': it does not change the graph, taking as input the $(r_2, k_2)$-biregular bipartite graph $G$, and returning $G$. Obviously this is in polynomial time.
    
    \item[Completeness.] Let $G$ be a \textsc{Yes}-instance of $(2, r_2, k_2)$-APX-EHS; that is, there exists $S \subseteq A$ with $\xi(S)=1$. In other words, every vertex in $B$ has exactly one neighbour in $S$. This implies that $(S, \emptyset)$ is a valid solution to $(\beta_2, r_2, k_2)$-\textsc{Almost}-EHS. Noting that $\lambda(S, \emptyset) = 0$ implies completeness.

    \item[Soundness.] Suppose $(S, Z)$ is a solution to $(\beta_2, r_2, k_2)$-\textsc{Almost}-EHS, with value at most $1 / \beta_2$ i.e.~$|Z| \leq |A| / \beta_2$. First, observe that by biregularity, $r_2|A| = k_2 |B|$, hence we can deduce that
    \[
    |N(Z)| \wwrel\leq r_2 |Z| \wwrel\leq \frac{r_2 |A|}{\beta_2}
    \wwrel= 
    \frac{k_2 |B|}{2 k_2}
    \wwrel= \frac{|B|}{2}.
    \]
    Now, by definition, every $b \in B \setminus N(Z)$ is hit by $S$ i.e.~has exactly one neighbour in $S$. Hence, $\xi(S) \geq \frac{1}{|B|} |B \setminus N(Z)| \geq \frac{1}{2}$. This completes the soundness proof.
    \end{description}
\end{proof}

Finally, let us give a gap-version of (U-)MINETREX. The main point here is that this gap version is no harder than approximately solving MINETREX with additive error $\delta n$.

\begin{prob}
  \probtitle{$\delta$-MINETREX}
  \probrow{Input}{$G=(V,E)$, a weakly connected directed graph, and $t \in \mathbb{R}_{\geq 0}$.}
  
  \probrow{Solution}{A spanning tree $T \subseteq E$.
  }
  \probrow{Yes-inst.}{There exists a solution $T$ with $H(G-T) \leq t$.}
  \probrow{No-inst.}{All solutions $T$ have $H(G-T) > t + \delta n$.}
\end{prob}

\begin{theorem}
     $\delta$-MINETREX is $\NP$-hard for some small enough $\delta$. Hence it is \NP-hard to approximate MINETREX with additive error $\delta n$.
\end{theorem}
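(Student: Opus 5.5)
We reduce from $(\beta_2, r_2, k_2)$-\textsc{Almost}-EHS, using the same construction as the exact-hardness sketch in \wref{sec:exact-hardness}, now with regularity parameters $(r_2,k_2)$ instead of $(3,3)$.

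\textbf{Construction.} Given an $(r_2,k_2)$-biregular bipartite $G=(A\cup B,E)$, orient every edge from $B$ to $A$, so each $a\in A$ gets indegree $r_2$. Hang a rooted binary tree $T_0$ whose leaves are exactly $A$, with all edges oriented towards the leaves. Each $a\in A$ then has indegree $r_2+1$, internal tree vertices have indegree $1$, and $B$-vertices have indegree $0$. The total number of vertices is $N=|A|+|B|+O(|A|)=\Theta(|A|)$, since $|B|=r_2|A|/k_2$ with $r_2,k_2$ constants. The threshold $t$ is the entropy of the ideal sequence, namely $r_2$ at every $a\in A$ and $0$ elsewhere, with the edge total $M-(N-1)$ fixed.

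\textbf{Completeness.} Given an exact hitting set $S$, extract all edges of $T_0$. For $a\in S$ also extract all $r_2$ edges from $N(a)$; for $a\notin S$ extract nothing further. Exactness makes this a spanning tree: every $b$ hangs off exactly one $a\in S$, and there are no cycles. The remaining indegrees are $r_2+1-1-r_2=0$ on $S$ and $r_2$ on $A\setminus S$. All remaining mass sits in equal blocks of size $r_2$, which is the most dominant feasible configuration. Hence $H(G-T)=t$. I will double-check that the degree multiset really matches $t$ as defined, e.g.\ by setting $t$ to this computed value directly, since $|S|=|B|/k_2$ is forced.

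\textbf{Soundness, the main obstacle.} Given any spanning tree $T$ with $H(G-T)\le t+\delta N$, I must decode a solution $(S,Z)$ with $|Z|\le |A|/\beta_2$. Let $x_a$ be the number of extracted edges into $a$, and call $a$ \emph{clean} if $x_a\in\{1,r_2+1\}$ and its $T_0$ in-edge is extracted. Every tree edge not of this ideal form creates a deviation from the dominant sequence. Several kinds of deviation are possible: an extracted edge lands with $x_a\notin\{1,r_2+1\}$, a $T_0$ edge is unused (forcing an extra $B\to A$ edge elsewhere), or a remaining indegree falls strictly between $0$ and $r_2+1$.

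The key quantitative step is a stability version of \wref{lem:domination}: each bad vertex costs at least a constant $c>0$ bits of entropy over $t$. I would prove this via the concavity of $d\mapsto d\lg d$. Moving a sequence toward the block form $(r_2,\ldots,r_2,0,\ldots)$, a vertex with remaining degree $d\notin\{0,r_2\}$ (or $d=r_2+1$) loses at least a fixed positive amount in $\sum d\lg d$, because degrees are bounded by $r_2+1$, so strict convexity yields a uniform gap. I expect the careful bookkeeping here to be the hardest part, since extracting an edge both raises the $0$-count and lowers others. I would compare against the optimum via an exchange argument, charging each defect a constant $c(r_2)$.

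This gives at most $\delta N/c$ bad vertices in $A$. Put these bad vertices, together with the $A$-endpoints of tree edges touching badly behaved $b$'s, into $Z$; this set has size $O(\delta N/c)$. Set $S=\{a \text{ clean}: x_a=r_2+1\}$. For $b\notin N(Z)$, connectivity of $T$ forces $b$ to attach to some $a$. All its neighbours are clean, so any extracted edge $b\to a$ puts $a\in S$ with all of $a$'s in-edges used, and acyclicity through the root forbids two such $a$. Hence $b$ has exactly one neighbour in $S$. Choosing $\delta$ small enough that $O(\delta N/c)\le |A|/\beta_2$ completes soundness.

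The final claim follows because an algorithm with additive error $\delta n$ decides $\delta$-MINETREX.
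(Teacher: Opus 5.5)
Your construction and completeness argument are the paper's. Two slips: $|S|=|B|/r_2$, not $|B|/k_2$; and the ideal sequence puts $r_2$ only on $A\setminus S$.

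The gap is in soundness. Your key claim, that every non-clean vertex costs a constant number of bits over $t$, is false. Take $a\in A$ whose $T_0$ in-edge is unused but which has exactly one extracted in-edge from $B$. It keeps remaining indegree exactly $r_2$ and contributes nothing to $H(G-T)-t$.

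Concretely, every $a\in A\setminus S$ is a leaf of your completeness tree. Replace its $T_0$ edge by some $(b,a)$ with $b\in N(a)$. This still gives a spanning tree with the same indegree multiset, hence entropy exactly $t$. Yet now $|A|(1-1/k_2)$ vertices are non-clean, far more than $|A|/\beta_2=|A|/(2k_2)$. So neither your bound on bad vertices nor the size bound on $Z$ holds.

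Your ``at most one neighbour in $S$'' step has a related hole. Acyclicity through the root needs the $T_0$-path between two $S$-vertices to lie in $T$. A single missing internal $T_0$ edge can cost as little as $O(\lg r_2)$ bits, yet it can cut a subtree of $T_0$ with $\Theta(|A|)$ leaves off from the root. Your (undefined) ``badly behaved $b$'s'' do not account for this.

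The missing idea is the paper's normalisation before any counting: you may assume $T_0\subseteq T$.
\begin{itemize}
\item If an internal $T_0$ edge $(u,v)$ is missing, add it and delete a non-$T_0$ edge on the cycle it closes. This moves one remaining indegree from $1$ to $0$ and another from $d$ to $d+1$, which cannot increase $H=m'\lg m'-\sum_v d_v\lg d_v$.
\item If a leaf edge $(p,a)$ is missing, the cycle it closes enters $a$ through some $(w,a)$ with $w\in B$. Swapping the two leaves the indegree sequence unchanged.
\end{itemize}

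After this, the only defects are vertices with remaining indegree in $[1,r_2-1]$, and your convexity idea works. You must first state that no remaining indegree exceeds $r_2$: every edge at $a\in A$ is incoming, and $T$ keeps at least one of them. The case $d=r_2+1$ you list cannot occur, and if it could, it would push the entropy below $t$ rather than above it. Then
$H(G-T)-t=\sum_{d_v>0} d_v\lg(r_2/d_v)$.
Each term with $1\le d_v\le r_2-1$ is at least $\min\{\lg r_2,\,(r_2-1)\lg\frac{r_2}{r_2-1}\}>0$. This is a slightly more direct count than the paper's domination computation. With $S=\{a: d_a=0\}$ and $Z=\{a: 0<d_a<r_2\}$, the decoding then goes through as in the paper.
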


\begin{proof}
    We reduce from $(\beta_2, r_2, k_2)$-\textsc{Almost}-EHS. For brevity, let $\beta = \beta_2, r = r_2, k = k_2$.

        \paragraph{Reduction} %
        Let $G_{\beta}=(A \cup B, E_\beta)$ be a graph instance of $(\beta, r, k)$-\textsc{Almost}-EHS. Let $n_A = |A|, n_B = |B|$. Construct a digraph $G$ as follows: $G$ starts from $G_\beta$, orienting all edges from $B$ to $A$. Finally, add a binary tree $T_A$ whose leaves are $A$ and otherwise uses new vertices, with edges oriented towards the leaves. Let $n = (2n_A - 1) + n_B$ be the number of vertices in $G$.

        Let $m' = n_B (k - 1)$ and $s(d) = \lg(m' / d)$. It is easy to check that, in any solution $T$ to our $\delta$-MINETREX instance, there are in total $m'$ edges left in $G - T$. Furthermore, if the indegree sequence is given by $d_1 \geq \cdots \geq d_{n}$, we have $H(G - T) = \sum_i d_i s(d_i)$.
        Finally, let $t = m' s(r)$.

        \paragraph{Completeness} Suppose $G_{\beta}=(A \cup B, E_\beta)$ is a \textsc{Yes}-instance of $(\beta, r, k)$-\textsc{Almost}-EHS; that is there exists $S \subseteq A$ such that every vertex in $B$ has exactly one neighbour in $S$. Then, $G$ is a \textsc{Yes}-instance of $\delta$-MINETREX.

        Indeed, construct the spanning tree $T$ as follows: $T$ consists of the edges of the binary tree $T_A$ we added, together with all edges incident to $S$. $T$ is now a spanning tree, since it connects the root of $T_A$ to all of $T_A$ (and hence $A$) by construction of $T_A$, and to all of $B$, since every vertex in $B$ is adjacent to exactly one vertex in $S$. It is a tree since (i) no cycle can be formed within $T_A$, and (ii) all the vertices in $B$ have (undirected) degree 1 in $T$, and hence cannot be part of a cycle.
        
        The sorted indegree sequence of $G-T$ is given by $(r, r, \dots, r, 0, 0,\dots,0)$. It is easy to compute, then, that $H(G - T) = m' s(r)$, as required.

        \paragraph{Soundness}
        Suppose $T$ is a solution to $\delta$-MINETREX with value at most $t+\delta n$; thus, $H(G-T) \leq m' s(r) + \delta n$. As a preliminary step, let us argue that we can always assume that every edge within $T_A$ is included in $T$. We do this in two steps.
        \begin{enumerate}
        \item Consider any edge $(u, v)$ in $T_A$ where $v \not \in A$, and $(u, v)$ \emph{not} in $T$. If this edge is not included in $T$, then the indegree of $v$ in $G - T$ is by construction 1. A simple calculation shows that it never increases $H(G - T)$ to add $(u, v)$ to $T$, and to remove any edge that closes a cycle in $T$. In particular, since $T_A$ is a tree, it is always possible to have $(u, v)$ replace an edge not belonging to $T_A$.
        \item Next consider an edge $(u, v)$ in $T_A$ where $v \in A$, but $(u, v)$ \emph{not} in $T$. Observe that all the neighbours\footnote{We say that $u$ and $v$ are neighbours in some \emph{directed} graph if and only if at least one of $(u, v)$ or $(v, u)$ are edges in that graph.} of $v$ in $T$ are in $B$. Hence, since $T$ is a spanning tree, there exists some edge $(w, v) \in T$ with $w \in B$ such that $(u, v)$, $(w, v)$ and a simple path from $u$ to $w$ form a cycle. Adding $(u, v)$ to $T$ and removing $(w, v)$ from $T$ leaves us with a spanning tree which gives rise to the same sequence of indegrees, thus not changing $H(G - T)$.
        \end{enumerate}
        Hence we see that the edges of $T_A$ can be assumed to belong to $T$.

        Next we claim that for some constant $C$ that depends only on $k, r$, we have that  $G-T$ contains at most $\delta n / C$ vertices $v$ with indegree $0<d_v<r$.
        
        By construction, no vertex can have indegree $> r$ in $G - T$. Thus, the sorted indegree sequence of $G - T$ must have form $r = \cdots = r > d_i \geq \cdots \geq d_{n}$. Let $\ell$ be the number of appearances of $r$ in this sequence; by \wref{lem:domination}, we have that the minimum-entropy indegree sequence that begins with $\ell$ copies of $r$ is given by
        \[
        \underbrace{r = \cdots = r}_{\textstyle\ell} \wrel> \underbrace{r - 1 = \cdots = r - 1}_{\left \lfloor \tfrac{m' - r \ell}{r - 1}\right \rfloor} \wrel> d_0 \wrel\geq 0 = \cdots = 0,
        \]
        where $0 \leq d_0 < r- 1$ is given by the remainder of dividing $m' - r \ell$ by $r - 1$. Recalling that $\lfloor \frac{m' - r\ell}{r - 1}\rfloor = \frac{m' - r\ell - d_0}{r - 1}$, we see that the entropy of this sequence is given by
        \begin{align*}
            &%
            	r \ell s(r) \bin+ (m' - r \ell - d_0) s(r - 1) \bin+ d_0 s(d_0)
        \\	&\wwrel=
	            m' s(r - 1)
	            \bin- r \ell \left(s (r - 1) - s(r)\right)
	            \bin+
	            d_0 (s(d_0) - s(r - 1))
	     \\	&\wwrel=
	            m' s(r)
	            \bin+ m'(s(r - 1) - s(r))
	            \bin- r \ell \left(s (r - 1) - s(r)\right)
	            \bin+
	            d_0 (s(d_0) - s(r - 1))
	     \\	&\wwrel=
            m' s(r)
            \bin+ (m' - r \ell) (s(r - 1) - s(r)) \bin+ d_0 (s(d_0) - s(r - 1))
         \\	&\wwrel\leq 
         	m' s(r) \bin+ \delta n\qquad\qquad\text{(by assumption).}
        \end{align*}
        Hence, letting $C = s(r - 1) - s(r) = \lg(r / (r - 1))$, a positive constant, and $D = d_0 (s(d_0) - s(r - 1)) = d_0 \lg((r - 1) / d_0)$, a non-negative number, we have
        \[
        C (m' - r \ell) + D \wwrel\leq \delta n,
        \]
        or equivalently
        \[
        m' - r \ell \wwrel\leq \frac{\delta n - D}{C} \wwrel\leq \frac{\delta n}{C}.
        \]
        Now, since $G - T$ has $m'$ edges left, and $r \ell$ of those are pointing towards vertices of indegree $r$, this means that at most $m' - r \ell \leq \delta n / C$ of them are pointing towards vertices of indegree less than $r$. In particular, if each of these points towards a different vertex, we see that there are at most $\delta n / C$ vertices in $G - T$ with indegree between $1$ and $r - 1$, proving the claim we began with.

    Next, observe that $r n_A = k n_B$, and furthermore $n = (2n_A - 1) + n_B$. Hence $n = (2 + (r/k))n_A - 1\leq (2 + (r / k))n_A$, and furthermore there are at most $\delta (2 + (r / k))n_A / C$ vertices in $G - T$ with indegree between 1 and $r - 1$. Set $\delta$ small enough so that this number is at most $n_A / \beta$.

    We now claim that setting $S$ to be the set of vertices in $A$ with indegree 0 in $G - T$, and $Z$ to be the set of vertices in $A$ with indegree between 1 and $r - 1$ in $G - T$ is a valid solution to the original $(\beta, r, k)$-\textsc{Almost}-EHS instance, with value $\lambda(S, Z) = |Z| / n_A \leq 1 / \beta$. To see why this is the case, consider any vertex $u \in B$ that does not have a neighbour in $Z$. Considering only the adjacencies in $T$ now, we see that $u$ must have a non-leaf neighbour $v \in A$.\footnote{This is because $u$ must be connected to the root of $T_A$ through the spanning tree $T$.} Note then that $v$ has indegree at least 2 in $T$; hence in $G - T$ it has indegree at most $r - 1$. But by assumption $v \not \in Z$; hence the indegree of $v$ in $G - T$ is 0, and $v \in S$~-- so we have shown that every vertex in $B \setminus N(Z)$ has \emph{at least} one vertex in $A$ as a neighbour.

    To see why any vertex $u \in B \setminus N(Z)$ has \emph{at most} one neighbour in $S$, suppose $u$ has two such neighbours $v, w \in S$. Then this induces a cycle in $T$, since $v$ and $w$ are connected via $T_A$ (as all of $T_A$ is included in $T$). This is not possible since $T$ is a tree.
\end{proof}

    The undirected case is at least equally as hard as the directed one. Our reduction is identical, as is the completeness analysis. The soundness analysis is somewhat more difficult; we sketch the necessary changes:
    \begin{itemize}
        \item It is a bit more difficult to insure that $T_A$ is included in $T$. In particular, our argument relied on the fact that vertices from $T_A$ but not in $A$ can only have indegree 0 or 1 in $G - T$. Now the vertices internal to the tree $T_A$ may have indegree 2. Nevertheless, there are at most $\Theta(\delta n)$ vertices with indegree 1 or 2 in $G - T$ (as $r \geq 10k > 2$). Consider any such vertex $u$, and suppose that it has two edge pointing towards it in $G - T$, namely $(u, v)$ and $(w, v)$. If we were to add $(u, v)$ and $(w, v)$ to $T$, and remove two edges not from $T_A$ to eliminate the created cycles, this can only increase $H(G - T)$ by $\Theta(1)$. This is because the difference in cost for the edges we add and remove themselves is at most $\lg(r) = O(1)$ for each addition/removal pair. The edges we remove from $T$ may also affect the costs of other edges~-- but since this only increases indegrees in $G - T$, it only \emph{lowers} their costs. Hence overall, we may assume that $T_A$ is included in $T$ with at most a penalty of $\Theta(\delta n)$~-- thus the rest of our reduction works, so long as we set $\delta$ even smaller.
        
        \item Since $r \geq 10k$, we see again that the maximum indegree that could be achieved in the graph, after removing a spanning tree, is $r$. In the directed case $B$ had indegree 0 no matter what we did; here it is necessary to observe that the indegrees of $B$ are never $\geq r$, no matter how we orient the graph.
        \item Finally, observe that the graph we used in our reduction has constant maximum (undirected) degree. Hence, $n$ and $m$ are asymptotically equivalent, and we may freely go from a linear error in $n$ to a linear error in $m$.
    \end{itemize}

    Hence we deduce the following.
    
    \begin{corollary}
        $\delta$-U-MINETREX is \NP-hard for all small enough $\delta$~-- even on graphs with constant maximum (undirected) degree. Hence it is \NP-hard to approximate U-MINETREX with additive error $\delta n$, or with additive error $\delta' m$ for some constants $\delta, \delta'$.
    \end{corollary}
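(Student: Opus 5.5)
The plan is to reuse the directed reduction from $(\beta_2,r_2,k_2)$-\textsc{Almost}-EHS unchanged, now read as an undirected graph: take $G_\beta$, attach the binary tree $T_A$ with $A$ as its leaves, and forget all orientations. I keep the threshold $t = m' s(r)$, where $m' = n_B(k-1)$ is the number of edges that remain after removing any spanning tree; this count does not depend on the orientation chosen.

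Completeness is straightforward. I orient the edges as in the directed reduction, with $B \to A$ edges and $T_A$ directed towards the leaves. I then extract $T_A$ together with all edges incident to $S$, exactly as before. This gives the sorted indegree sequence $(r,\dots,r,0,\dots,0)$, so the cost is $m' s(r)$.

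Soundness will follow the three sketched modifications.

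First, I need an upper bound on indegrees under any orientation. Each vertex of $B$ has undirected degree $k \le r/10 < r$. Each internal vertex of $T_A$ has degree at most $3$, which is less than $r$. A vertex of $A$ has degree $r+1$, and since the tree must touch it, at most $r$ of its edges survive in $G-T$. So every indegree in $G-T$ is at most $r$, and the domination computation from the directed proof (Lemma~\ref{lem:domination}) goes through verbatim. It shows that at most $\delta n / C$ surviving edges point to vertices of indegree strictly between $0$ and $r$. In particular, few vertices of $T_A \cup B$ have positive indegree: any such vertex has indegree below $r$.

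Second, I will repair $T$ so that it contains $T_A$. For each edge of $T_A$ missing from $T$, at least one endpoint has positive indegree in $G-T$ and is therefore one of the $O(\delta n)$ exceptional vertices. This holds whichever way the edge is oriented, and it bounds the number of missing $T_A$ edges by $O(\delta n)$. I swap each missing edge into $T$, orienting it towards the leaf, and remove a non-$T_A$ edge on the created cycle. Each swap changes the cost by only $O(\lg r) = O(1)$. The reason is that the two individual edge terms differ by at most $\lg r$ in $s(\cdot)$, and removing an edge from $T$ only raises indegrees in $G-T$, which never hurts the other terms beyond this bound. Making this step rigorous is the main obstacle: I must check that the removed edge's contribution, measured in the potential $\sum d\,s(d)$, is $O(1)$ per swap, and that the swaps do not cascade. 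The total penalty is then $O(\delta n)$.

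Third, after this repair the cost is at most $t + O(\delta n)$, and the decoding argument from the directed proof applies. I again set $S$ to the vertices of $A$ with indegree $0$ and $Z$ to those with indegree in $[1,r-1]$. With $\delta$ chosen small enough, $|Z| \le n_A/\beta$. Every $b \in B \setminus N(Z)$ reaches the root through a non-leaf $A$-neighbour, and that neighbour must therefore lie in $S$. The neighbour is unique, since two such neighbours would close a cycle through $T_A$.

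Finally, the instance has constant maximum degree, so $m = \Theta(n)$ and the additive error $\delta n$ converts to $\delta' m$.
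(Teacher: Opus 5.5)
Your plan follows the paper's sketch: the same instance and threshold $t=m's(r)$, the bound of $r$ on all indegrees, repairing $T$ to contain $T_A$, directed-style decoding, and $m=\Theta(n)$. The per-swap estimate you flag as the main obstacle is harmless. Removing from $G-T$ an edge into a vertex of indegree $d_x\le r$ and adding any other edge changes $\sum_v d_v s(d_v)$ by at most $\lg d_x+\lg e\le\lg r+\lg e$, in every configuration, so nothing cascades.

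\textbf{The count of missing edges is wrong.} A missing leaf edge $\{p,a\}$ of $T_A$ oriented $p\to a$ has $a\in A$ as its positive-indegree endpoint. That $a$ may have indegree exactly $r$, in which case it is not exceptional. For example, take a perfect solution $S^\ast$ with its completeness tree. For each $a\in A\setminus S^\ast$, replace the tree edge $\{p,a\}$ by an edge $\{w,a\}$ to a $B$-neighbour $w$, and orient $\{p,a\}$ as $p\to a$. The indegree sequence is still $(r,\dots,r,0,\dots,0)$ and the cost is exactly $t$, yet $\Theta(n)$ edges of $T_A$ are missing. So $O(1)$ per swap only gives a $\Theta(n)$ penalty. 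You must split the missing edges as the paper does:
\begin{itemize}
\item Edges of $T_A$ pointing into internal vertices number at most $\delta n/C$, since they point into vertices of indegree below $r$. These get the $O(1)$-cost swap.
\item An edge pointing into a leaf $a$ gets the cost-free exchange from the directed proof. Drop the $T$-edge $\{w,a\}$ with $w\in B$ on the created cycle and orient it $w\to a$; every indegree stays the same.
\end{itemize}

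\textbf{The uniqueness half of your decoding also fails as written.} In the directed instance, a vertex of $A$ with indegree $0$ in $G-T$ has all its $B$-edges in $T$, because they all point into it. In the undirected instance it can have indegree $0$ because its surviving edges point \emph{out} into $B$. Then some $b\in B\setminus N(Z)$ can have two indegree-$0$ neighbours $v,w$ with $\{b,v\}\in T$ but $w\to b$ in $G-T$, and no cycle is closed.

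Concretely, start from $S^\ast$ and its tree. Reorient outward all $r$ surviving edges of $\epsilon n$ vertices $w\in A\setminus S^\ast$ whose neighbourhoods are pairwise disjoint. The cost rises by $\epsilon n\,r\lg r\le\delta n$ and your $Z$ is empty. Yet every $b\in N(w)$ has two neighbours in your decoded $S$. The paper's sketch passes over this point as well.

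The repair is cheap. Such $b$ have indegree between $1$ and $k<r$, so there are $O(\delta n)$ of them. Add all their $A$-neighbours to $Z$; that is at most $k$ per $b$, so $|Z|\le n_A/\beta$ for small $\delta$. Then every $b\in B\setminus N(Z)$ has indegree $0$, so every $G-T$ edge at $b$ points into $A$. Hence the edge from $b$ to any indegree-$0$ neighbour must lie in $T$, and your cycle argument through $T_A$ works.
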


\section{Ultrasuccinct Unlabelled Graphs}
\label{sec:data-structures}

In this section, we show how to support efficient navigational queries on our
TREX-compressed graph representation, turning TREX into a \emph{data structure} for unlabelled graphs.
We extend the techniques from~\cite{IsmailiAlaouiNamrataWild2025} from preferential-attachment graphs to
\emph{general unlabelled graphs}, both undirected and directed.

\subsection{Preliminaries}
For the reader's convenience, we recall the results on succinct data structures that we build on in this section.
First, we cite the compressed bit vectors of Pătrașcu~\cite{Patrascu2008}.
(A more practical variant with worse space exists, as well~\cite{RamanRamanRao2007}.)

\begin{lemma}[Compressed bit vector]
\label{lem:compressed-bit-vectors}
	Let $B[1..n]$ be a bit vector of length~$n$, containing $m$ $1$-bits.
	For any constant $c>0$, there is a data structure using
	\(
			\lg \binom{n}{m} \wbin+ O\bigl(\frac{n}{\lg^c n}\bigr)
		\wwrel\le 
			m \lg \bigl(\frac nm\bigr) \wbin+ O\bigl(\frac{n}{\lg^c n}+m\bigr)
	\)
	bits of space that
	supports in $O(1)$ time operations 
	(for $i \in [1..n]$):
	\begin{itemize}
		\item $\accessop(B, i)$: return $B[i]$, the bit at index $i$ in $B$;
		\item $\rankop_\alpha(B, i)$: return the number of bits with
		value $\alpha \in \{0,1\}$ in $B[1..i]$;
		\item $\selop_\alpha(B, i)$: return the index of the $i$th
		bit with value $\alpha \in \{0,1\}$.
	\end{itemize}
\end{lemma}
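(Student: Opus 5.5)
This lemma is not proved from scratch; it follows from Pătrașcu's ``succincter'' construction~\cite{Patrascu2008}, so the plan is to cite that construction for the data structure and to verify the space inequality separately.

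First, the data structure. Pătrașcu shows that a bit vector of length $n$ with $m$ ones can be stored in $\lg\binom{n}{m} + n/(\lg n / t)^t + \tilde O(n^{3/4})$ bits. The structure supports $\rankop$ and $\selop$ in $O(t)$ time, for any parameter $t$. Choosing $t = c+1$, a constant, the query time is $O(1)$. The redundancy is
\[
	\frac{n}{(\lg n/(c+1))^{c+1}} + \tilde O(n^{3/4}) \wwrel= O\Bigl(\frac{n}{\lg^c n}\Bigr),
\]
because the constant $(c+1)^{c+1}$ is absorbed, one spare factor of $\lg n$ dominates it asymptotically, and $n^{3/4}$ is polynomially smaller than $n/\lg^c n$.

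Second, the remaining operations. $\rankop_0(B,i) = i - \rankop_1(B,i)$ is immediate. $\accessop(B,i) = \rankop_1(B,i) - \rankop_1(B,i-1)$ also takes $O(1)$ time. For $\selop_0$, I would either use the symmetry of the encoding (the complement vector has the same $\lg\binom{n}{n-m}$ cost) or build a second copy of the structure for the complemented vector. The second option adds only a constant factor to the redundancy term, since the leading term can be shared by storing one copy and simulating the other, as Pătrașcu's construction supports both selects. I expect this step to be the only point needing care, namely checking that the cited construction indeed supports $\selop_0$ within the same bounds; I would rely on the statement in~\cite{Patrascu2008}, which covers both.

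Third, the inequality. I would use the standard estimate $\binom{n}{m} \le n^m/m! \le (en/m)^m$, where the last step uses $m! \ge (m/e)^m$. This gives
\[
	\lg\binom{n}{m} \wwrel\le m\lg\frac{n}{m} + m\lg e \wwrel= m\lg\frac nm + O(m).
\]
Adding the $O(n/\lg^c n)$ redundancy then yields the stated right-hand side.
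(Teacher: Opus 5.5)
Your proposal is correct and follows the paper, which likewise takes this lemma directly from \cite{Patrascu2008} with a constant recursion depth (use $t=\lceil c\rceil$, since $t=c+1$ is not an integer when $c$ is not) and leaves the standard estimate $\lg\binom{n}{m}\le m\lg(n/m)+m\lg e$ implicit. The one slip is your fallback for $\selop_0$: a second structure on the complemented vector costs another $\lg\binom{n}{m}$ bits, which doubles the leading term rather than only the redundancy, so you do need the cited construction to support $\selop_0$ directly, as it does.
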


Using wavelet trees, we can support rank and select queries on
arbitrary static strings / sequences $S\in[\sigma]^n$ while compressing them to zeroth-order empirical entropy 
\[H_0(S) \wwrel= \sum_{c=1}^\sigma |S|_c \lg\left(\frac{n}{|S|_c}\right).\]
Here, $|S|_c$ is the number of occurrences of character $c$ in $S$.

\begin{lemma}[Wavelet tree~\cite{Navarro2014}]
\label{lem:wavelet-tree-simple}
	Let $S[1..n]$ be an array with entries $S[i]\in\Sigma = [1..\sigma]$. There is a data structure using $H_0(S) + o(n)$ bits of space that supports the following queries in $O(\lg \sigma)$ time (without access to $S$ at query time):
    \begin{itemize}

	\item $\accessop(S, i)$: return $S[i]$, the symbol at index $i$ in $S$;
	\item $\rankop_\alpha(S, i)$: return the number of indices with value $\alpha \in \Sigma$ in $S[1..i]$;
	\item $\selop_\alpha(S, i)$: return the index of the $i$th occurrence of value $\alpha \in \Sigma$ in $S$.
    \end{itemize}
\end{lemma}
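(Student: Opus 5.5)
This is a standard result that the paper cites from~\cite{Navarro2014}. I would reprove it with the classical construction: a balanced binary tree over the alphabet $[1..\sigma]$, with one compressed bit vector (\wref{lem:compressed-bit-vectors}) per node.

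\textbf{Construction.} Each node $u$ of the tree is responsible for a contiguous subrange $[a_u..b_u]$ of symbols, with the root covering $[1..\sigma]$. Let $S_u$ be the subsequence of $S$ consisting of the entries whose value lies in $[a_u..b_u]$. The node stores a bit vector $B_u[1..|S_u|]$ in which bit $j$ is $1$ exactly when the $j$th entry of $S_u$ falls into the right half of the range. Leaves store nothing. Every $B_u$ is represented with \wref{lem:compressed-bit-vectors}, so it supports $\accessop$, $\rankop$ and $\selop$ in $O(1)$ time. The tree has depth $\lceil\lg\sigma\rceil$.

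\textbf{Queries.} Each query spends $O(1)$ time per level, giving $O(\lg\sigma)$ overall.
\begin{itemize}
\item For $\accessop(S,i)$, start at the root with position $i$. Read $b=B_u[i]$, replace $i$ by $\rankop_b(B_u,i)$, and move to the child indicated by $b$. The leaf reached identifies the symbol.
\item For $\rankop_\alpha(S,i)$, descend the same way, but at each node choose the child on the path to the leaf of $\alpha$. The position held on arrival at that leaf is the answer.
\item For $\selop_\alpha(S,i)$, start at the leaf of $\alpha$ with position $i$ and walk up. At each parent, the new position is the $i$th occurrence of the branch bit in that parent's vector, $i\leftarrow\selop_b(B_{\text{parent}},i)$. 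The position obtained at the root is the answer.
\end{itemize}

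\textbf{Entropy term of the space.} The main space term is $\sum_u \lg\binom{|S_u|}{|S_{\text{right}(u)}|}$. A node's vector splits the multiset of its entries between its two children, so the product of the binomials over all nodes telescopes to the multinomial coefficient
\[
\frac{n!}{\prod_c |S|_c!}.
\]
Its logarithm is at most $H_0(S)$, by the standard bound $\lg\binom{n}{n_1,\dots,n_\sigma}\le \sum_c n_c\lg(n/n_c)$.

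\textbf{Redundancy (the main obstacle).} Each $B_u$ carries an extra $O(|S_u|/\lg^c|S_u|)$ bits. The vectors on any one level have total length at most $n$, so summing over all levels gives $O(n\lg\sigma/\lg^c n)$ in the regime where the vectors are not tiny. Choosing the constant $c\ge 2$ and using $\sigma\le n$ makes this $o(n)$. Two sources of overhead remain:
\begin{itemize}
\item the per-node bookkeeping, such as pointers and lengths, which costs up to $O(\sigma\lg n)$ bits;
\item the lower-order terms of very short vectors, for which the bound $O(|S_u|/\lg^c|S_u|)$ is weak.
\end{itemize}
I would first drop all nodes with empty $S_u$. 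I would then make the tree pointerless in the levelwise style: concatenate the vectors of each level and recover node boundaries via $\rankop$ on the parent level. This removes the pointer cost. Nodes whose $|S_u|$ is polylogarithmically small would be lumped into one shared compressed vector per level, so that their redundancy is charged to $n/\lg^c n$ per level.

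I expect this last step to be the delicate part: the compressed space must still telescope to $H_0(S)$ even though the per-node boundaries are no longer stored explicitly. If the redundancy cannot be brought below $o(n)$ for arbitrarily large $\sigma$, I would fall back to the form in which the lemma is stated in~\cite{Navarro2014}, with an $o(n\lg\sigma)$ redundancy term. That term is still $o(m\lg n)$, which is what the uses of the lemma in this paper require.
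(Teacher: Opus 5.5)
Your construction, the three query procedures, and the telescoping of $\prod_u\binom{|S_u|}{|S_{\mathrm{right}(u)}|}$ to $n!/\prod_c|S|_c!\le 2^{H_0(S)}$ are all correct. The paper gives no proof of its own; it only cites~\cite{Navarro2014}, and this is the standard argument behind that result. The gap is the redundancy step you flag, and it is worse than delicate.

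With explicit per-node vectors your accounting closes only when the bookkeeping $O(\sigma\lg n)$ you estimate is $o(n)$, i.e.\ $\sigma=o(n/\lg n)$. By concavity, the short-vector terms sum to $O\bigl(n/\lg^{c-1}(n/\sigma)+\sigma\bigr)$ for $c\ge 2$, which is $o(n)$ whenever $\sigma=o(n)$, so they are not the binding constraint. Your repair for larger $\sigma$ does not work as described. A concatenated or lumped vector compressed with Lemma~\ref{lem:compressed-bit-vectors} costs $\lg\binom{\sum_u|S_u|}{\sum_u m_u}\ge\sum_u\lg\binom{|S_u|}{m_u}$, where $m_u$ is the number of ones in $B_u$. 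The difference can be $\Theta(n)$ per level, e.g.\ when half the nodes of a level carry all-zero vectors and half all-one vectors. So the telescoping to $H_0(S)$ is lost. Recovering node boundaries by rank queries on the parent level does not help, because that gives uncompressed positions, not the offsets of per-node compressed encodings.

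More importantly, for $\sigma=\Theta(n)$ no method can close this step, because the lemma as literally stated is false there. Take $\sigma=n$. Access queries recover $S$, so the structure is an injective encoding of the $n^n$ strings in $[1..n]^n$, and its average length is at least $n\lg n-O(1)$. On the other hand, $H_0(S)=n\lg n-\sum_c|S|_c\lg|S|_c$. Each $|S|_c\sim\mathrm{Bin}(n,1/n)$ equals $2$ with probability tending to $1/(2e)$, contributing $2$ to the sum, so the average of $H_0(S)$ is at most $n\lg n-(1/e-o(1))n$. Hence some $S$ needs $H_0(S)+\Omega(n)$ bits; this is the unpaid cost of describing the symbol frequencies. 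The redundancy must therefore depend on $\sigma$, or $\sigma$ must be restricted. Your fallback $o(n\lg\sigma)$ is the correct, provable statement for balanced wavelet trees. However, your claim that it is all the paper needs is wrong. Theorem~\ref{thm:ultrasuccinct-trex-directed} asserts an additive $o(m)$, whereas the fallback yields only $o(m\lg n)$. That invalidates the theorem's linear-order accounting and, for $m=\omega(n)$, can exceed the $\Theta(n\lg n)$ savings tree extraction is meant to deliver. The paper applies the lemma to a string of length $m-n+1$ over the alphabet $[1..n]$, which is exactly the regime $\sigma=\Theta(\text{length})$ whenever $m=O(n)$.
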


We lastly need a succinct data structure for ordinal trees with support for levelorder indices (BFS visit order).
Since we only need very basic queries, the classical LOUDS representation~\cite{Jacobson1989}
is sufficient.  
(Further operations can be supported via tree covering~\cite{HeMunroNekrichWildWu2020}.)

\begin{lemma}[{Succinct ordinal trees}]
    \label{lem:tree-succinct}
    Let $T$ be an ordinal tree on $n$ vertices.
    There is a data structure using $2n+o(n)$ bits of space that supports the following queries in $O(1)$ time 
    (where nodes are identified with their levelorder index in $T$):
    \begin{itemize}
        \item $\mathsf{parent}(T,v)$: return the parent of $v$ in $T$;
        \item $\mathsf{degree}(T,v)$: return the number of children of $v$ in $T$;
        \item $\mathsf{child}(T,v,i)$: return the $i$th child of $v$ in $T$;
        \item $\textsf{child-rank}(T, v, c)$: assuming $c$ is a child of $v$, return the index $i$ so that $c = \mathsf{child}(T,v,i)$.
    \end{itemize}
\end{lemma}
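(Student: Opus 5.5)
We would prove the lemma with Jacobson's level-order unary degree sequence (LOUDS) and constant-time rank/select on the resulting bit string.

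\emph{Encoding.} Traverse $T$ in level order. For each node $v$ with $c_v$ children, append $1^{c_v}0$ to a bit string. Prepend the block $10$, which stands for a virtual super-root whose only child is the root of $T$. Call the result $L$. It has $n$ one-bits and $n+1$ zero-bits, so $|L| = 2n+1$. We store $L$ with the compressed bit vector of \wref{lem:compressed-bit-vectors}. Its space is $\lg\binom{2n+1}{n} + o(n) \le 2n + o(n)$ bits, and it answers $\accessop$, $\rankop_0$, $\rankop_1$, $\selop_0$ and $\selop_1$ in $O(1)$ time. A plain bit vector with $o(n)$-bit rank/select directories would serve equally well.

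\emph{Key invariant.} The key step is a bijection between one-bits and nodes: the $v$th one-bit of $L$ is the edge into the node with level-order index $v$, and the $v$th zero-bit ends the child list of node $v-1$. We prove this by induction along the level order. The super-root's $1$ is the edge into the root (node $1$), and the first $0$ closes the super-root's list. Level-order traversal writes the child lists of nodes $1,2,\ldots$ in exactly this order. The children listed in these blocks are, in turn, exactly the nodes $2,3,\ldots$ in level order, because BFS enumerates children of earlier nodes first and in left-to-right order. Hence the $k$th one-bit written belongs to the $k$th node in level order. In the same way, the block of node $v$ lies strictly between the $v$th and $(v+1)$st zero-bits.

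\emph{Queries.} With this invariant, each operation is a constant number of rank/select calls:
\begin{align*}
	\mathsf{degree}(T,v) &= \selop_0(L,v+1) - \selop_0(L,v) - 1,\\
	\mathsf{child}(T,v,i) &= \rankop_1\bigl(L, \selop_0(L,v) + i\bigr),\\
	\mathsf{parent}(T,v) &= \rankop_0\bigl(L, \selop_1(L,v)\bigr),\\
	\textsf{child-rank}(T,v,c) &= \selop_1(L,c) - \selop_0(L,v).
\end{align*}
For $\mathsf{parent}$, the one-bit of $v$ lies in the block of the node whose list is closed by the next zero. The number of zeros preceding it therefore counts the nodes whose blocks are complete, super-root included, and this count equals the parent's level-order index. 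So each query takes $O(1)$ time, and the total space is $2n+o(n)$ bits.

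The main obstacle is getting the off-by-one bookkeeping right, namely the role of the super-root and whether the rank/select indices are inclusive. Once the invariant is stated precisely, each formula reduces to checking it on one example, such as a root with two children, together with the invariant itself.
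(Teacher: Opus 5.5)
Your proposal is correct and takes the paper's approach: the paper proves this lemma only by citing Jacobson's LOUDS representation, and you have reconstructed exactly that encoding. Your super-root block, the invariant matching the $v$th one-bit to node $v$, and the four rank/select formulas are the standard LOUDS implementation; each formula checks out against the invariant, including the sentinel $0$ returned as the root's parent.
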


\subsection{Adjacency string representation}
\label{sec:adjacency-string}

We begin with a space-efficient variant of the textbook adjacency-list representation for \emph{labelled} graphs,
which we will build on below.

Given a directed graph $G=(V,E)$ with $V=[n]$, we define the \emph{adjacency string} $A = A(G)$ as the concatenation of the outneighbourhoods of all vertices. Formally,  $A(G) = [N^+(v_1), N^+(v_2), \ldots, N^+(v_n)]$, where $N^+(v) = \{ u \in V : (v,u) \in E \}$ denotes the outneighbourhood of vertex $v$; we similarly define
$N^-(v) = \{ u \in V : (u,v) \in E \}$.
Note that $A$ crucially depends on the order/names of vertices.

Note further that $A$ itself does not allow to reconstruct $G$ since it does not encode the boundaries between neighbourhoods resp.\ the outdegrees $d^{\text{out}}_v$ of vertices.
We thus augment $A$ with a bitvector $S[1..n+m] = S(G)$ encoding the unary degree sequence $d^{\text{out}}_1,\ldots,d^{\text{out}}_n$ (or equivalently, the \underbar starting indices of the next neighbourhood) as follows: for $v=1,\ldots,n$ append to $S$ one $1$ followed by $d^{\text{out}}_v$ many $0$s.
So $S$ contains exactly $n$ one-bits.

Using compressed bitvectors (see \wref{lem:compressed-bit-vectors}), $S$ can be stored
with $n \lg(m/n)+O(n)$ bits of space, which is never worse than the $\Theta(n \log n)$ bits of space used by the $n$ pointers in a standard adjacency-list representation, and substantially better if $G$ is sparse.
Moreover, we can find the starting index of any outneighborhood using rank and select (details below).

\subsection{TREX Data Structure}

We will now describe our data structure for compressing unlabelled directed graphs. 

Let $G = (V,E)$ be a directed graph on $n = |V|$ vertices and~$m = |E|$ edges. 
As a first compression step, we compute the tree extraction spanning tree~$T$ in~$G$ using our MST-approximation (\wref{sec:approximation}). The vertices of the spanning tree~$T$ are the same as~$G$, \ie, $V(T) = V$. We will store~$T$ using \wref{lem:tree-succinct}. For this, we can root $T$ arbitrarily and choose an ordering of children arbitrarily. 
Nodes in $T$ are referred to by their levelorder indices. 
Note that in this ordering all children of a node in $T$ form an \emph{interval}
in the levelorder indices.
We do a level order traversal of the spanning tree $T$ and compute the renaming mapping for vertices of $G$ according to their level order indices from~$\{1, \ldots, n\}$. 

We now by represent $G-T$ using the adjacency-string encoding from above (\wref{sec:adjacency-string}),
using the levelorder indices from $T$ to order the vertices.
Let $A' = A(G-T)$ and $S' = S(G-T)$ denote the adjacency string and unary outdegree sequence of $G-T$, respectively.
Compared to $A(G)$, the sequence $A'$ contains exactly $n-1$ fewer vertex symbols (and similarly for $S'$). 
The residual sequence $A'$ is stored using a wavelet tree (\wref{lem:wavelet-tree-simple}) over the alphabet~\(\{1,2,\ldots,n\}\).

Edge orientations are implicitly stored in our representation, as adjacency lists store only outgoing edges. However, we have to explicitly store the direction of tree edges since $T$ is an undirected spanning tree.
We thus store a bitvector $D$, indexed so that $D[v]$ is $1$ iff $v$'s parent edge is pointing towards (\ie{} \underbar down to) $v$. In \wpref{fig:ds-example}, the ultrasuccinct TREX data structure on a directed graph with 8 vertices is illustrated.

For undirected graphs, 
we find the spanning tree \emph{and orientation of the edges of $G-T$}, 
but we do not need to store the direction of tree edges.
Otherwise we follow the same procedure.

\subsection{Operations}
\label{sec:operations-G}

Recall that we store~$G = (V, E)$, where~$|V(G)| = n$ and~$E(G) = m$, using an ordinal tree~$T$, a wavelet tree~$A' = A(G-T)$ and a bitstring~$S'=S(G-T)$ of length~$m-(n-1)+n = m+1$ with~$n$ many 1s to mark the start of the adjacency list of each vertex in~$A'$.

\begin{description}
\item[Outdegree of~$v$.] 
	The outdegree is the sum of the outgoing edges stored in $A'$ and the outgoing edges stored in $T$.
	Note that since $T$ is an unordered spanning tree, any subset of $v$'s child edges and its parent edge may be outgoing edges.
	
	The number of $v$'s outgoing edges in $A'$ is given by $\selop_1(S', v+1) - \selop_1(S', v) - 1$. 
	(Here, we assume that $\selop_1(S', n+1) = |S'|+1 = m+2$.)

    The number of outgoing edges in $T$ is given by the sum of (a) $1$ if $D[v] = 0$ (and $v\ne 1$) and 
    (b) the number of $0$s in $D[c_1 .. c_k]$ (computable via $\rankop_0(D,c_k) - \rankop_0(D,c_1-1)$) where 
    $c_1 = \mathsf{child}(T,v,1)$ and 
    $c_k = \mathsf{child}(T,v,\mathsf{degree}(T,v))$.
	
\item[Indegree of~$v$.] 
    As for the outdegree, we obtain the indegree of a vertex $v$ as the sum of the incoming edges in $A'$ and the incoming edges in $T$.
    The latter is entirely symmetric to outdegree, just counting $1$s in $D$ instead of $0$s.

    The indegree in $A'$ equals the number of occurrences of ``letter'' $v$ in the string $A'$, which is given by $\rankop_v(A',m')$ for $m'=|A'| = m-n+1$.
 	
\item[Deciding if $u$ and $v$ are adjacent.] 
    Vertices $u$ and $v$ are adjacent, if and only if either 
    (1) $u$ is the parent of $v$ in $T$, $\mathsf{parent}(T, v) = u$, 
    (2) vice versa, $\mathsf{parent}(T, u) = v$, 
    (3)~$v$ occurs in $u$'s adjacency list in~$A'$ or 
    (4)~$u$ occurs in $v$'s adjacency list in~$A'$.
    
    Conditions (1) and (2) are directly supported on~$T$; for
	(3) and (4), we can use rank and select on~$S'$ and~$A'$ as follows. For condition (3), we first find the index of~$u$ in~$S'$ using $s = \selop_1(S',u) - u$. This marks the start of the adjacency list of~$u$ (we have to subtract the $u$ one-bits in $S'$). 
    Therefore, if~$v$ occurs in~$A'[s+1 \ldots s+\mathsf{outdegree}(u)-1]$, this implies~$u$ and~$v$ are adjacent. In other words, if~$\rankop_v(A',s + \mathsf{outdegree}(u)-1) - \rankop_1(A', s) \geq 1$, then~$v$ occurs and~$u$ and~$v$ are adjacent. Condition (4) is similar.
	
\item[Computing the~$i$th outneighbour of $v$ ($\mathsf{N_{out}}(v, i)$).] 
    Since adjacency lists in general do not follow any specific order,
    we can define a convenient ordering of neighbours.
    By convention, we will consider all outneighbours of $v$ in $T$ to precede all outneighbours of $v$ in $A'$.
    
    The $i$th outneighbour of $v$ in $G$ can then be determined as follows: 
    If $i \le d_T$ for $d_T$ the outdegree of $v$ in $T$, then
    we find the neighbour in $T$; otherwise, it is stored in the wavelet tree~$A'$.
    $d_T$ is computed as in outdegree.

    Assume first that $i\le d_T$.
    If $i=1$ and $D[v] = 0$, return the parent of $v$ in $T$.
    Otherwise, we have to find the $j$th outgoing child in $T$
    where $j=i$ if $D[v]=1$ and $j=i-1$ otherwise.
    This child is given by $\selop_0(D,j+o)$ for $o=\rankop_0(D,c_1-1)$ with $c_1=\mathsf{child}(T,v,1)$.

    Assume now that $d_T< i \le d^{\text{out}}_v$.
    We have to find the $j$th outgoing edge from $v$ in $A'$ for 
    $j= i-d_T$.
    $v$'s outneighbourhood in $A'$ begins at $s = \selop_1(S',v)-v$, so we find our neighbour at 
    $\accessop(A', s+j)$.

\item[Computing the~$i$th inneighbour of~$v$ ($\mathsf{N_{in}}(v, i)$).] 
    As for outneighbours, we consider inneighbours in $T$ to precede those in $A'$.
    Let (with slight abuse of notation) here $d_T$ denote the \emph{in}degree of $v$ in $T$.
    Then the case $i\le d_T$ is symmetric to $i$th outneighbour.

    It remains to consider $d_T< i \le d^{\text{in}}_v$
    and select the $j$th inneighbour of $v$ in $A'$ for $j=i-d_T$.
    The $j$th occurrence of $v$ in $A'$ is found at index
    $y = \selop_v(A',j)$; that index $y$ belongs to vertex $w = \rankop_1(S',y)$, so the $j$th incoming edge originates from $w$.
\item[Outneighbour rank ($\mathsf{N_{out}\text-rank}(v,w)$).]
    In preparation of the extension with twin removal (\wref{sec:twin-removal}),
    we implement the following operation, which may also be of independent interest:
    $\mathsf{N_{out}\text-rank}(v,w)$, which returns the index $i$ so that
    $w = \mathsf{N_{out}}(v,i)$, assuming that $w$ is an outneighbour of $v$ in $G$.

    We check similar cases as when finding the $i$th outneighbour.
    First, we check if $w$ is $v$'s parent in $T$ and $D[v] = 0$; if so, return $1$.
    If conversely $v$ is $w$'s parent and $D[w] = 1$, we find the number $j$ of out-children of $v$ in $T$ before $w$ via
    $j=\rankop_0(D,w) - \rankop_0(D,c_1-1)$ with $c_1=\mathsf{child}(T,v,1)$.
    The result is then $j-[D[v]=0]$.
    Otherwise $w$ occurs in $v$'s section of $A'$. To find its position, we compute
    $j = \selop_w(A',r+1) - s$ where $r=\rankop_w(A',s-1)$ and $s=\selop_1(S',v)-v$;
    then the answer is $j + d_T$ for $d_T$ the outdegree of $v$ in $T$.
    
\item[Inneighbour rank ($\mathsf{N_{in}\text-rank}(v,w)$).]
    Similar to $\mathsf{N_{out}\text-rank}(v,w)$, we also implement inneighbour rank.
    The checks in $T$ are entirely symmetric to the above, so we only need to consider the case that the edge $(w,v)$ is represented in $A'$,
    \ie, we are looking for the occurrence of $v$ in $w$'s section, and need to determine the number $j$ of incoming edges, \ie, occurrences of $v$, up until this occurrence.
    We compute $j = \rankop_v(A',s-1)+1$ for $s = \selop_1(S',w)-w$ the beginning of $w$'s segment;
    the answer then is $j+d_T$ where $d_T$ denotes the indegree of $v$ in $T$
    (computed as above).    
\end{description}

\subsection{Space and time analysis}

In this section, we prove the space and time complexity of our succinct graph representation, restated here for convenience.

\ultrasuccincttrex*

\begin{proof}
The space for $A'$ is $H_0(A') + o(m) = H(G-T) + o(m)$.
The space for $S'$ is $\lg\binom{m+1}{n} + O(m/\log^c m) = \lg\binom{m+1}{n} + o(m)$.
We have $\lg\binom{m+1}{n} \le \lg\binom{m}{n} + \Oh(\lg m) \le n \lg(m/n) + n/\ln 2 \le n \lg(m/n) + n/\ln 2 + \Oh(\lg m)$.
The data structure for $T$ uses $2n+o(n)$ bits of space,
and $D[1..n]$ contributes another $n+o(n)$ bits for an uncompressed bitvector with rank and select support.

The running time of operations is dominated by the operations on $A'$: \textsf{access}, \textsf{rank}, and \textsf{select} take $O(\lg \sigma)$ time on the wavelet tree, where our $\sigma$ is~$n$ here.
All operations on $S'$, $D$ and $T$ run in constant time.
The outdegree of a vertex does not need operations on $A'$.
\end{proof}

\section{Twin Removal}
\label{sec:twin-removal}

We can further compress certain graphs by first contracting vertices with identical neighbourhood (``twins'').
This is particularly effective on unlabelled graphs, since it basically suffices to remember how many twin copies each vertex has.
The resulting reduced (twin-free) graph is then represented using \wref{thm:ultrasuccinct-trex-directed}

We formally define the notion of~\emph{twins} as follows: Let $G = (V,E)$ be a graph, where~$|V| = N$. For a vertex $v \in V$, let $N(v) = \{ u \in V \setminus \{v\} : \{u,v\} \in E \}$ denote the open neighbourhood of $v$. Two distinct vertices $u,v \in V$ are called \emph{twins} if $N(u) = N(v)$; for directed graphs,
two vertices $u,v \in V$ are \emph{twins} if $N^+(u) = N^+(v)$ and $N^-(u) = N^-(v) $.
We focus again on digraphs.

Being twins defines an equivalence relation on the vertex set $V$.
Let~$ V = C_1 \,\dot\cup\, C_2 \,\dot\cup\, \cdots \,\dot\cup\, C_{n}$ be the resulting partition into \emph{twin classes}, where vertices within the same class are pairwise twins.
(This is also formally stated as Lemma~1 of \cite{TurowskiMagnerSzpankowski2020}).
Contracting all twin classes into single vertices yields the \emph{twin-reduced graph}.
More formally, we define $G_0 = (V_0, E_0)$ as follows:
The vertex set is $V_0 = \{v_1,\ldots,v_n\}$, where each $v_i$ corresponds to the twin class $C_i$. We have an edge $(v_i,v_j) \in E_0$ if and only if $(u,w) \in E$ for some (equivalently, all) $u \in C_i$, $w \in C_j$. By construction, the graph $G_0$ is \emph{twin-free}, 
\ie{} no two vertices in $G_0$ have identical open neighbourhoods.

\subsection{Operations with twin removal}

As we will show, 
we can implement operations on the original graph $G$ using operations on $G_0$ as a black box.
We only need to store the sizes of the twin classes on top.
We use the unary encoding of $[|C_1|, |C_2|,\ldots,|C_n|]$ in a compressed bitvector for that; 
more precisely, for each twin class $C_i$, we write a $1$-bit in $B$ followed by $|C_{i}-1|$ many $0$-bits.

While some operations can be supported equally efficiently, 
an exception is the random access to neighbourhood, \ie, operations
$\mathsf{N_{out/in}}(v,i)$, which the twin-reduced representation can not support efficiently. 
(We would need prefix sum of (out/in)degrees of vertices in $G_0$, not just access to their individual degrees.)
We can however support a slightly weaker operation that allows equally efficient traversals through the neighbourhood:
$\mathsf{N_{out/in}\text-next}(v,w)$ yields the next out/in neighbour of $v$ \emph{after} $w$ resp.\ the first neighbour of $v$ if $w=\texttt{null}$.
Note that this is indeed what a linked list of outneighbours in a textbook adjacency-list representation offers.

\paragraph{Vertex ids}

For a vertex $v \in V(G)$, the vertex in~$G_0$ which is its twin, is called its~\emph{representative} $\mathsf{rep}(v)$. For~$v \in V(G_0)$, the representative of $v$ is $v$ itself.
We represent $G_0$ using our TREX data structure (\wref{thm:ultrasuccinct-trex-directed}).
Thus the tree~$T$ induces a level-order indexing of the vertices of the twin-reduced graph~$G_0$, 
assigning them the labels $\{1,\ldots,n\}$.
Twins are assigned consecutive labels from the range $\{n+1,\ldots,N\}$, in order of their representatives.
This is conveniently done via bitvector $B$, where each twin corresponds to a 0-bit.
For example, if $N = 10$ and $n = 3$, the bitstring $B = \texttt{1000100100}$
indicates that after the three representatives $1$, $2$, $3$, we assign $4$, $5$, $6$ as ids for the 3 twins of $1$; ids $7$ and $8$ for the two twins of $2$, and finally $9$ and $10$ to the two twins of $3$.

\begin{description}
\item[Representative.]
For vertex $v \notin V(G_0)$, we obtain its \emph{representative} as 
\[\mathsf{rep}(v) \wwrel= \rankop_1\bigl(B,(\selop_0(B, v-n))\bigr).\] 
In other words, we take the number of \texttt{0}s in~$B$ from positions $1$ to $v$, given by~$v-n$, and find the index of the $(v-n)$th \texttt{0} in~$B$ using $\selop$, and then count the number of \texttt{1}s up to that index using $\rankop$. 

\item[Adjacency queries.] 
From the definition, twins have the same neighbourhood. Therefore, for any two vertices $v,w \in V(G)$,~$ (v,w) \in E(G)$ if and only if $(\mathsf{rep}(v), \mathsf{rep}(w)) \in E(G_0)$.
Thus, adjacency queries between two vertices on $G$ directly translate to adjacency queries on their representatives in the twin-reduced graph $G_0$.
\item[Twin queries.] Given a vertex~$v \in V(G_0)$, we can find the labels of its twins using a \textsf{select} query on~$B$. We first find the index of~$v$th \texttt{1} in~$B$ using~$ i = \selop_1(B, v)$. The number of \texttt{0}s before that index is given by~$i-v$. This implies that the twins of~$v$ are precisely the vertex ids~$[n+i-v+1 \wbin{..} n+i-v+|C_v| -1]$. 

\item[Next outneighbour of $v$ after $w$.]
We iterate through the outneighbours of $v$ by twin classes, \ie, in the neighbourhood order of $G_0$, but stepping through all twins of a vertex directly after its representative is output.

If $w$ is \emph{not} a last twin copy of its representative $w_0=\mathsf{rep}(w)$, we simply increment $w$ resp.\ find the first twin of $w_0$ in case $w=w_0$. This can be done as in the twin query above.

If $w$ is the last copy of $w_0$, we now need to advance in $G_0$ to the next neighbour $u_0$ of $v_0=\mathsf{rep}(v)$ after $w_0$; we find this next neighbour as
$u_0 = \mathsf{N_{out}}(v_0, r+1)$ for $r = \mathsf{N_{out}\text-rank}(v_0,w_0)$. (Both operations are in $G_0$).

\item[Next inneighbour of $v$ after $w$.]
Using an entirely symmetric procedure as for outneighbours works; only the queries in $G_0$ need to use $\mathsf{N_{in}}$ instead of $\mathsf{N_{out}}$.

\item[$i$th (out/in)neighbour.] 
Unless $G_0$ is much smaller than $G$, \ie, $G$ has very many twins,
using this random access operation on neighbourhoods is very slow and should be avoided.  For completeness, we sketch how to support it, nevertheless.

Assuming that, for~$i' \leq n-1$, we know how to query the~$i'$th outgoing and~$i'$th incoming vertices in~$G_0$, then, for~$i \leq N-1$, we can query the~$i$th outgoing and~$i$th incoming vertices in~$G$ by iterating over the neighbours in $G_0$ until we reach the correct range for $i$.
Note that each neighbour in $G_0$ can correspond to an arbitrary number of twin neighbours which need to be counted for $i$.
The operation is entirely symmetric between out- and inneighbours, so we generically speak of neighbours here.

For a vertex~$v \in V(G)$, let~$u \in V(G_0)$ be its representative. 
We query for the first neighbour of~$u$ in $G_0$, namely~$x_1$.
If~$i > |C_{x_1}|$, none of the twins of~$x_1$ will be the~$i$th neighbour of~$v$. Let~$y_1 = i - |C_{x_1}|$. We again query for the second neighbour of~$u$, namely~$x_2$. If~$y_1 > |C_{x_2}|$, none of the twins of~$x_2$ will be the $i$th neighbour of~$v$. We continue this process until, for some neighbour~$x_j$ of~$u$, we get~$|C_{x_j}| > y_{j'}$. The~$y_{j'}$th twin of~$x_j$ will be the~$i$th neighbour of the vertex~$v$. We can use the twin query to get the label of the~$i$th vertex.

\item[Degree queries.] In a similar spirit to the $i$th out/in neighbour query as above, the degree query for a vertex~$v$ is not efficiently supported. This is because we need to sum the number of twins over the neighbourhood of~$\mathsf{rep}(v)$ in~$G_0$. We sketch the details for completeness.

Let $v \in V(G)$ and let $i=\mathsf{rep}(v)$. The outdegree and indegree of $v$ in $G$ can be expressed as~$|N^+(v)|= \sum_{(i,j)\in E(G_0)} |C_j|$ and $
|N^-(v)|
=
\sum_{(j,i)\in E(G_0)} |C_j|
$, respectively.
In particular, computing degrees in $G$ reduces to counting the neighbours of the corresponding representative in $G_0$, weighted by the sizes of the associated twin classes.

Specifically, for the outdegree of a vertex~$v$, find all the outneighbours of $i=\mathsf{rep}(v)$ in $G_0$; in other words, if $N^+(i) = \{\, j \mid (i, j) \in E(G_0) \,\}$, then for each $j \in N^{+}(i)$, the number of vertices belonging to the $j$th class is determined using \texttt{$\selop$} queries on the bitvector~$B$ as follows:
        \[
        |C_{j}| \wwrel=
        \begin{cases}
            \ \selop_1(B,j+1)-\selop_1(B,j) -1 & j < n,\\[6pt]
            \,|B| - \selop_1(B,j) & j = n.
        \end{cases}
        \]
In a similar way, we can calculate the indegree and total degree of a vertex in~$G$.

\end{description}

\subsection{Space and time analysis}  
In this section, we prove the space and time complexity of our twin-removal data structure.

\begin{theorem}[Twin removed TREX]
\label{thm:space-time-analysis-G}
For a directed graph~$G$, 
there exists a data structure for the twin-removed TREX representation~-- consisting of~$B$, $T$, $A'$ and~$S$~-- using 
$H_0(A') + 3n + o(m+N) + \lg \binom{m}{n} + \lg \binom{N}{n}$ bits of space, while allowing for operations $\mathsf{adjacency}(u, v)$ to run in $O(\lg n)$ time and iterating through $\mathsf{N_{out}}(v)$ and $\mathsf{N_{in}}(v)$, in $O(\log n)$ time per neighbour. The operations $\mathsf{N_{out}}(v, i), \mathsf{N_{in}}(v, i)$, and $\mathsf{degree}(v)$ run in $O(d \lg n) = O(n\lg n)$ time,
where $d$ is the number of neighbors of $\mathsf{rep}(v)$ in $G_0$.
\end{theorem}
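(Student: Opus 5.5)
The plan is to reduce the space accounting to \wref{thm:ultrasuccinct-trex-directed} applied to the twin-reduced graph $G_0$, and to add only the cost of the bitvector $B$. Throughout, $n=|V(G_0)|$, $N=|V(G)|$, and $m$ is the number of edges of $G_0$; this is the quantity that governs $A'$ and $S'$, so I read the $m$ in the statement this way. The structure is then: the TREX structure for $G_0$ with a spanning tree $T$ of $G_0$, plus $B$.

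\textbf{Space.} By \wref{thm:ultrasuccinct-trex-directed}, the TREX part uses $H_0(A')+3n+o(m)+\lg\binom{m}{n}$ bits. Here $H_0(A')=H(G_0-T)$, and the $3n$ bits come from $2n$ for the LOUDS tree plus $n$ for $D$. The bitvector $B$ has length $N$ and contains exactly $n$ one-bits. Storing it with \wref{lem:compressed-bit-vectors} costs $\lg\binom{N}{n}+O(N/\lg^c N)=\lg\binom{N}{n}+o(N)$ bits. Adding the two bounds gives the claimed total $H_0(A')+3n+o(m+N)+\lg\binom{m}{n}+\lg\binom{N}{n}$.

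\textbf{Time.} I would go through the operations of the previous subsection one at a time.
\begin{itemize}
\item $\mathsf{rep}(v)$ and the twin-range query each use $O(1)$ rank/select operations on $B$, so they take $O(1)$ time.
\item An adjacency query computes two representatives and then makes one call to $\mathsf{adjacent}$ in $G_0$. This costs $O(\lg n)$ by \wref{thm:ultrasuccinct-trex-directed}.
\item For $\mathsf{N_{out}\text-next}(v,w)$, there are two cases.
  \begin{itemize}
  \item If $w$ is not the last twin of its class, the answer comes from $O(1)$ operations on $B$.
  \item Otherwise, we compute one $\mathsf{N_{out}\text-rank}$ and one $\mathsf{N_{out}}$ in $G_0$. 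Each costs $O(\lg n)$, since each uses a constant number of wavelet-tree operations plus $O(1)$-time operations on $T$, $D$ and $S'$.
  \end{itemize}
  So each step of the iteration costs $O(\lg n)$. The same holds for inneighbours.
\item For $\mathsf{N_{out}}(v,i)$, $\mathsf{N_{in}}(v,i)$ and $\mathsf{degree}(v)$, we scan the neighbourhood of $\mathsf{rep}(v)$ in $G_0$, which has at most $d$ vertices. Each scanned neighbour needs one $O(\lg n)$ query in $G_0$ and one $O(1)$ class-size computation via select on $B$. This gives $O(d\lg n)$ in total, and $d\le n$ yields the $O(n\lg n)$ bound.
\end{itemize}

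\textbf{Main obstacle.} The step needing the most care is the edge case in the $\mathsf{next}$ operation. When $w$ is the last copy in its class, we must locate the position of $w_0$ in $v_0$'s neighbour order consistently across the tree part and the $A'$ part. This is exactly what $\mathsf{N_{out}\text-rank}$ provides. I would therefore verify that $\mathsf{N_{out}\text-rank}$ and $\mathsf{N_{out}}$ use the same convention: tree neighbours come first, then neighbours in $A'$. I would also check the boundary at the final neighbour, where the iteration should return \texttt{null}; this is detected by comparing $r+1$ against the outdegree, which takes $O(1)$ time.
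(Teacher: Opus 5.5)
Your proposal is correct and follows essentially the same route as the paper: the space is the TREX bound of \wref{thm:ultrasuccinct-trex-directed} applied to the twin-reduced graph $G_0$, plus $\lg\binom{N}{n}+o(N)$ bits for the compressed bitvector $B$. The query times come from $O(1)$ rank/select operations on $B$ combined with $O(\lg n)$-time TREX queries on $G_0$ (a constant number per adjacency or $\mathsf{next}$ step, and $d$ of them for random access and degree), so your operation-by-operation breakdown is just a more explicit version of the paper's argument, which simply cites the TREX theorem for adjacency and $\mathsf{next}$.
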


\begin{proof}
From \wref{thm:ultrasuccinct-trex-directed}, we already know the space required by~$T, A'$ and $S$.
Recall that bitvector~$B$ encodes the multiplicities of twins in~$G$.
Using compressed bitvectors (\wref{lem:compressed-bit-vectors}), $B$ with~$n$ many 1s and $|B|=N$ can be stored in~$
\lg \binom{N}{n} + o(N) \le n \lg \frac{N}{n} + O\Bigl(\frac{N}{\lg^c N}\Bigr)$ bits, for any constant $c>0$, while supporting $\rankop$ and $\selop$ in $O(1)$ time.  This proves the space analysis.

The query times for $\mathsf{adjacency}(u, v)$, $\mathsf{N_{out}}\textsf{-next}(v,w)$ and $\mathsf{N_{in}}\textsf{-next}(v,w)$ directly follow from \wref{thm:ultrasuccinct-trex-directed}. 
On the other hand, for operations $\mathsf{N_{out}}(v, i), \mathsf{N_{in}}(v, i)$, and $\mathsf{degree}(v)$, we have to iterate sequentially over the neighbourhoods in $G_0$, with a $O(\lg n)$ time query for each of the $d$ representative neighbours. Clearly, $d\le n$.
\end{proof}

\section{Good graphs classes}\label{sec:good-graphs}

In this chapter, we prove that removing trees often leads to a significant
decrease in the number of bits needed to store a graph.
First, it will be useful to use a variant of \wref{thm:generic_apx}
that is tailored for our purposes in this chapter.  For the following we will often think of
equations holding only up to some error term. For this purpose, when we write for example $x = y \pm \epsilon$,
we mean that $|x - y| \leq \epsilon$.

\begin{theorem}
    Consider the same setup as in \wref{thm:generic_apx}. Then
    \[
        H(d_1 - x_1, \ldots, d_n - x_n) 
        \wwrel=
        H(d_1, \ldots, d_n) \bin+ \sum_i x_i \lg \frac{d_i}{m} \wbin\pm \frac{2k}{\ln 2}.
    \]
\end{theorem}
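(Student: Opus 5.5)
The plan is to start from the two-sided sandwich already established in \wref{thm:generic_apx}:
\[
	\optent(x) - \frac{k}{\ln 2} \wwrel\le (m-k)\lg(m-k) - \sum_i d_i\lg d_i + \optlin(x) \wwrel\le \optent(x),
\]
where $\optent(x) = H(d_1-x_1,\ldots,d_n-x_n)$. This gives $H(d-x) = (m-k)\lg(m-k) - \sum_i d_i \lg d_i + \sum_i x_i\lg d_i \pm k/\ln 2$. It then remains to rewrite the right-hand side in terms of $H(d_1,\ldots,d_n) = m\lg m - \sum_i d_i\lg d_i$ and $\sum_i x_i \lg(d_i/m) = \sum_i x_i\lg d_i - k\lg m$.

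Substituting these two identities, the main term becomes
\[
	H(d) + \sum_i x_i\lg\frac{d_i}{m} + \bigl[(m-k)\lg(m-k) - m\lg m + k\lg m\bigr].
\]
The bracket simplifies to $(m-k)\bigl(\lg(m-k)-\lg m\bigr) = -(m-k)\lg\frac{m}{m-k}$. So the only remaining step is to show that this correction term has absolute value at most $k/\ln 2$. Combined with the $\pm k/\ln 2$ from \wref{thm:generic_apx}, that yields the claimed total error of $2k/\ln 2$.

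To bound the bracket, I apply \wref{lem:simpleBound} with $d=m$ and $x=k$. This is valid because $0\le k\le m$, since each $x_i\le d_i$. The lemma gives $0\le (m-k)(\lg m - \lg(m-k))\le k/\ln 2$, which is exactly what is needed. The case $k=m$ should be handled by the convention $0\lg 0=0$; there the bracket is $0$ and the bound is trivial.

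I expect no real obstacle; the only care needed is bookkeeping of signs. The correction term is nonpositive, while the error from \wref{thm:generic_apx} is one-sided in the other direction, so one could even state the sharper bound $[-k/\ln 2,\,k/\ln 2]$. For the stated symmetric $\pm 2k/\ln 2$ form, I simply add the two absolute bounds via the triangle inequality.
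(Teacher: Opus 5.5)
Your proposal is correct and follows the paper's proof essentially step for step. Both start from the sandwich of \wref{thm:generic_apx}, use \wref{lem:simpleBound} with $d=m$, $x=k$ to trade $(m-k)\lg(m-k)$ for $(m-k)\lg m$ at a cost of at most $k/\ln 2$, and regroup into $H(d_1,\ldots,d_n)+\sum_i x_i\lg(d_i/m)$. Your remark that the two error terms have opposite signs, so that the total error actually lies in $[-k/\ln 2,\,k/\ln 2]$, is also correct and slightly sharpens the stated $\pm 2k/\ln 2$.
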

\begin{proof}
    Indeed, noting that $\optent(x_1, \ldots, x_n) = H(d_1 - x_1, \ldots, d_n - x_n)$, we have, by
    the main inequality of \wref{thm:generic_apx},
    \begin{align*}
        H(d_1 - x_1, \ldots, d_n - x_n) 
    	&\wwrel=
        \optent(x_1, \ldots, x_n) 
    \\	&\wwrel=
        (m - k) \lg(m - k) - \sum_i d_i \lg d_i + \optlin(x_1, \ldots, x_n) \bin\pm \frac{k}{\ln 2}
    \\	&\wwrel= 
        (m - k) \lg(m - k) - \sum_i d_i \lg d_i + \sum_i x_i \lg d_i \bin\pm \frac{k}{\ln 2}
    \end{align*}

    Now, apply \wref{lem:simpleBound} to $(m - k)(\lg m - \lg(m - k))$ to find that $(m - k) \lg (m - k)
    = (m - k) \lg m \pm \frac{k}{\ln 2}$, hence
    \begin{align*}
        H(d_1 - x_1, \ldots, d_n - x_n)
        &\wwrel= 
        (m - k) \lg m - \sum_i d_i \lg d_i + \sum_i x_i \lg d_i \wbin\pm \frac{2k}{\ln 2}
    \\  &\wwrel= 
        \sum_i (d_i - x_i) \lg m - \sum_i d_i \lg d_i + \sum_i x_i \lg d_i \wbin\pm \frac{2k}{\ln 2}
    \\  &\wwrel= 
    	\left(- \sum_i d_i \lg \frac{d_i}{m}\right) + \left(\sum_i x_i \lg \frac{d_i}{m}\right) \wbin\pm \frac{2k}{\ln 2}
    \\  &\wwrel= 
    	H(d_1, \ldots, d_n) + \sum_i x_i \lg \frac{d_i}{m} \wbin\pm \frac{2k}{\ln 2}.
        \iflipics{\qedhere}{}
    \end{align*}
\end{proof}

Let us now apply this theorem to our particular case.

\begin{corollary}\label{cor:pre-LP}
    Consider any digraph $G = (V, E)$ with $n$ vertices and $m$ edges. Consider any spanning tree $T$
of $G$. Assign edge $(i, j)$ weight $w(i, j) = - \lg d_j / m$, where $d_j$ is the indegree of $j$ in $G$.
We have
\[
    H(G - T) \wwrel= H(G) \bin- \sum_{\mathclap{(i, j) \in T}} w(i, j) \wbin\pm \frac{2n}{\ln 2}.
\]
Hence the optimal savings we can gain (up to an error of $\frac{2n}{\ln 2}$) by removing some tree is given by the maximum
cost spanning tree under the weights described above.
\end{corollary}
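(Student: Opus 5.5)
This follows by specialising the preceding theorem to spanning trees, exactly as in the discussion after \wref{thm:generic_apx}. Let $d_1,\ldots,d_n$ be the indegree sequence of $G$, set $k = n-1$, and for a spanning tree $T$ let $x^{(T)}_j = |\{(i,j)\in T\}|$ be the indegree of $j$ in $T$. Then $0\le x^{(T)}_j \le d_j$, since every tree edge into $j$ is an edge of $G$ into $j$, and $\sum_j x^{(T)}_j = n-1 = k$. Hence $x^{(T)}$ lies in an admissible class $\mathcal C$, namely the indegree sequences of spanning trees of $G$. Moreover, the indegree sequence of $G-T$ is exactly $(d_1-x^{(T)}_1,\ldots,d_n-x^{(T)}_n)$, so $H(G-T) = H(d_1-x_1,\ldots,d_n-x_n)$ and $H(G)=H(d_1,\ldots,d_n)$.

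Applying the preceding theorem gives
\[
H(G-T) = H(G) + \sum_j x^{(T)}_j \lg\frac{d_j}{m} \pm \frac{2(n-1)}{\ln 2}.
\]
The next step is to regroup the sum by edges. Each edge $(i,j)\in T$ contributes exactly once to $x^{(T)}_j$, so
\[
\sum_j x^{(T)}_j \lg(d_j/m) = \sum_{(i,j)\in T}\lg(d_j/m) = -\sum_{(i,j)\in T} w(i,j),
\]
with $w(i,j) = -\lg(d_j/m) = \lg(m/d_j)$ (reading the statement's ``$-\lg d_j/m$'' as $-\lg(d_j/m)$). Every edge into $j$ witnesses $d_j \ge 1$, so all these logarithms are finite. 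Finally, $2(n-1)/\ln 2 \le 2n/\ln 2$, which gives the displayed identity.

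For the ``hence'' part: the error term is uniform in $T$. Therefore minimising $H(G-T)$ agrees, up to $2n/\ln 2$ on each side, with maximising $\sum_{(i,j)\in T} w(i,j)$. So the best achievable saving $H(G)-\min_T H(G-T)$ equals the maximum-weight spanning tree value up to $\pm 2n/\ln 2$. To see this, compare the true optimum with the max-weight tree using the two-sided bound at both trees.

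There is no real obstacle. The only care needed is the bookkeeping identity between per-vertex counts $x_j$ and per-edge sums, and checking the side conditions of \wref{thm:generic_apx}. That theorem's class is written with the bound $x_i\le d_n$; I interpret this as $x_i \le d_i$, which is what its proof actually uses.
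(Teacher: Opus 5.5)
Your proposal is correct and takes the same route as the paper, which gives no separate argument and simply specialises the preceding $\pm\frac{2k}{\ln 2}$ variant of Theorem~\ref{thm:generic_apx}. It does so exactly as you do, with $k=n-1$ and $x^{(T)}$ the indegree vector of $T$, followed by the regrouping $\sum_j x^{(T)}_j\lg(d_j/m) = -\sum_{(i,j)\in T} w(i,j)$ and the bound $2(n-1)\le 2n$. Your two-sided comparison for the ``hence'' part and your reading of the constraint $x_i\le d_n$ as $x_i\le d_i$ are also sound.
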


This is both stronger and weaker than \wref{thm:generic_apx}~-- whereas that theorem told us that linear optimisation
is sufficient for approximate entropy minimisation, this theorem also relates the precise value of
the entropy associated with deleting a tree with a linear function of the edges remaining in the
graph (but with a larger error).

Recall the classic result of~\cite{EdmondsPolytope}: the 0--1 vectors corresponding to spanning trees
of a given graph $G$ form the vertices of a polytope with a simple formulation as an LP. More precisely, we can find a maximum cost
spanning tree in a graph $G = (V, E)$ with indegree sequence $d_1, \ldots, d_n$, with weights as above,
by solving the following linear program:

\begin{maxi*}{x_{e}, e \in E}{-\sum_{(i, j) \in E} x_{ij} \lg \frac{d_j}{m}}{}{}
    \addConstraint{\sum_{e \in E} x_e}{\wrel= n - 1}{}
    \addConstraint{\sum_{e \in E \cap S^2} x_e}{\wrel\leq |S| - 1}{\quad\forall S \subseteq V, 1 < |S| < n}
    \addConstraint{x_e}{\wrel\geq 0}{\quad\forall e \in E}
\end{maxi*}

It is easy to see that eliminating the first equality constraint leads to a polytope whose vertices
are precisely the acyclic subgraphs of $G$. Since we have only non-negative costs and will only consider connected graphs, we may as well
weaken that constraint. This leads to the following equivalent formulation:

\begin{maxi}{x_{e}, e \in E}{-\sum_{(i, j) \in E} x_{ij} \lg \frac{d_j}{m}}{}{}\label{eq:LP}
    \addConstraint{\sum_{e \in E \cap S^2} x_e}{\wrel\leq |S| - 1}{\quad\forall S \subseteq V, |S| \geq 2}
    \addConstraint{x_e}{\wrel \geq 0}{\quad\forall e \in E}
\end{maxi}

(This can also be seen as an equivalent formulation for the independence polytope of the graphical matroid.)
We hence get the following fact, by combining this LP formulation for maximum cost spanning tree,
together with \wref{cor:pre-LP}.

\begin{corollary}\label{cor:LP}
    Consider some connected digraph $G = (V, E)$ with $n$ vertices and $m$ edges, with indegreee sequence $d_1, \ldots, d_n$.
Suppose there exists some values $x_e \in \mathbb{R}_{\geq 0}$ for $e \in E$, such that for all $S \subseteq V$
with $|S| \geq 2$ we have
\[
    \sum_{e \in E \cap S^2} x_e \wwrel\leq |S| - 1.
\]
Then there exists some spanning tree $T$ such that
\[
    H(G - T) \wwrel\leq H(G) + \left(\sum_{(i, j) \in E} x_{ij} \lg \frac{d_j}{m}\right) \bin+ \frac{2n}{\ln 2}.
\]
\end{corollary}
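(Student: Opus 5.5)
The plan is to combine \wref{cor:pre-LP} with the integrality of the spanning-tree polytope (Edmonds), as in the LP~\eqref{eq:LP}. All weights $w(i,j) = -\lg(d_j/m)$ are non-negative, because $d_j \le m$. The given vector $(x_e)_{e\in E}$ satisfies every constraint of~\eqref{eq:LP}, so it is a feasible point. Its objective value is $-\sum_{(i,j)\in E} x_{ij}\lg(d_j/m)$, which is therefore at most the LP optimum.

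Next I would identify the LP optimum with a maximum-weight spanning tree. By Edmonds' theorem, the polytope defined by~\eqref{eq:LP} is the independence polytope of the graphical matroid of the underlying undirected graph. Its vertices are the incidence vectors of forests, so the maximum is attained at some forest $F$. Because the weights are non-negative and $G$ is connected, $F$ can be extended to a spanning tree $T$ without decreasing its weight. Hence
\[
\sum_{(i,j)\in T} w(i,j) \wwrel\ge -\sum_{(i,j)\in E} x_{ij}\lg\frac{d_j}{m}.
\]
Two remarks here. Antiparallel edges $(i,j)$ and $(j,i)$ both lie in $S^2$ for any $S\supseteq\{i,j\}$; taking $S=\{i,j\}$ gives $x_{ij}+x_{ji}\le 1$, so the matroid view still applies, with parallel elements. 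Loops can be discarded, since they never lie in a tree and only add further constraints.

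Finally I would invoke \wref{cor:pre-LP}, using only the upper side of its $\pm$ bound:
\[
H(G-T) \wwrel\le H(G) - \sum_{(i,j)\in T} w(i,j) + \frac{2n}{\ln 2} \wwrel\le H(G) + \sum_{(i,j)\in E} x_{ij}\lg\frac{d_j}{m} + \frac{2n}{\ln 2},
\]
which is exactly the claimed inequality.

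The only step needing care is the justification that dropping the equality constraint $\sum_e x_e = n-1$ loses nothing, together with the integrality of the forest polytope in the presence of directed or multi-edges. The paper states both facts just before the corollary, so I would cite them. If more rigor is wanted, one can run the greedy matroid algorithm directly: the greedy algorithm attains the LP optimum over the independence polytope for non-negative weights, and the tree it returns is spanning because $G$ is connected.
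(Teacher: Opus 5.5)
Your proposal is correct and follows the paper's own route. The given $x$ is feasible for LP~\eqref{eq:LP}, whose optimum is attained by a maximum-weight spanning tree (by Edmonds' integrality, with the equality constraint dropped because the weights $-\lg(d_j/m)$ are nonnegative and $G$ is connected), and the upper half of \wref{cor:pre-LP} then finishes the argument.

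Your aside on loops is wrong, though. Dropping a loop $(i,i)$ deletes the nonpositive term $x_{ii}\lg(d_i/m)$ from the right-hand side, so it proves a weaker bound. With loops the statement actually fails: take an in-star into one vertex with a loop at every leaf and set all $x_e=\tfrac12$. This $x$ is feasible, yet the star is the only spanning tree and $H(G-T)=(n-1)\lg(n-1)$, while the claimed bound is only $\tfrac{n-1}{2}\bigl(2+\lg(n-1)\bigr)+\tfrac{2n}{\ln 2}$. The corollary must therefore be read for loopless digraphs, as the paper implicitly does.
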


This is precisely what we will use to prove the combinatorial facts we want. Recall that $\delta(G)$ is the ratio of the number of edges of $G$ divided by the number of vertices of $G$, $\delta(G) = |E(G)|/|V(G)|$.

\thmEquanimity*
\begin{proof}
    Consider the values $x_e = n / 2m \alpha \geq 0$. Let us check it follows the conditions
laid out in \wref{cor:LP}. Consider any subset $S \subseteq V$ with $|S| \geq 2$. Suppose $S$ has
$n'$ vertices within it, and $m'$ edges within it. By assumption, $m' / n' \leq \alpha m / n$. So
\[
    \sum_{e \in E \cap S^2} x_e
    \wwrel= \sum_{e \in E \cap S^2} \frac{n}{2m \alpha}
    \wwrel= \frac{m'}{2} \left(\frac{\alpha m}{n}\right)^{-1}
    \wwrel\leq \frac{m'}{2} \frac{n'}{m'}
    \wwrel= \frac{n'}{2} 
    \wwrel\leq n' - 1,
\]
as required. Now let us ensure that the savings is sufficient. Note that (excepting the error of $2n / \ln 2$),
we save
\begin{align*}
    -\sum_{(i, j) \in E} x_{ij} \lg \frac{d_j}{m}
    &\wwrel=
    -\sum_{(i, j) \in E} \frac{n}{2m \alpha} \lg \frac{d_j}{m} 
    \\&\wwrel=
    -\sum_{j \in V} d_j \frac{n}{2m \alpha} \lg \frac{d_j}{m}
    \\&\wwrel=
    \frac{n}{2\alpha} \left(-\sum_{j \in V} \frac{d_j}{m} \lg \frac{d_j}{m}\right)
    \\&\wwrel= \frac{n}{2\alpha} \DegreeEntropy(G),
\end{align*}
as required.
\end{proof}

Finally, by similar techniques we can show that graphs where many vertices have nonzero indegree
must have a high savings.

\thmManyNonzero*

\begin{proof}
    Let $d_1, \ldots, d_n$ be the indegree sequence. Suppose without loss of generality that $d_1 \geq \cdots \geq d_n$.
Suppose that $d_1, \ldots, d_k > 0$ and $d_{k + 1} = \cdots = d_n = 0$. Hence, by assumption $k \geq n / \alpha$.
Consider the sequence $x_{ij} = 1 / 2 d_j$. (Note that for every edge $(i, j) \in E$, $d_j \geq 1$ so this
division is well-defined.) As before we check that this satisfies the conditions of \wref{cor:LP}.
Consider any subset $S \subseteq V$ with at least 2 vertices. Note:
\[
    \sum_{e \in E \cap S^2} x_e
    \wwrel\leq \sum_{\substack{j \in S \\ d_j > 0}} \sum_{(i, j) \in E} x_{ij} 
    \wwrel= \sum_{\substack{j \in S \\ d_j > 0}} \sum_{(i, j) \in E} \frac{1}{2 d_j} 
    \wwrel= \sum_{\substack{j \in S \\ d_j > 0}} d_j \frac{1}{2 d_j}
    \wwrel\leq \frac{|S|}{2} 
    \wwrel\leq |S| - 1.
\]
So again, we satisfy the necessary conditions. Let us now compute how much we save.
\begin{equation}\label{eq:interesting}
\begin{aligned}[b]
    -\sum_{(i, j) \in E} x_{ij} \lg \frac{d_j}{m}
    &\wwrel= 
    	-\sum_{\substack{j \in V \\ d_j > 0}} \sum_{(i, j) \in E} x_{ij} \lg \frac{d_j}{m} 
    \wwrel= -\sum_{\substack{j \in V \\ d_j > 0}} \sum_{(i, j) \in E} \frac{1}{2 d_j} \lg \frac{d_j}{m}
    \\&\wwrel= -\sum_{\substack{j \in V \\ d_j > 0}} d_j \frac{1}{2 d_j} \lg \frac{d_j}{m} 
    \wwrel= -\sum_{\substack{j \in V \\ d_j > 0}} \frac{1}{2} \lg \frac{d_j}{m} 
    \\&\wwrel= \frac{k}{2} \left(-\sum_{\substack{j \in V \\ d_j > 0}} \frac{1}{k} \lg \frac{d_j}{m} \right) 
    \wwrel\geq \frac{n}{2\alpha} \left(-\sum_{\substack{j \in V \\ d_j > 0}} \frac{1}{k} \lg \frac{d_j}{m} \right).
\end{aligned}
\end{equation}
Now, compare $-\sum_{{j \in V : d_j > 0}} \frac{1}{k} \lg \frac{d_j}{m}$ with
\[
    \DegreeEntropy(G) \wwrel= - \sum_{\substack{j \in V \\ d_j > 0}} \frac{d_j}{m} \lg \frac{d_j}{m}.
\]
(We often omit the $d_j > 0$ since we multiply by $d_j$, but here we include it to emphasise we are
summing over the same indices.) We claim that
\begin{equation}\label{eq:theThing}
    -\sum_{\substack{j \in V \\ d_j > 0}} \frac{1}{k} \lg \frac{d_j}{m}
    \wwrel\geq
    -\sum_{\substack{j \in V \\ d_j > 0}} \frac{d_j}{m} \lg \frac{d_j}{m}.
\end{equation}
Substituting into~\eqref{eq:interesting} implies that we save at least $\frac{n}{2\alpha} \DegreeEntropy(G)$,
ignoring the $2n / \ln 2$ error term.

Note that the two terms in~\eqref{eq:theThing} are both weighted averages of $-\lg (d_j / m)$
for $j = 1, \ldots, k$.
Suppose that
\[
    d_1 \geq \cdots \geq d_t \wrel\geq \frac{m}{k} \wrel\geq d_{t + 1} \geq \cdots d_k.
\]
Note that the terms $- \lg (d_1 / m), \ldots, - \lg (d_t / m)$ are weighted more heavily in the right side of~\eqref{eq:theThing},
whereas the terms $- \lg (d_{t + 1} / m), \ldots, - \lg (d_k / m)$ are weighted more heavily in the left side of~\eqref{eq:theThing}.
However it is plain to see that \emph{all} of the former terms are no larger than \emph{all}
of the latter terms. This implies the inequality we want, and the entire result.
\end{proof}

	\myacknowledgements

%
%
\bibliography{references.bib,bib.bib}

@inproceedings{GuruswamiTrevisan2005,
  author       = {Venkatesan Guruswami and
                  Luca Trevisan},
  editor       = {Chandra Chekuri and
                  Klaus Jansen and
                  Jos{\'{e}} D. P. Rolim and
                  Luca Trevisan},
  title        = {The Complexity of Making Unique Choices: Approximating 1-in- k {SAT}},
  booktitle    = {Approximation, Randomization and Combinatorial Optimization, Algorithms
                  and Techniques, 8th International Workshop on Approximation Algorithms
                  for Combinatorial Optimization Problems, {APPROX} 2005 and 9th InternationalWorkshop
                  on Randomization and Computation, {RANDOM} 2005, Berkeley, CA, USA,
                  August 22-24, 2005, Proceedings},
  series       = {Lecture Notes in Computer Science},
  volume       = {3624},
  pages        = {99--110},
  publisher    = {Springer},
  year         = {2005},
  url          = {https://doi.org/10.1007/11538462\_9},
  doi          = {10.1007/11538462\_9},
  bibsource    = {dblp computer science bibliography, https://dblp.org}
}

@article{feige98setcover,
author = {Feige, Uriel},
title = {A threshold of ln n for approximating set cover},
year = {1998},
issue_date = {July 1998},
publisher = {Association for Computing Machinery},
address = {New York, NY, USA},
volume = {45},
number = {4},
issn = {0004-5411},
doi = {10.1145/285055.285059},
journal = {J. ACM},
month = jul,
pages = {634–652},
numpages = {19}
}

@preamble{"\providecommand{\NOOPSORT}[1]{}"}

@book{Mahmoud2008,
	title = {P{\'{o}}lya Urn Models},
	author = {Mahmoud, Hosam M.},
	isbn = {978-1-4200-5983-0},
	publisher = {Chapman {\&} Hall},
	year = {2008}
}

@article{RamanRamanRao2007,
	Author = {Rajeev Raman and Venkatesh Raman and Rao Satti, Srinivasa},
	Doi = {10.1145/1290672.1290680},
	Journal = {{ACM} Transactions on Algorithms},
	Month = {nov},
	Number = {4},
	Pages = {43--es},
	Publisher = {Association for Computing Machinery ({ACM})},
	Title = {Succinct indexable dictionaries with applications to encoding k-ary trees, prefix sums and multisets},
	Volume = {3},
	Year = {2007}
}

@article{CastelliAleardiSchaeffer2008,
  title = {Succinct representations of planar maps},
  volume = {408},
  ISSN = {0304-3975},
  DOI = {10.1016/j.tcs.2008.08.016},
  number = {2–3},
  journal = {Theoretical Computer Science},
  publisher = {Elsevier BV},
  author = {Castelli Aleardi,  L. and Devillers,  O. and Schaeffer,  G.},
  year = {2008},
  month = nov,
  pages = {174–187}
}

@inproceedings{MunroWu2018,
	author    = {J. Ian Munro and
	Kaiyu Wu},
	title     = {Succinct Data Structures for Chordal Graphs},
	booktitle = {29th International Symposium on Algorithms and Computation, {ISAAC}
	2018, December 16-19, 2018, Jiaoxi, Yilan, Taiwan},
	pages     = {67:1--67:12},
	year      = {2018},
	doi       = {10.4230/LIPIcs.ISAAC.2018.67},
	bibsource = {dblp computer science bibliography, https://dblp.org}
}

@article{Kamali2017,
  title = {Compact Representation of Graphs of Small Clique-Width},
  volume = {80},
  ISSN = {1432-0541},
  DOI = {10.1007/s00453-017-0365-6},
  number = {7},
  journal = {Algorithmica},
  publisher = {Springer Science and Business Media LLC},
  author = {Kamali,  Shahin},
  year = {2017},
  month = sep,
  pages = {2106–2131}
}

@inproceedings{KammerMeintrup2023,
  doi = {10.4230/LIPICS.ISAAC.2023.44},
  author = {Kammer,  Frank and Meintrup,  Johannes},
  language = {en},
  title = {Succinct Planar Encoding with Minor Operations},
  journal = {LIPIcs,  Volume 283,  ISAAC 2023},
  volume = {283},
  pages = {44:1--44:18},
  publisher = {Schloss Dagstuhl – Leibniz-Zentrum f\"{u}r Informatik},
  year = {2023},
  copyright = {Creative Commons Attribution 4.0 International license}
}

@inproceedings{BlellochFarzan2010,
  title = {Succinct Representations of Separable Graphs},
  DOI = {10.1007/978-3-642-13509-5_13},
  booktitle = {Combinatorial Pattern Matching (CPM)},
  publisher = {Springer Berlin Heidelberg},
  author = {Blelloch,  Guy E. and Farzan,  Arash},
  year = {2010},
  pages = {138–150}
}

@article{FerresFuentesSepulvedaGagieHeNavarro2020,
  title = {Fast and compact planar embeddings},
  volume = {89},
  ISSN = {0925-7721},
  DOI = {10.1016/j.comgeo.2020.101630},
  journal = {Computational Geometry},
  publisher = {Elsevier BV},
  author = {Ferres,  Leo and Fuentes-Sepúlveda,  José and Gagie,  Travis and He,  Meng and Navarro,  Gonzalo},
  year = {2020},
  month = aug,
  pages = {101630}
}

@article{TsakalidisWildZamaraev2023,
	title = {Succinct Permutation Graphs},
	author = {Konstantinos Tsakalidis and Sebastian Wild and Viktor Zamaraev},
	year = {2023},
	doi = {10.1007/s00453-022-01039-2},
	url = {https://www.wild-inter.net/publications/tsakalidis-wild-zamaraev-2023},
	volume = {85},
	number = {2},
	pages = {509--543},
	eprint={2010.04108},
	archivePrefix={arXiv},
	primaryClass={cs.DS},
	journal = {Algorithmica}
}

@article{CardinalFioriniJoret2008,
  title = {Minimum entropy orientations},
  volume = {36},
  ISSN = {0167-6377},
  DOI = {10.1016/j.orl.2008.06.010},
  number = {6},
  journal = {Operations Research Letters},
  publisher = {Elsevier BV},
  author = {Cardinal,  Jean and Fiorini,  Samuel and Joret,  Gwenaël},
  year = {2008},
  month = nov,
  pages = {680–683}
}

@article{Navarro2014,
  title = {Wavelet trees for all},
  volume = {25},
  ISSN = {1570-8667},
  DOI = {10.1016/j.jda.2013.07.004},
  journal = {Journal of Discrete Algorithms},
  publisher = {Elsevier BV},
  author = {Navarro,  Gonzalo},
  year = {2014},
  month = mar,
  pages = {2–20}
}

@inproceedings{GrossiGuptaVitter2003,
	author = {Grossi, Roberto and Gupta, Ankur and Vitter, Jeffrey Scott},
	title = {High-order entropy-compressed text indexes},
	year = {2003},
	publisher = {SIAM},
	address = {USA},
	booktitle = {Proceedings of the Fourteenth Annual ACM-SIAM Symposium on Discrete Algorithms},
	pages = {841–850},
	numpages = {10},
	location = {Baltimore, Maryland},
	series = {SODA}
}

@article{Gonzalez1985,
  title = {Clustering to minimize the maximum intercluster distance},
  volume = {38},
  ISSN = {0304-3975},
  url = {http://dx.doi.org/10.1016/0304-3975(85)90224-5},
  DOI = {10.1016/0304-3975(85)90224-5},
  journal = {Theoretical Computer Science},
  publisher = {Elsevier BV},
  author = {Gonzalez,  Teofilo F.},
  year = {1985},
  pages = {293–306}
}

@article{MiloShenOrrItzkovitzKashtanChklovskiiAlon2002,
  title = {Network Motifs: Simple Building Blocks of Complex Networks},
  volume = {298},
  ISSN = {1095-9203},
  url = {http://dx.doi.org/10.1126/science.298.5594.824},
  DOI = {10.1126/science.298.5594.824},
  number = {5594},
  journal = {Science},
  publisher = {American Association for the Advancement of Science (AAAS)},
  author = {Milo,  R. and Shen-Orr,  S. and Itzkovitz,  S. and Kashtan,  N. and Chklovskii,  D. and Alon,  U.},
  year = {2002},
  month = oct,
  pages = {824–827}
}

@article{SzpankowskiGrama2018,
  title = {Frontiers of Science of Information: Shannon Meets Turing},
  volume = {51},
  ISSN = {1558-0814},
  DOI = {10.1109/mc.2018.1151004},
  number = {1},
  journal = {Computer},
  publisher = {Institute of Electrical and Electronics Engineers (IEEE)},
  author = {Szpankowski,  Wojciech and Grama,  Ananth},
  year = {2018},
  month = jan,
  pages = {28–38}
}

@inproceedings{IsmailiAlaouiNamrataWild2025,
	title = {Succinct Preferential-Attachment Graphs},
	author = {Ziad {Ismaili Alaoui} and Namrata and Sebastian Wild},
	booktitle = {International Workshop on Graph-Theoretic Concepts in Computer Science (WG)},
	year = {2025},
	publisher = {Springer},
	archivePrefix = {arXiv},
	arxivId = {2506.21436},
	eprint  = {2506.21436},
	primaryClass = {cs.DS},
	url = {https://www.wild-inter.net/publications/ismaili-alaoui-namrata-wild-2025}
}

@article{PatonHartleStepanyantsVanDerHoorn2022,
  title = {Entropy of labeled versus unlabeled networks},
  volume = {106},
  ISSN = {2470-0053},
  DOI = {10.1103/physreve.106.054308},
  number = {5},
  journal = {Physical Review E},
  publisher = {American Physical Society (APS)},
  author = {Paton,  Jeremy and Hartle,  Harrison and Stepanyants,  Huck and van der Hoorn,  Pim and Krioukov,  Dmitri},
  year = {2022},
  month = nov 
}

@article{BenoitDemaineMunroRamanRamanRao2005,
  author    = {David Benoit and Erik D. Demaine and J. Ian Munro and Rajeev Raman and Venkatesh Raman and S. Rao, Srinivasa},
  title     = {Representing Trees of Higher Degree.},
  journal   = {Algorithmica},
  volume    = {43},
  number    = {4},
  year      = {2005},
  pages     = {275-292},
  doi       = {10.1007/s00453-004-1146-6},
}

@article{NavarroSadakane2014,
  doi = {10.1145/2601073},
  year  = {2014},
  month = {may},
  publisher = {Association for Computing Machinery ({ACM})},
  volume = {10},
  number = {3},
  pages = {1--39},
  author = {Gonzalo Navarro and Kunihiko Sadakane},
  title = {Fully Functional Static and Dynamic Succinct Trees},
  journal = {{ACM} Transactions on Algorithms}
}

@article{FarzanMunro2014,
  title = {A Uniform Paradigm to Succinctly Encode Various Families of Trees},
  doi = {10.1007/s00453-012-9664-0},
  year  = {2014},
  month = jun,
  publisher = {Springer},
  volume = {68},
  number = {1},
  pages = {16--40},
  author = {Arash Farzan and J. Ian Munro},
  journal = {Algorithmica}
}

@article{MunroRaman2001,
  title = {Succinct Representation of Balanced Parentheses and Static Trees},
  doi = {10.1137/s0097539799364092},
  year  = {2001},
  month = jan,
  publisher = {SIAM},
  volume = {31},
  number = {3},
  pages = {762--776},
  author = {J. Ian Munro and Venkatesh Raman},
  journal = {{SIAM} Journal on Computing}
}

@InProceedings{Jacobson1989,
  author = 	 {Guy Jacobson},
  title = 	 {Space-efficient static trees and graphs},
  booktitle = 	 {Proceedings of the 30th Annual IEEE Symposium on Foundations of Computer Science},
  year = 	 {1989},
  pages =        {549-554},
  doi       = {10.1109/SFCS.1989.63533}
}

@inproceedings{BoldiVigna2004,
  series = {WWW04},
  title = {The webgraph framework I: compression techniques},
  DOI = {10.1145/988672.988752},
  booktitle = {Proceedings of the 13th international conference on World Wide Web},
  publisher = {ACM},
  author = {Boldi,  P. and Vigna,  S.},
  year = {2004},
  month = may,
  pages = {595–602},
  collection = {WWW04}
}

@article{ChoiSzpankowski2012,
  title = {Compression of Graphical Structures: Fundamental Limits,  Algorithms,  and Experiments},
  volume = {58},
  ISSN = {1557-9654},
  DOI = {10.1109/tit.2011.2173710},
  number = {2},
  journal = {IEEE Transactions on Information Theory},
  publisher = {Institute of Electrical and Electronics Engineers (IEEE)},
  author = {Choi,  Yongwook and Szpankowski,  Wojciech},
  year = {2012},
  month = feb,
  pages = {620–638}
}

@article{TurowskiMagnerSzpankowski2020,
  title = {Compression of Dynamic Graphs Generated by a Duplication Model},
  volume = {82},
  ISSN = {1432-0541},
  DOI = {10.1007/s00453-020-00699-2},
  number = {9},
  journal = {Algorithmica},
  publisher = {Springer Science and Business Media LLC},
  author = {Turowski,  Krzysztof and Magner,  Abram and Szpankowski,  Wojciech},
  year = {2020},
  month = apr,
  pages = {2687–2707}
}

@article{JanssonSadakaneSung2012,
  title = {Ultra-succinct representation of ordered trees with applications},
  volume = {78},
  DOI = {10.1016/j.jcss.2011.09.002},
  number = {2},
  journal = {Journal of Computer and System Sciences},
  publisher = {Elsevier BV},
  author = {Jansson,  Jesper and Sadakane,  Kunihiko and Sung,  Wing-Kin},
  year = {2012},
  month = mar,
  pages = {619–631}
}

@article{FischerPeters2016,
  title = {GLOUDS: Representing tree-like graphs},
  volume = {36},
  ISSN = {1570-8667},
  DOI = {10.1016/j.jda.2015.10.004},
  journal = {Journal of Discrete Algorithms},
  publisher = {Elsevier BV},
  author = {Fischer,  Johannes and Peters,  Daniel},
  year = {2016},
  month = jan,
  pages = {39–49}
}

@article{LuczakMagnerSzpankowski2019,
  title = {Asymmetry and structural information in preferential attachment graphs},
  volume = {55},
  ISSN = {1098-2418},
  DOI = {10.1002/rsa.20842},
  number = {3},
  journal = {Random Structures \& Algorithms},
  publisher = {Wiley},
  author = {Luczak,  Tomasz and Magner,  Abram and Szpankowski,  Wojciech},
  year = {2019},
  month = mar,
  pages = {696–718}
}

@article{AcanChakrabortyJoRao2020,
  title = {Succinct Encodings for Families of Interval Graphs},
  volume = {83},
  ISSN = {1432-0541},
  DOI = {10.1007/s00453-020-00710-w},
  number = {3},
  journal = {Algorithmica},
  publisher = {Springer Science and Business Media LLC},
  author = {Acan,  H\"{u}seyin and Chakraborty,  Sankardeep and Jo,  Seungbum and Rao Satti,  Srinivasa },
  year = {2020},
  month = apr,
  pages = {776–794}
}

@article{AcanChakrabortyJoNakashimaSadakaneRao2022,
  title = {Succinct navigational oracles for families of intersection graphs on a circle},
  volume = {928},
  ISSN = {0304-3975},
  DOI = {10.1016/j.tcs.2022.06.022},
  journal = {Theoretical Computer Science},
  publisher = {Elsevier BV},
  author = {Acan,  H\"{u}seyin and Chakraborty,  Sankardeep and Jo,  Seungbum and Nakashima,  Kei and Sadakane,  Kunihiko and Rao Satti,  Srinivasa},
  year = {2022},
  month = sep,
  pages = {151–166}
}

@article{ChakrabortyJoSadakaneRao2024,
  title = {Succinct data structures for bounded clique-width graphs},
  volume = {352},
  ISSN = {0166-218X},
  DOI = {10.1016/j.dam.2024.03.016},
  journal = {Discrete Applied Mathematics},
  publisher = {Elsevier BV},
  author = {Chakraborty,  Sankardeep and Jo,  Seungbum and Sadakane,  Kunihiko and Rao Satti, Srinivasa},
  year = {2024},
  month = jul,
  pages = {55–68}
}

@misc{BestaHoefler2019,
      title={Survey and Taxonomy of Lossless Graph Compression and Space-Efficient Graph Representations}, 
      author={Maciej Besta and Torsten Hoefler},
      year={2019},
      eprint={1806.01799},
      archivePrefix={arXiv},
      primaryClass={cs.DS}
}

@inproceedings{MunroNicholsonSeelbachBenknerWild2021,
	title = {Hypersuccinct Trees -- New Universal Tree Source Codes for Optimal Compressed Tree Data Structures and Range Minima},
	author = {Munro, J. Ian and Nicholson, Patrick K. and Seelbach Benkner, Louisa and Wild, Sebastian},
	booktitle = {European Symposium on Algorithms ({ESA})},
	doi = {10.4230/LIPICS.ESA.2021.70},
	publisher = {Schloss Dagstuhl - Leibniz-Zentrum f\"{u}r Informatik},
	year = {2021},
	editor = {Mutzel, P. and Pagh, R. and Herman, G},
	pages = {70:1--70:18},
	archivePrefix = {arXiv},
	eprint = {2104.13457},
	arxivId = {2104.13457},
	primaryClass = {cs.DS}
}

@book{johnson2005univariate,
  title={Univariate discrete distributions},
  author={Johnson, Norman L and Kemp, Adrienne W and Kotz, Samuel},
  year={2005},
  publisher={John Wiley \& Sons}
}

@inproceedings{Patrascu2008,
	author = {P{\u a}tra{\textcommabelow s}cu, Mihai},
	Booktitle = {Symposium on Foundations of Computer Science (FOCS)},
	Doi = {10.1109/focs.2008.83},
	Month = oct,
	Publisher = {{IEEE}},
	Title = {Succincter},
	Year = {2008}}

@article{GearyRamanRaman2006,
  title = {Succinct ordinal trees with level-ancestor queries},
  doi = {10.1145/1198513.1198516},
  year  = {2006},
  month = oct,
  publisher = {{ACM}},
  volume = {2},
  number = {4},
  pages = {510--534},
  author = {Richard F. Geary and Rajeev Raman and Venkatesh Raman},
  journal = {{ACM} Transactions on Algorithms}
}

@inproceedings{HeMunroNekrichWildWu2020,
	title = {Distance Oracles for Interval Graphs via Breadth-First Rank/Select in Succinct Trees},
	author = {Meng He and J. Ian Munro and Yakov Nekrich and Sebastian Wild and Kaiyu Wu},
	booktitle = {International Symposium on Algorithms and Computation (ISAAC)},
	pages = {25:1--25:18},
	series = {LIPIcs},
	publisher = {Schloss Dagstuhl},
	archivePrefix = {arXiv},
	eprint = {2005.07644},
	arxivId = {2005.07644},
	primaryClass = {cs.DS},
	year = {2020}}

@article{EdmondsPolytope,
  author       = {Jack Edmonds},
  title        = {Matroids and the greedy algorithm},
  journal      = {Math. Program.},
  volume       = {1},
  number       = {1},
  pages        = {127--136},
  year         = {1971},
  url          = {https://doi.org/10.1007/BF01584082},
  doi          = {10.1007/BF01584082},
  bibsource    = {dblp computer science bibliography, https://dblp.org}
}

@inproceedings{almost3colouring,
author = {Dinur, Irit and Khot, Subhash and Perkins, Will and Safra, Muli},
title = {Hardness of Finding Independent Sets in Almost 3-Colorable Graphs},
year = {2010},
isbn = {9780769542447},
publisher = {IEEE Computer Society},
address = {USA},
doi = {10.1109/FOCS.2010.84},
booktitle = {Proceedings of the 2010 IEEE 51st Annual Symposium on Foundations of Computer Science},
pages = {212–221},
numpages = {10},
series = {FOCS '10}
}

\clearpage
\appendix
\ifkoma{\addpart{Appendix}}{}

\section{Hardness of Hitting Set}\label{app:gt05}

In this section, we walk through the proof of~\cite{GuruswamiTrevisan2005} for proving the hardness of $(\alpha, r, k)$-APX-EHS for every $\alpha < e$ and for some fixed $r, k$ depending only on $\alpha$. We first give an equivalent formulation of this problem as a CSP.

\begin{definition}
    In $(\alpha, r, k)$-APX-EHS, we are given an $r$-regular $k$-uniform hypergraph $H$. A solution is a set $S$ of vertices. The value of a solution is the proportion of edges $T$ such that $|S \cap T| = 1$. We must decide whether there exists a solution with value 1, or not even a solution with value $1 / \alpha$.
\end{definition}

To reiterate,~\cite{GuruswamiTrevisan2005} proves that for every $\alpha < e$ there exists $k$ such that this problem is \NP-hard \emph{but without the $r$-regularity condition}. However, their reduction implicitly produces an $r$-regular instance, where $r$ depends only on $\alpha$. In this section we will merely show this fact, going through the reduction, without reproving other facts.

The reduction of~\cite{GuruswamiTrevisan2005} starts from the $p$-prover proof system of Feige~\cite{feige98setcover}, used to prove hardness results for set cover. Let us first define this proof system.

\begin{definition}\label{def:feige}
    Fix constants $p \geq 2$ and $u$, an even integer. An instance of size $n$ of the $p$-prover system is defined as follows. Let $R = (5n)^u$, $Q = n^{u / 2} (5n/3)^{u / 2}$, $A = 2^u$, $B = 4^u$. The instance is a hypergraph $H = (V, E)$, where each edge is equipped with a sequence of functions from $[B]$ to $[A]$, with the following properties.
    \begin{description}
    \item[Uniformity.] The hypergraph is $p$-uniform and $p$-partite, i.e.~the vertex set is given by $V = Q_1 \cup \ldots \cup Q_p$, and each edge takes exactly one vertex from each part $Q_i$.
    \item[Regularity.] The hypergraph is regular with regularity
    \[
    \frac{R}{Q}
    \wwrel=
    \frac{(5n)^u}{n^{u/2} (5n/3)^{u/2}}
    \wwrel=
    15^{u/2}.
    \]
    \item[Vertices.] Each part $Q_i$ has size $|Q_i| = Q$.
    \item[Edges.] There are $R$ edges in the hypergraph. Edge $r \in [R]$ is denoted by $(q_{r1}, \ldots, q_{rp})$, where $q_{ri} \in Q_i$.
    \item[Projections.] Each edge $(q_{r1}, \ldots, q_{rp})$ is equipped with a sequence of projections $(\pi_{r1}, \ldots, \pi_{rp})$, where each $\pi_{ri}$ is a function from $[B]$ to $[A]$ that is $(B/A) = 2^u$-to-1, i.e.~the preimage through $\pi_{ri}$ of every value has size $2^u$.
    \end{description}

    A solution to this problem is a labelling $a : V \to [B]$. Such a solution:
    \begin{itemize}
        \item \emph{Strongly satisfies} edge $r$ if $\pi_{ri}(a(q_{ri})) = \pi_{rj}(a(q_{rj}))$ for \emph{all} $i, j \in [p]$.
        
        \item \emph{Weakly satisfies} edge $r$ if $\pi_{ri}(a(q_{ri})) = \pi_{rj}(a(q_{rj}))$ for \emph{at least one} $i, j \in [p]$ with $i \neq j$.
    \end{itemize}
\end{definition}

Feige then proves the following key \NP-hardness result.
\begin{theorem}
    For some universal constant $0 < c < 1$, for every $p \geq 2$ and for $u$ large enough, the following problem is \NP-hard. Given an instance of the $p$-prover system with parameter $u$, it is \NP-hard to distinguish whether:
    \begin{description}
        \item[Yes instance.] There exists a labelling that strongly satisfies all edges.
        \item[No instance.] No labelling weakly satisfies even a $p^2 c^u$  fraction of the edges.
    \end{description}
\end{theorem}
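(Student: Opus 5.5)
The plan is to follow Feige's construction. We start from a gap version of 3SAT, build a two-prover game, amplify its soundness by parallel repetition, and then spread the repeated game over $p$ provers using a binary code. The parameters $R,Q,A,B$ in \wref{def:feige} are exactly the counts this construction produces, so most of the structural properties (uniformity, regularity, sizes, projections) reduce to bookkeeping. The real content is the soundness analysis for \emph{weak} satisfaction.

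\textbf{Construction.} By the PCP theorem there is $\epsilon>0$ such that, for 3SAT-5 formulas with $n$ variables and $5n/3$ clauses, it is \NP-hard to distinguish satisfiable formulas from those where every assignment violates an $\epsilon$ fraction of clauses.

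\begin{itemize}
\item \emph{Base game.} The verifier picks a random clause and a random one of its three variables. The clause prover returns an assignment to the clause. The variable prover returns a bit. There are $(5n/3)\cdot 3 = 5n$ such pairs, so the $u$-fold product has $R=(5n)^u$ "random strings".
\item \emph{Code.} Fix $p$ binary codewords of length $u$, each of weight $u/2$, with pairwise Hamming distance at least $u/3$. Such codes exist for large $u$ by a greedy or random argument.
\item \emph{Questions to prover $i$.} In each coordinate where codeword $i$ has a $1$, prover $i$ receives the clause. Where it has a $0$, it receives the variable. So each part $Q_i$ consists of $u/2$ clauses and $u/2$ variables, giving $|Q_i| = (5n/3)^{u/2}n^{u/2}=Q$.
\item \emph{Regularity.} Each question is extended by exactly $3^{u/2}\cdot 5^{u/2}$ random strings: $3$ choices of variable for each clause coordinate and $5$ occurring clauses for each variable coordinate. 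This gives $R/Q=15^{u/2}$.
\item \emph{Answers and projections.} An answer assigns $2$ bits per coordinate, so $B=4^u$. This needs a clause answer to be encoded by two free bits, e.g.\ the values of the two non-chosen variables once the chosen variable's bit is folded in. A variable answer is padded to two bits, and clause answers are made to satisfy the clause. The projection $\pi_{ri}$ maps each coordinate to the value of that coordinate's chosen variable. That value is one bit, so $A=2^u$, and each $\pi_{ri}$ is $2^u$-to-$1$.
\end{itemize}

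I would fix the exact encoding so that the per-coordinate map is uniformly $2$-to-$1$. I expect this to be a minor technical nuisance rather than a conceptual one.

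\textbf{Completeness.} Given a satisfying assignment, every prover answers according to it. Then all projections equal the true values of the chosen variables, so every edge is strongly satisfied.

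\textbf{Soundness.} Suppose a labelling weakly satisfies more than a $p^2c^u$ fraction of edges. There are fewer than $p^2$ pairs $(i,j)$, so averaging gives a fixed pair $i\neq j$ with $\pi_{ri}(a(q_{ri}))=\pi_{rj}(a(q_{rj}))$ on more than a $c^u$ fraction of edges. On the at least $u/3$ coordinates where the two codewords differ, one of these provers sees the clause and the other sees the variable. On the remaining coordinates both see the same kind of question.

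We convert the pair into a strategy for the $(u/3)$-fold repeated two-prover base game as follows:
\begin{itemize}
\item Split the differing coordinates by which prover holds the clause, and orient them consistently. The two kinds of differing coordinates are handled symmetrically.
\item Fix the random choices on all non-differing coordinates, and on the wrongly oriented part, by averaging, so that the conditional agreement probability stays above $c^u$.
\item The resulting strategy wins the repeated game with probability above $c^u$ on at least $u/6$ coordinates. Agreement on all coordinates implies winning on the designated ones.
\end{itemize}
Raz's parallel repetition theorem bounds the value of the $t$-fold game by $c_0^{t}$ for a constant $c_0<1$ depending on $\epsilon$. Choosing $c=c_0^{1/6}$ (and $u$ large) gives a contradiction.

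I expect this conditioning and averaging step to be the main obstacle. It must produce a genuine two-prover product-game strategy whose success probability is not diluted, and it must handle the clause answers' satisfaction constraint so that agreement really means winning.
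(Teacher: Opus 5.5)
The paper gives no proof of this theorem; it cites Feige. Your architecture is Feige's: the 3SAT-5 game, Raz's parallel repetition, spreading the $u$ coordinates over $p$ provers with weight-$u/2$ codewords at distance at least $u/3$, and a union bound over pairs of provers.

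The step you flag as the main obstacle is actually harmless. The random string is a product over coordinates. So fixing every coordinate outside one orientation class of the differing set to its best value leaves each prover's answer a function of its own questions on that class. That is a genuine strategy for the repeated game there. Since both codewords have weight $u/2$, the two orientation classes have equal size $d/2\ge u/6$.

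The genuine gap is the step you call a minor nuisance: choosing labels so that $B=4^u$, every $\pi_{ri}$ is $2^u$-to-$1$, and agreement of projections implies winning the base game. Your $2+2$ encoding does not exist, for two reasons.
\begin{itemize}
\item A clause prover's question is only the clause (that is why $|Q_i|=(5n/3)^{u/2}n^{u/2}$). So its answer cannot refer to ``the two non-chosen variables''.
\item Four labels cannot represent the seven satisfying assignments of a 3SAT clause. Honest answers from a satisfying assignment would then have no label, and completeness fails.
\end{itemize}
In Feige's system the count behind $B=4^u$ is $3$ bits per clause coordinate and $1$ bit per variable coordinate, giving $3u/2+u/2=2u$. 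This is what the weight-$u/2$ codewords are for.

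Even with this split, the problem persists under a pure projection-agreement predicate. Three requirements collide:
\begin{itemize}
\item Soundness needs every label to decode to an assignment satisfying its clause. Otherwise all provers answer by one fixed, possibly unsatisfying, global assignment and strongly satisfy every edge.
\item Completeness needs all seven satisfying patterns to be available.
\item A $4$-to-$1$ projection onto each of the three variables needs each literal to be true under exactly $4$ of the $8$ labels.
\end{itemize}
The seven satisfying patterns already make each literal true $4$ times, so the eighth label must be the all-false pattern. Hence with 3SAT-5 as the base you cannot just ``fix the encoding''. You have two options:
\begin{itemize}
\item Enforce clause satisfaction outside the projection check, as Feige's verifier does. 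The label sets and fibres are then no longer uniform.
\item Start from a CSP whose satisfying sets can be padded to balanced label sets. For instance, in NAE-3SAT the six satisfying patterns plus a doubled complementary pair such as $100$ and $011$ give eight labels with uniform $4$-to-$1$ projections. You would then also need a regular, bounded-occurrence gap version of that CSP with perfect completeness.
\end{itemize}
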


To make this notion more intuitive, we explain why it is called a \emph{proof system}. Imagine that each part $i \in [p]$ corresponds to one of $p$ provers, which are trying to prove some fact to a checker. Each node $q_i \in Q_i$ is a query that the checker could ask prover $i$. A solution is a strategy for answering these queries. The checker selects an edge of the hypergraph at random, and asks the corresponding queries to each prover. The checker then passes these answers through the projection functions for the given edge. They are then strongly convinced of the correctness of the proof if the answer is unanimous; and weakly convinced if at least two of the answers are equal. 

Feige's hardness result essentially says the following: for any instance of e.g.~3-SAT, it is possible, in polynomial time, to create a $p$-prover system where (i) if the 3-SAT instance is satisfiable then there exists a proof that strongly convinces the checker with probability~1, and (ii) if the 3-SAT instance is unsatisfiable, then no proof even weakly convinces the checker with probability $p^2 c^u$.

We now give the reduction of~\cite{GuruswamiTrevisan2005}, where we trace the $r$-regularity, as well.

\begin{theorem}
    Fix $\epsilon > 0$, and let $1 / \alpha = 1 / e + \epsilon$. There exists a positive integer $p$ such that for all large enough $u$, the following holds. Let $k = 4^u$ and $r = 15^u p^{2^u - 1}$. Given an instance of the $p$-prover proof system with parameter $u$, we can create in polynomial time an instance of $(\alpha, r, k)$-APX-EHS that is a Yes resp.\ No instance whenever the $p$-prover proof system we were given was a Yes resp.\ No instance. Hence, $(\alpha, r, k)$-APX-EHS is \NP-hard.
\end{theorem}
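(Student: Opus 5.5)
The plan is to re-run the Guruswami--Trevisan construction explicitly and check two things. First, every variable lies in the same number of constraints. Second, every constraint has exactly $k=B=4^u$ variables. Completeness is easy, and soundness is taken from~\cite{GuruswamiTrevisan2005} unchanged.

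Given a $p$-prover instance $H=(Q_1\cup\cdots\cup Q_p,[R])$ as in \wref{def:feige}, the EHS instance is built as follows.
\begin{itemize}
\item The left side (the ``variables'') is $A_{\mathrm{EHS}}=\{(q,b): q\in V,\ b\in[B]\}$.
\item The right side (the ``constraints'') has one vertex $C_{r,f}$ for each edge $r\in[R]$ and each function $f:[A]\to[p]$ that assigns a prover to each projected answer.
\item Constraint $C_{r,f}$ contains variable $(q,b)$ iff there is an $i$ with $q=q_{ri}$ and $f(\pi_{ri}(b))=i$.
\end{itemize}
So for every $a\in[A]$, the constraint contains the $B/A=2^u$ variables $(q_{r f(a)},b)$ with $b\in\pi_{r f(a)}^{-1}(a)$. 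These are pairwise distinct, so $|C_{r,f}|=A\cdot 2^u=B=4^u$. This gives $k$-uniformity, using only that each projection is exactly $2^u$-to-$1$.

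The key step is the left degree. Fix $(q,b)$ with $q\in Q_i$. A constraint $C_{r,f}$ contains $(q,b)$ iff $q_{ri}=q$ and $f(\pi_{ri}(b))=i$. Because the parts are disjoint and $H$ is $p$-partite, $i$ is determined by $q$. There are exactly $R/Q=15^{u/2}$ choices of $r$, by the regularity property. For each such $r$, the value of $f$ at the single point $\pi_{ri}(b)$ is forced, while $f$ is free on the other $A-1=2^u-1$ points, giving $p^{2^u-1}$ choices. Hence every variable has the same degree, a product of the hypergraph regularity and $p^{2^u-1}$, which depends only on $u$ and $p$ and thus only on $\alpha$.

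This is the step I expect to be the main obstacle: matching the stated $r=15^u p^{2^u-1}$ exactly. My first attempt will be to check whether GT's construction uses pairs of edges, or duplicates each constraint $15^{u/2}$ times, which gives $15^u$. If not, I would note that uniform duplication of constraints preserves both the value of every solution and biregularity. This is the same trick as in the corollary above, which replaces $B$ by $[d]\times B$. Either way the instance is $(r,k)$-biregular with $r$ depending only on $\alpha$. If GT instead sample $f$ from a non-uniform distribution, I would realise that distribution with integer multiplicities, applied uniformly over $r$ so that regularity is preserved.

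Completeness: given a labelling $a$ that strongly satisfies every edge, let $S=\{(q,a(q))\}$. For edge $r$, all $\pi_{ri}(a(q_{ri}))$ equal a common value $a^*$. In $C_{r,f}$ the only prover whose chosen variable lies in $S$ is $i=f(a^*)$: the variable $(q_{ri},a(q_{ri}))$ is included iff $f(\pi_{ri}(a(q_{ri})))=f(a^*)=i$, and $q_{ri}$ has no other variable in $S$. So each constraint is hit exactly once and $\xi(S)=1$. Soundness (value $<1/e+\epsilon$ when no labelling weakly satisfies a $p^2c^u$ fraction) is GT's decoding and probabilistic analysis. It is insensitive to the regularity bookkeeping, since uniform duplication does not change the fraction of hit constraints, so I would cite it verbatim and choose $p$ and then $u$ large as they do.
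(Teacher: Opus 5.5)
Your construction and counting coincide with the paper's: the same vertices $(i,q,b)$ and constraints indexed by $r\in[R]$ and $\mathbf{x}\in[p]^A$ (used uniformly, with no pairing), the same uniformity count $A\cdot 2^u=4^u$, the same degree count ($15^{u/2}$ choices of $r$ times $p^{A-1}$ choices of $\mathbf{x}$), and soundness cited from Guruswami--Trevisan; your direct completeness check is also correct. Your degree $15^{u/2}p^{2^u-1}$ is the right one. The paper's proof also finds exactly $15^{u/2}$ choices of $r$ but then states the regularity as $15^u p^{A-1}$, which is an arithmetic slip. So your fallback of duplicating every constraint $15^{u/2}$ times is what actually gives the stated $r=15^u p^{2^u-1}$. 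That factor is an integer since $u$ is even, and the duplication preserves $k$, biregularity and every solution's value, exactly as the $[d]\times B$ trick in the paper's corollary. Downstream, only the fact that $r$ depends on $\alpha$ alone is used, so the slip does no harm there.
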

\begin{proof}
    Suppose that the input $p$-prover system with parameter $u$ is denoted by the same symbols as in \wref{def:feige}. We must now output an $r$-regular $k$-uniform hypergraph with the required properties. The hypergraph we output is constructed as follows.
    \begin{description}
        \item[Vertices.] For every $i \in [p]$, $q \in Q_i$ and $b \in [B]$, we introduce $(i, q, b)$ as a vertex.
        \item[Edges.] For every $r \in [R]$ and $\mathbf{x} = (x_1, \ldots, x_A) \in [p]^A$, we introduce an edge
        \[
        S_{r\mathbf{x}} \wrel= \{ (i, q_{ri}, b) \mid x_{\pi_{ri}(b)} = i \}.
        \]
    \end{description}
    The soundness and completeness of this reduction is proved by~\cite{GuruswamiTrevisan2005}~-- hence if the original instance was a yes/no-instance of the $p$-prover game with parameter $u$, this hypergraph is a yes/no-instance of $(\alpha, r, k)$-APX-EHS. What remains is to show $k$-uniformity and $r$-regularity; the first is also shown by~\cite{GuruswamiTrevisan2005}, but we give the proof again for completeness.
    \begin{description}
        \item[Uniformity.] We must count the size of each set $S_{r\mathbf{x}}$. Let us fix the \emph{output} of $\pi_{ri}(b)$ in the definition of this set. Then:
        \[
        S_{r\mathbf{x}} \wrel= \bigcup_{a \in [A]} \{ (i, q_{ri}, b) \mid i = x_a, \pi_{ri}(b) = a \}.
        \]
        Now note that for every fixed $a$, we may select $b$ in $B / A = 2^u$ ways in order to make $\pi_{ri}(b) = a$, since $\pi_{ri}$ is a $2^u$-to-1 function. Then, $i$ is fixed as a function of $a$ and $\mathbf{x}$, and likewise $q_{ri}$ is fixed as a function of $r$ and $i$. So, we see that for each choice of $a \in [A]$, $S_{r\mathbf{x}}$ contains precisely $2^u$ different values. It follows that $|S_{r\mathbf{x}}| = |A| 2^u = 4^u = k$.
        \item[Regularity.] We must count how many times any fixed vertex $(i, q, b)$, where $q \in Q_i$ and $b \in [B]$,  belongs to an edge $S_{r \mathbf{x}}$. For this to be the case, we first need that $q = q_{ri}$. This holds for exactly $15^{u / 2}$ choices of $r$, since the original $p$-prover system instance is $15^{u / 2}$-regular. Now, fix one such choice of $r$ and assume $q = q_{ri}$. Let us count for how many vectors $\mathbf{x} = (x_1, \ldots, x_A) \in [p]^A$ we have $(i, q_{ri}, b) \in S_{r\mathbf{x}}$. The only condition for this to be the case is that $x_{\pi_{ri}(b)} = i$. Since we have fixed $(i, q_{ri}, b)$ and $r$, the index $\pi_{ri}(b)$ is fixed. Hence there are exactly $p^{A - 1}$ choices of $\mathbf{x}$ that lead to $(i, q_{ri}, b) \in S_{r \mathbf{x}}$. Overall, we see that the instance is regular with regularity
        \[
            15^u p^{A - 1} \wrel= 15^u p^{2^u - 1} \wrel= r.
        \]
    \end{description}
    This concludes the proof.
\end{proof}

\section{A Tree is Never Worse Than a Forest}
\label{app:tree-not-forest}

One may wonder whether a \emph{strictly} lower indegree entropy can be achieved by removing a spanning non-tree forest instead of a tree. This section answers this question with a \emph{No}; there is always a spanning tree whose elimination from the graph produces an optimal indegree entropy.
Indeed, we show that removing an additional edge never increases $H(G)$, and strictly decreases $H(G)$ (unless $H(G)=0$ already).

First of all, let us prove a simple arithmetic claim.

\begin{lemma}
    \label{lem:logs-slope}
    For $a,b \in \mathbb{R}_{\geq 1}$ and $a \geq b \geq 1$, we have 
    \[(a-1)\lg(a-1)-(b-1)\lg(b-1) \wwrel\leq a\lg a-b\lg b,\]
    with equality only for $a=b$.
\end{lemma}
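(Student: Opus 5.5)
The plan is to write the inequality as a statement about a single function. Define $f(x) = x\lg x$ on $[0,\infty)$, using the paper's convention $0\lg 0 = 0$. The claim becomes
\[
	f(a) - f(a-1) \wwrel\geq f(b) - f(b-1),
\]
with equality only for $a=b$. In words, the forward difference $g(x) = f(x) - f(x-1)$ should be strictly increasing on $[1,\infty)$. I would prove this monotonicity of $g$ and then apply it to the given pair $a\ge b\ge 1$.

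The key steps, in order:

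First, I check continuity. $f$ is continuous on $[0,\infty)$, since $x\lg x\to 0$ as $x\to 0^+$. So $g$ is continuous on $[1,\infty)$.

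Second, I differentiate on $(1,\infty)$. Since $f'(x) = \lg x + 1/\ln 2$,
\[
	g'(x) \wwrel= f'(x) - f'(x-1) \wwrel= \lg x - \lg(x-1) \wwrel> 0 .
\]
Equivalently, $f$ is strictly convex because $f''(x) = 1/(x\ln 2) > 0$, and the forward difference of a strictly convex function is strictly increasing. Continuity at the endpoint $x=1$ together with $g'>0$ on $(1,\infty)$ gives that $g$ is strictly increasing on $[1,\infty)$, by the mean value theorem.

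Third, I conclude. For $a>b\ge 1$ we get $g(a)>g(b)$, which rearranges to the strict version of the stated inequality. For $a=b$ the two sides coincide trivially.

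The only point that needs care is the endpoint $b=1$, where $(b-1)\lg(b-1)$ is defined by the $0\lg 0 = 0$ convention. The derivative $g'(x)$ blows up as $x\to 1^+$, so I cannot just differentiate there. I handle this case by continuity of $g$ at $1$ rather than by the derivative. Alternatively, I can compute directly: for $b=1$ the claim reads $(a-1)\lg(a-1)\le a\lg a$, and this holds strictly for $a>1$ because $f$ is increasing on $[1,\infty)$ and $(a-1)\lg(a-1)<0\le a\lg a$ when $a<2$.
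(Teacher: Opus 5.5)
Your proof is correct and is essentially the paper's argument. The paper shows that $x\mapsto (x-1)\ln(x-1)-x\ln x$, which is $-g(x)\ln 2$ in your notation, is strictly decreasing because its derivative is $\ln(1-1/x)<0$ for $x>1$. Your continuity and mean-value-theorem step at the endpoint $b=1$ is a useful addition: the paper only argues on $x>1$, yet later applies the lemma with $b=d_w$, which can equal $1$.
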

\begin{proof}
    After dividing by $\ln 2$ and rearranging, it suffices to show that 
    \[f(x) \wwrel= (x-1)\ln(x-1)-x\ln x
    \] 
    is a strictly decreasing function for $x\in\R_{>1}$. The derivative $f'(x)$ is computed as:
    \[
    		f'(x)
    	\wwrel=
    		\frac{\mathrm{d}}{\mathrm{d}x}(x-1)\ln(x-1)-x\lg x
    	\wwrel=
    		\ln(x-1)-\ln(x)
    	\wwrel=
    		\ln(1-1/x),
    \]
    which is $f'(x)<0$ for any $x>1$.
\end{proof}

Now, we are ready to show the main claim of this section.
Let $\mathcal{F}(G)$ and $\mathcal{T}(G) \subseteq \mathcal{F}(G)$ denote the set of spanning forests and spanning trees of $G$, respectively. 

\begin{lemma}\label{lem:spanningTreeOptimal}
     Let $G=(V,E)$ be a weakly connected, directed graph. Then there exists $T \in \mathcal{T}(G)$ such that \[H(G-T)\wwrel=\min_{F \in \mathcal{F}(G)}H(G-F).\]
\end{lemma}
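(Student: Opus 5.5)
Take a spanning forest $F^*$ attaining $\min_{F\in\mathcal F(G)} H(G-F)$. We will extend it to a spanning tree without increasing the entropy. Since $G$ is weakly connected, the standard matroid augmentation property of the graphical matroid (ignoring orientations) gives a spanning tree $T \supseteq F^*$: we repeatedly add an edge of $G$ joining two different components of the current forest. The whole proof therefore reduces to one claim: for any forest $F$ and any edge $e=(u,v)\in E\setminus F$ such that $F+e$ is still a forest, $H(G-F-e)\le H(G-F)$. Applying this claim along the augmentation sequence gives $H(G-T)\le H(G-F^*)$, and $T\in\mathcal F(G)$ gives the reverse inequality. Hence $T$ attains the minimum.

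To prove the claim, write $H(d_1,\dots,d_n)=M\lg M-\sum_i d_i\lg d_i$ with $M=\sum_i d_i$, as in~\eqref{eq:optent_eq}. Let $d_i$ be the indegrees in $G-F$ and $M$ its number of edges. Removing $e=(u,v)$ decreases only $d_v$ and $M$, each by one. Note that $d_v\ge 1$ because $e\in G-F$, and $M\ge d_v$. The change in entropy is
\[
\bigl((M-1)\lg(M-1)-M\lg M\bigr)\bin-\bigl((d_v-1)\lg(d_v-1)-d_v\lg d_v\bigr),
\]
and we need this to be $\le 0$. This is exactly \wref{lem:logs-slope} with $a=M\ge b=d_v\ge1$, using the convention $0\lg 0=0$ for the case $b=1$ or $a=1$. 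Rearranged, the lemma says $(a-1)\lg(a-1)-a\lg a\le (b-1)\lg(b-1)-b\lg b$, which is precisely the required sign.

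The only delicate step is the boundary case $d_v=1$ or $M=1$, where $\lg 0$ appears. \wref{lem:logs-slope} is proved via a derivative on $x>1$, so we handle the endpoint $x=1$ by continuity. The function $f(x)=(x-1)\ln(x-1)-x\ln x$ extends continuously to $x=1$ with $f(1)=0$, and it is strictly decreasing on $[1,\infty)$, so the inequality still holds there. The remaining details of the augmentation step are routine.
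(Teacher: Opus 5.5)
Your proposal is correct and is essentially the paper's argument: both show that deleting one more edge $e=(u,v)$ never increases the indegree entropy, by applying \wref{lem:logs-slope} with $a$ the current number of remaining edges and $b=d_v$, and then extend an optimal forest to a spanning tree (the paper phrases this last step as ``spanning trees are precisely the maximal spanning forests''). Your version is slightly more careful than the paper's in two places. You apply the one-edge step explicitly to $G-F$ rather than to $G$. You also justify the boundary case $b=1$ (where $0\lg 0$ appears) by continuity of $f$ at $x=1$, which the paper's derivative argument on $x>1$ leaves implicit.
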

\begin{proof}
    We actually show the stronger claim hinted at above: 
    apart from the degenerate case of a star graph with indegree entropy $0$, 
    removing any edge strictly improves the resulting indegree entropy.
    Formally, recall that $H(G) = -\sum_{v \in V}d_v\lg \bigl(\frac{d_v}{m}\bigr)$. 
    We show that $H(G-e) \leq H(G)$ for any $e \in E$, with equality only for $H(G) = 0$.
    This immediately proves the claim since spanning trees are precisely the maximal spanning forests of $G$.
	
    We rewrite the indegree entropy for $G$ as
    \begin{align*}
        	H(G)
        &\wwrel= 
        	-\sum_{v \in V}d_v\lg \left(\frac{d_v}{m}\right) 
    	\wwrel= 
    		-\sum_{v \in V}d_v\lg d_v + m\lg m.
    \end{align*}
    Removing $e=(u,w)$ from $G$ makes exactly one indegree smaller by $1$: 
   	$d_{w}$ decreases upon deleting $e$ from $G$. 
   	We have
    \begin{align*}
        	H(G-e) 
        &\wwrel=
        	-\;\sum_{\mathclap{v \in V \setminus \{w\}}}\; d_v\lg d_v 
        	\bin- (d_{w}-1) \lg (d_{w}-1)
        	\bin+(m-1)\lg (m-1),
    \intertext{where $d_{w}$ is the indegree of $w$ in $G$ (before deleting $e$). Comparing with}
        	H(G)
        &\wwrel= 
        	-\;\sum_{\mathclap{v \in V \setminus \{w\}}} \; d_v\lg d_v  
        	\bin- d_{w} \lg d_{w}
        	\bin+m\lg m
    \end{align*}
    yields that $H(G-e) < H(G)$ if and only if 
    \begin{equation}
    \label{eq:HGme-vs-HG}
    	(m-1)\lg (m-1) - (d_{w}-1) \lg (d_{w}-1)
    	\wwrel<
    	m\lg m - d_{w} \lg d_{w}.
    \end{equation}
    Since $1 \leq d_{w} \leq m$, we can apply \wref{lem:logs-slope} with $a=m$ and $b=d_w$
    to obtain \wref{eq:HGme-vs-HG}, with equality only for $m=d_w$,
    \ie{} when in $G$ all edges point to $w$.
\end{proof}

\section{Copy Model}
\label{app:copy-model}

In this section, we describe the copy model for generating random graphs studied in~\cite{TurowskiMagnerSzpankowski2020}. 
It is loosely motivated by protein interaction networks, in which mutations of the gene coding for a known protein creates a new, but functionally equivalent protein, which then shares the interaction relations of its ancestor.
In the random copy model of~\cite{TurowskiMagnerSzpankowski2020}, we start with a seed graph $G_0$ and repeatedly copy a vertex chosen uniformly at random among the current graph, thus creating a twin of the copied vertex.
Turowski et al.\ compute the entropy of the resulting graph distribution, both for the labelled and unlabelled case, and describe corresponding compression methods, but without query support.
We show that our TREX data structure with twin removal 
yields on asymptotically \emph{instance-optimal} representation of unlabelled graphs from the copy model 
that includes full support of efficient navigational operations.

\subsection{The Random Copy Model}

We recall the definitions and notation from~\cite{TurowskiMagnerSzpankowski2020}. 
(We match this to our notation from \wref{sec:twin-removal} below.)

The random copy model is parameterized by an initial graph $G_0$ on $n_0$ vertices. For each step $1 \leq i \leq n$, the graph $G_i$ is constructed from $G_{i-1}$ as follows:

\begin{enumerate}
\item Select a vertex $v \in V(G_{i-1})$ uniformly at random.
\item Add a new vertex $v_i$ to the graph.
\item Connect $v_i$ to all neighbors of $v$.
\end{enumerate}

Note that $v$ and $v_i$ themselves are not connected.

After $n$ steps, the graph $G_n$ has $n_0 + n$ vertices. The structure of $G_n$ depends strongly on the choice of $G_0$. 
The original vertices of $G_0$ are denoted $U = V(G_0) = \{ u_1, \dots, u_{n_0}\}$,
and the vertices added in the duplication process are denoted $V = \{ v_1, \dots, v_n \}$.
Moreover, we denote by $N_n(v)$ the neighbourhood of vertex $v$ in $G_n$, that is, all vertices that are
adjacent to $v$ in $G_n$.

Note that Turowski et al.\ considered undirected graphs, 
but all the above definitions can be easily generalised to directed graphs.

\subsection{Properties}
\label{sec:properties}

We now define the concepts of parent and ancestor of a vertex. 
A vertex $w$ is called the \emph{parent} of $v$
if $v$ was copied from $w$ at some step $1 \leq i \leq n$.
By repeatedly moving from $v$ to its parent, we eventually reach a vertex $u\in V(G_0)$; this vertex $u$ is called the \emph{ancestor} of $v$. 
Conversely, we define the set of \emph{descendants} of a vertex $u_i \in V(G_0)$ in $G_n$
as the set of all vertices $v\in V(G_n)$ whose ancestor is $u_i$.
For $i=1,\ldots,n_0$, we denote the descendants of $u_i$ by $\mathcal C_{i,n}$.

The key observation about the copy model is the invariant that all descendants of a vertex $u_i$ are \emph{twins}.
This is initially true when a new twin is added to the graph, and later additions of twins cannot separate a vertex $v$ and its ancestor $u$ since cloning any neighbour of $v$ resp.\ $u$ would also include $u$ resp.\ $v$; (see also Lemma 1 of Turowski et al.~\cite{TurowskiMagnerSzpankowski2020}).

Let~$C_{i,n} := |\mathcal C_{i,n}|$, that is, the number of vertices in $G_n$ that are the twins of $u_i$ (including $u_i$ itself).
It is not hard to see that the sequence of variables $(C_{i,n})_{i=1}^{n_0}$ can be described as an urn model (indeed the classical Pólya urn~\cite{Mahmoud2008,johnson2005univariate}): The urn contains balls of $n_0$ different ``colors'' $u_1,\ldots,u_{n_0}$. At time $n=0$, we have exactly one ball per color for a total of $n_0$ balls. 
In each step, we pick a ball uniformly at random from the urn and return both the chosen ball, as well as one \emph{additional} ball of the same color to the urn.
The joint distribution of ``new'' color multiplicities $(C_{i,n}-1)_{i=1}^{n_0}$ is then Dirichlet-multinomial $\mathrm{DirMult}(n; \alpha_1, \ldots, \alpha_{n_0})$ distributed with parameters $\alpha_i=1$ for $1 \le i \le n_0$ (the initial multiplicities of colors in the urn).
$\mathrm{DirMult}(n;1,\ldots,1)$ is a \emph{uniform} distribution over all vectors that sum to $n$:
\begin{equation}
\label{eq:prob-copy-model}
\Prob[\big]{(C_{i,n})_{i=1}^{n_0} = (k_i+1)_{i=1}^{n_0}} \wwrel=
\begin{dcases}
\dfrac1{\binom{n+n_0-1}{n}}, & \text{if } \sum_{i=1}^{n_0} k_i = n, \\
0, & \text{otherwise}.
\end{dcases}
\quad \forall {1 \leq i \leq n_0}: k_i \in \mathbb{N}_{\ge1}
\end{equation}

\subsection{Entropy}

Under the assumption that the initial graph $G_0$ is fixed/known and \emph{asymmetric}, \ie\ its automorphism group is of size 1,
Turowski et~al.~\cite{TurowskiMagnerSzpankowski2020} determined the asymptotic entropy for both unlabeled and labelled random graphs generated by this model.
They proved~\cite[Theorem 2]{TurowskiMagnerSzpankowski2020} that the entropy of \emph{labeled} graphs is 
\begin{equation}
	(H_{n_0}-1)\lg(e) \cdot n \bin+ \frac{n_0-1}{2}\cdot \lg n \bin\pm O(n_0 \log n_0),
	\qquad \text{with } H_k = 1+\frac12+ \frac13+ \cdots +\frac1k
\end{equation}
while the entropy of the unlabelled graph (the entropy of the isomorphism class of a random graph from
the model, \aka the ``structural entropy'') is significantly smaller~\cite[Theorem~1]{TurowskiMagnerSzpankowski2020}: 
\[
	(n_0-1) \lg n \bin\pm O(n_0 \log n_0).
\]
Note that for constant $n\to\infty$, the labelled graph entropy grows exponentially faster than the unlabelled one ($\Theta(n)$ vs. $\Theta(\log n)$);
the copy model is hence an extreme case where the vast majority of information in a (labelled) graphs 
lies in the vertex labels, not the underlying graph structure.

In~\cite[Theorems 4, 5]{TurowskiMagnerSzpankowski2020}, they also provide optimal algorithms for compressing both unlabeled and labelled graphs generated by the full copy model. It is based on arithmetic coding and is optimal up to two bits. This is provided, we have already compressed~$G_0$. This matches the entropies up to a constant additive term.
Neither representation is known to support efficient queries.

\subsection{TREX with twin removal is instance-optimal}

Beyond the entropy of the distribution, from \wref{eq:prob-copy-model} it is indeed easy to compute the instance-specific information content $\lg(1/\Prob{G \given G_0})$ of a graph $G$ drawn according to the copy model with seed graph $G_0$ (again assuming $G_0$ is asymmetric).
Given $G_0$ and $n$, the probability of a graph $G$ to arise in the copy model is 
$\Prob{G\given G_0, n} = 1\big/\binom{n+n_0-1}{n}$, provided the twin reduction of $G$ is $G_0$ (otherwise, $\Prob{G\given G_0, n} = 0$).
In the former case, we have
\begin{align}
		\lg\bigl(1/\Prob{G\given G_0,n}\bigr)
	&\wwrel=
		\lg \binom{n+n_0-1}{n_0-1}
		\notag
\\	&\wwrel=
		(n_0-1)\lg\left(\frac{n+n_0-1}{n_0-1}\right) \wbin+ O(n_0)
		\notag
\\	&\wwrel=
		(n_0-1)\lg(n+n_0-1) \wbin\pm O(n_0 \log n_0)
	\label{eq:Prob-G-given-G0}
\end{align}

Note that in our notation from \wref{sec:twin-removal}, the overall graph size is $N = n+n_0$ and the \emph{seed graph} $G_0$ in the copy model is the \emph{twin reduction} of $G$ in twin removal. 
In \wref{sec:twin-removal}, we denoted the size of the twin reduction by $n=|V(G_0)|$ instead of $n_0$ here. 
The \emph{representative} of the twin class in \wref{sec:twin-removal} is called the \emph{ancestor} of a vertex in $G$ here.

With this, we find that the claimed space in \wref{thm:space-time-analysis-G}
\emph{almost} matches $\lg\bigl(1/\Prob{G\given G_0,n}\bigr)$~-- but not quite!
However, a closer look at bitvector $B$ reveals that its first bit is \emph{always} $1$, and we thus need not encode it at all.  
The bits we actually have to encode are only $N-1$ bits, of which exactly $n-1$ are $1$. 
The space for $B$ thus becomes $\lg\binom {N-1}{n-1} \le (n-1)\lg(N-1) + o(N)$, which is asymptotically for large $N$ the instance-optimal space usage to represent the unlabeled graph $G$ given $G_0$ (which we encode using TREX).

\end{document}